\newcommand{\fullVersion}{}
\newcommand{\arXivVersion}{}
\newcommand{\xVal}{\ensuremath{0.5\cdot \parentheses{Z-1 + \ln(k/\delta) + \sqrt{(Z-1 + \ln(k/\delta))^2-(Z-1)^2/4}}}}
\newcommand{\NVal}{\ensuremath{k\cdot \parentheses{Z-1 + \ln(k/\delta) + \sqrt{(Z-1 + \ln(k/\delta))^2-(Z-1)^2/4}}}}
\newcommand{\paraspace}{\vspace{0.00in}}
\newcommand{\cmark}{\ding{51}}%
\newcommand{\xmark}{\ding{55}}%
\newcommand\duparrow{\mathbin{\uparrow\uparrow}}
\newcommand{\algSize}{\normalfont}
\renewcommand{\epsilon}{\varepsilon}
\newcommand{\set}[1]{\left\{#1\right\}}
\newcommand{\ceil}[1]{\left\lceil#1\right\rceil}
\newtheorem{theorem}{Theorem}
\newtheorem{fact}[theorem]{Fact}
\newcommand{\floor}[1]{\left\lfloor#1\right\rfloor}
\newcommand{\brackets}[1]{\left[#1\right]}
\newcommand{\parentheses}[1]{\left(#1\right)}
\algnewcommand\algorithmicforeach{\textbf{for each}}
\newcommand{\minlan}[1]{}
\newcommand{\mm}[1]{}
\newcommand{\Michael}[1]{}
\newcommand{\ran}[1]{}
\newcommand{\gianni}[1]{}
\newcommand{\minlan}[1]{\textcolor{blue}{(Minlan: #1)}}
\newcommand{\mm}[1]{\textcolor{red}{(MM: #1)}}
\newcommand{\Michael}[1]{\textcolor{red}{(MM: #1)}}
\newcommand{\ran}[1]{\textcolor{purple}{(Ran: #1)}}
\newcommand{\gianni}[1]{\textcolor{cyan}{(Gianni: #1)}}
\newcommand{\sys}{\ensuremath{\mathit{PINT}}\xspace}
\newcommand{\parab}[1]{\paraspace\noindent{\bf #1} }
\renewcommand\footnotetextcopyrightpermission[1]{} 
\newcommand{\changed}[1]{\textcolor{blue}{#1}}
\newcommand{\changed}[1]{{#1}}
\renewcommand{\sout}[1]{}
\renewcommand{\textcolor}[2]{}
\begin{document}
\title{\sys: Probabilistic In-band Network Telemetry}


\normalem
\author{Ran Ben Basat}
\affiliation{%
  \institution{Harvard University}
}
\email{ran@seas.harvard.edu}
\author{Sivaramakrishnan Ramanathan}
\affiliation{%
  \institution{University of Southern California}
}
\email{satyaman@usc.edu}
\author{Yuliang Li}
\affiliation{%
  \institution{Harvard University}
}
\email{yuliangli@g.harvard.edu}
\author{Gianni Antichi}
\affiliation{%
  \institution{Queen Mary University of London}
}
\email{g.antichi@qmul.ac.uk}
\author{Minlan Yu}
\affiliation{%
  \institution{Harvard University}
}
\email{minlanyu@seas.harvard.edu}
\author{Michael Mitzenmacher}
\affiliation{%
  \institution{Harvard University}
}
\email{michaelm@eecs.harvard.edu}


\renewcommand{\shortauthors}{Ben Basat et al.}

\begin{abstract}
Commodity network devices support adding in-band telemetry measurements into data packets, enabling a wide range of applications, including network troubleshooting, congestion control, and path tracing. However, including such information on packets adds significant overhead that impacts both flow completion times and {application-level performance.}

We introduce \sys, an in-band telemetry framework that bounds the amount of information added to each packet. 
\sys encodes the requested data on multiple packets, allowing per-packet overhead limits that can be as low as one bit. 
We analyze \sys and prove performance bounds, 
including cases when multiple queries are running simultaneously. 
\sys is implemented in P4 and can be deployed on network devices.
Using real topologies and traffic characteristics,
\changed{we show that \sys \sout{can enable the above applications}
concurrently enables applications such as congestion control, path tracing, and computing tail latencies,} using only sixteen bits per packet, with {performance comparable \mbox{to the state of the art.}}

%
%
\end{abstract}

\maketitle
\ifdefined\arXivVersion
\pagestyle{plain}
\fi

\section{Introduction}
Network telemetry is the basis for a variety of network management applications
such as network health monitoring~\cite{tammana16}, debugging~\cite{guo15}, fault 
localization~\cite{arzani2018007}, resource accounting and planning~\cite{narayana16}, 
attack detection~\cite{savage00,gkounis16}, congestion control~\cite{li19}, load 
balancing~\cite{alizadeh14,katta16,katta17}, fast reroute~\cite{liu13}, and path 
tracing~\cite{jeyakumar14}. 
A significant recent advance is provided by the In-band Network Telemetry (INT)~\cite{int}. 
INT allows switches to add information to each packet,
such as switch ID, link utilization, or queue status, as it passes by. Such telemetry information is then collected  
at the network egress point upon the reception of the packet.  

INT is readily available in programmable switches and network interface cards (NICs)~\cite{broadcom-int,barefoot-int,netronome-int,xilinx-int}, enabling an unprecedented level of visibility into the data plane behavior and making this technology attractive for real-world deployments~\cite{att-int,li19}.
A key drawback of INT is the overhead on packets. Since each switch adds information to the packet, the packet byte overhead grows linearly with the path length.
Moreover, the more telemetry data needed per-switch, the higher the overhead is: on a generic data center topology with 5 hops, requesting two values per switch requires 48 Bytes of overhead, or 4.8\% of a 1000 bytes packet (\S\ref{sec:motiv}).
When more bits used to store telemetry data, fewer bits can be used to carry the packet payload and stay within the maximum transmission unit (MTU). 
As a result, applications may have to split a message, e.g., an RPC call, onto multiple packets, making it harder to support the run-to-completion model that high-performance transport and NICs need~\cite{barbette15}. 
\changed{Indeed, the overhead of INT can impact application performance, potentially leading in some cases to a 25\% increase and 20\% degradation of flow completion time and goodput, respectively (\S\ref{sec:motiv}).
Furthermore, it increases processing latency at switches and might impose additional challenges for collecting and processing the data (\S\ref{sec:motiv}).}

We would like the benefits of in-band network telemetry, 
but at smaller overhead cost;  in particular, we wish to 
minimize the per-packet bit overhead.  
We design Probabilistic In-band Network Telemetry (\sys), a probabilistic variation of INT, that provides similar visibility as INT 
while bounding the per-packet overhead according to limits set
by the user. 
\changed{\sys allows the overhead budget to be as low as one bit, and leverages approximation techniques to meet it. We argue that often an approximation of the telemetry data suffices for the consuming application. For example, telemetry-based congestion control schemes like HPCC~\cite{li19} can be tuned to work with approximate telemetry, as we demonstrate in this paper.}
In some use cases, a single bit per packet suffices.

With \sys, a query is associated with a maximum overhead allowed on each packet. The requested information is probabilistically encoded onto several different packets so that a \emph{collection} of a flow's packets provides the relevant data. 
In a nutshell, while with INT a query triggers every switch along the path to embed their own information, \sys spreads out the information over multiple packets to minimize the per-packet overhead. The insight behind this approach is that, 
for most applications, it is not required to know all of the per-packet-per-hop information that INT collects. 
\changed{existing techniques incur high overheads due to requiring perfect telemetry information. For applications where some imperfection would be sufficient, these techniques may incur unnecessary overheads. \sys Is designed for precisely such applications}
For example, it is possible to check a flow's path conformance~\cite{handigol14,tammana16,narayana16}, by inferring its path from a collection of its packets. Alternatively, congestion control or load balancing algorithms that rely on latency measurements gathered by INT, e.g., HPCC~\cite{li19}, Clove~\cite{katta17} can work if packets convey information about the path's bottleneck, and do not \mbox{require information about all hops.}

We present the \sys framework (\S\ref{sec:framework}) and show that it can run several concurrent queries while bounding the per-packet bit overhead. 
To that end, \sys uses each packet for a query subset with cumulative overhead within the user-specified budget. We introduce the techniques we used to build this solution (\S\ref{sec:algorithms}) alongside its implementation on commercial programmable switches supporting P4 (\S\ref{sec:impl}).
Finally, we evaluate (\S\ref{sec:eval}) our approach with three different use cases.
The first traces a flow's path, the second uses data plane telemetry for congestion control, 
and the third estimates the experienced median/tail latency. 
Using real topologies and 
traffic characteristics, we show that \sys enables all of them concurrently, with
only sixteen bits per packet and while providing comparable performance to the state of the art.
%

%

In summary, the main contributions of this paper are:
\begin{itemize}[leftmargin=*]
\item We present \sys, a novel in-band network telemetry approach that provides fine-grained visibility while bounding the per-packet \mbox{bit overhead to a user-defined value.}
\item We analyze \sys and rigorously prove performance bounds.
\item We evaluate \sys in on path tracing, congestion control, and latency estimation, over multiple \mbox{network topologies.}
\item \changed{We open source our code~\cite{code}}.
\end{itemize}

\begin{table}
{\small
        \begin{tabular}{l|l}
                \hline
                \hspace{7mm}\textbf{Metadata value} & \hspace{12mm}\textbf{Description} \\
                \hline
                \hline
                Switch ID                       & ID associated with the switch\\
                Ingress Port ID                 & Packet input port\\
                Ingress Timestamp               & Time when packet is received\\
                Egress Port ID                  & Packet output port                \\
                Hop Latency                     & Time spent within the device      \\
                Egress Port TX utilization      & Current utilization of output port\\
                Queue Occupancy                 & The observed queue build up       \\
                Queue Congestion Status         & Percentage of queue being used    \\
                \hline
        \end{tabular}
        }
        \caption{Example metadata values.}
        \label{tab:int-info}
        
\end{table}

\section{INT and its Packet Overhead}
\label{sec:motiv}
INT is a framework designed to allow the collection and reporting of network data plane status at switches, without requiring any control plane intervention. In its architectural model, designated INT traffic sources, (e.g., the end-host networking stack, hypervisors, NICs, or ingress switches), add an \emph{INT metadata header} to packets.
The header encodes
\textit{telemetry instructions} that are followed by network devices on the packet's path. These instructions tell an INT-capable device what information to add to packets as they transit the network. 
Table~\ref{tab:int-info} summarizes the supported \emph{metadata values}. 
Finally, INT traffic sinks, e.g., egress switches or receiver hosts, retrieve 
the collected results before delivering the original packet to the application.
\changed{The INT architectural model is intentionally generic, and hence can 
enable a number of high level applications, such as (1) Network troubleshooting 
and verification, i.e., microburst detection~\cite{jeyakumar14}, packet history~\cite{handigol14}, 
path tracing~\cite{jeyakumar14}, path latency computation~\cite{IvkinYB019}; (2) 
Rate-based congestion control, i.e., RCP~\cite{dukkipati06}, XCP~\cite{katabi02}, 
TIMELY~\cite{mittal15}; (3) Advanced routing, i.e, utilization-aware load balancing~\cite{alizadeh14,katta16}.}

%
\changed{INT imposes a non insignificant overhead on packets though.}
The metadata header is defined as an 8B vector specifying the telemetry requests. 
Each value is encoded with a 4B number, as defined by the protocol~\cite{int}.
As INT encodes per-hop information, the overall overhead grows linearly with both the number of metadata values and the number of hops. For a generic data center topology with 5 hops, the minimum space required on packet would be 28 bytes (only one metadata value per INT device), which is 2.8\% of a 1000 byte packet (e.g., RDMA has a 1000B MTU). 
Some applications, such as Alibaba's High Precision Congestion Control~\cite{li19} (HPCC), require three different 
INT telemetry values for each hop. 
Specifically, for HPCC, INT collects timestamp, egress port tx utilization, and queue occupancy, alongside some additional data that is not defined by the INT protocol. 
This would account for around 6.8\% overhead using a standard INT \mbox{on a 5-hop path.}\footnote{HPCC reports a slightly lower (4.2\%) overhead
because they use customized INT. For example, they do not use the INT
header as the telemetry instructions do \mbox{not change over time.}}
\changed{\mbox{This overhead poses several problems:}}

\begin{figure}[tb]\centering
        \begin{minipage}{0.49\columnwidth}\centering
                \includegraphics[width=1\columnwidth]{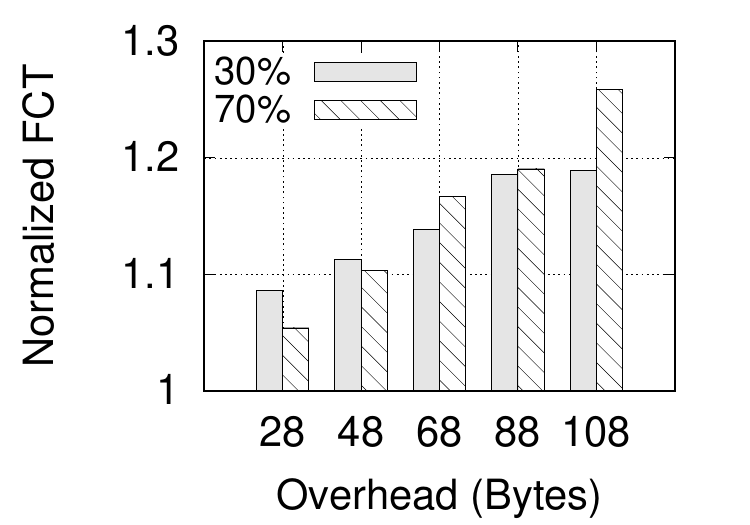}
        \end{minipage}\hfill\begin{minipage}{0.49\columnwidth}\centering
                \includegraphics[width=1\columnwidth]{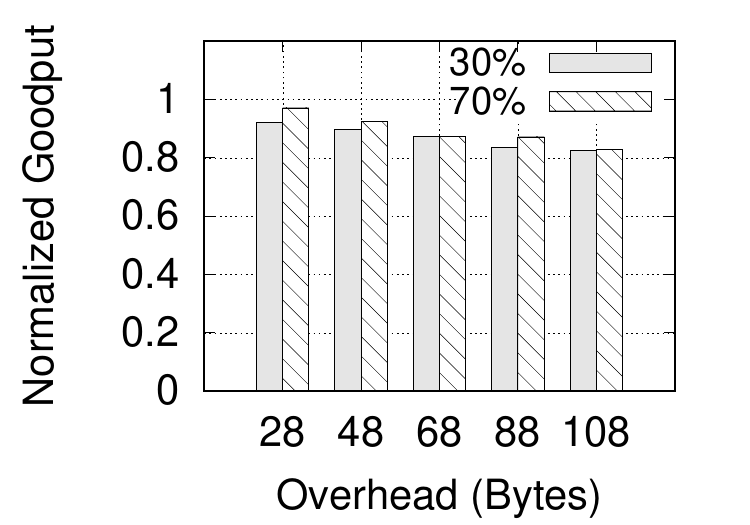}
        \end{minipage}
        \begin{minipage}{0.47\columnwidth}\centering
                \caption{Normalized average Flow Completion Time varying the network load and increasing the per-\mbox{packet overhead.}}
                \label{fig:int-fct}
        \end{minipage}\hfill\begin{minipage}{0.47\columnwidth}\centering
                \caption{Normalized average goodput of long flows (>10MB) varying the network load and increasing \mbox{per-packet overhead.}} 
                \label{fig:int-goodput}
        \end{minipage}
\end{figure}


\changed{\parab{1. High packet overheads degrade application performance.}}  The significant per-packet overheads from INT affect both flow completion time and application-level throughput, i.e., goodput. We ran an NS3~\cite{ns3} experiment to demonstrate this. We created a 5-hop fat-tree data center topology with 64 hosts connected through 10Gbps links. Each host generates traffic to randomly chosen destinations with a flow size distribution that follows a web search workload~\cite{alizadeh10}. We employed the standard ECMP routing \changed{with TCP Reno}. We ran our experiments with a range of packet overheads from 28B to 108B. The selected overheads correspond to a 5-hop topology, with one to five different INT values collected at each hop. Figure~\ref{fig:int-fct} shows the effect of increasing overheads on the average 
flow completion time (FCT) for 30\% \changed{(average)} and 70\% \changed{(high)} network utilization. 
Figure~\ref{fig:int-goodput}, 
instead, focuses on the goodput for only the long flows, i.e., with flow size >10 MBytes. Both graphs are normalized to the case where no overhead is \mbox{introduced on packets.}  
\begin{table*}[tbh]{
\begin{center}
\resizebox{1.01 \linewidth}{!}{
\begin{tabular}{lll}
\hline
\textbf{Application}                                                    & \textbf{Description}                                  & \textbf{Measurement Primitives}\\
\hline
\textbf{Per-packet aggregation}                                         &                                                       &\\
Congestion Control~\cite{katabi02,dukkipati06,han13,li19}               & Congestion Control with in-network support            & timestamp, port utilization, queue occupancy\\
Congestion Analysis~\cite{joshi18,chen19,narayana17}                    & Diagnosis of short-lived congestion events            & queue occupancy\\
Network Tomography~\cite{geng19}                                        & Determine network state, i.e., queues status          & switchID, queue occupancy\\
Power Management~\cite{heller10}                                        & Determine under-utilized network elements             & switchID, port utilization\\
Real-Time Anomaly Detection~\cite{schweller04,yu13}                     & Detect sudden changes in network status               & timestamp, port utilization, queue occupancy\\
\textbf{Static per-flow aggregation}                                    &                                                       &\\
Path Tracing~\cite{savage00,jeyakumar14,tammana16,narayana16}           & Detect the path taken by a flow or a subset           & switchID\\
Routing Misconfiguration~\cite{snoeren01,tammana16,li16}                & Identify unwanted path taken by a given flow          & switchID\\
Path Conformance~\cite{li16,tammana18,snoeren01}                        & Checks for policy violations.                         & switchID\\
\textbf{Dynamic per-flow aggregation}                                   &                                                       &\\
Utilization-aware Routing~\cite{alizadeh14,katta16,katta17}             & Load balance traffic based on network status.         & switchID, port utilization\\
Load Imbalance~\cite{li16,tammana18,savage00}                           & Determine links processing more traffic.              & switchID, port utilization\\
Network Troubleshooting~\cite{jeyakumar14,narayana17,tammana18}         & Determine flows experiencing high latency.            & switchID, timestamp\\
\hline
\end{tabular}
}
\end{center}}

\caption{Use cases enabled by PINT, organized per aggregation mode.}
\label{tab:use-cases}

\end{table*}

In the presence of 48 bytes overhead, which corresponds to 3.2\% of a 1500B packet (e.g., Ethernet has a 1500B MTU), the average FCT increases by 10\%, while the goodput for long flows degrades by 10\% if network utilization is approximately 70\%. Further increasing the overhead to 108B (7.2\% of 
a 1500B packet) leads to a 25\% increase and 20\% degradation of flow completion time and goodput, respectively. This means that even a small amount of bandwidth headroom can provide a dramatic reduction in latency~\cite{alizadeh12}.
The long flows' average goodput is approximately proportional to the residual capacity of the network. That means, at a high network utilization, the residual capacity is low, so the extra bytes in the header cause larger goodput degradation than the byte overhead itself~\cite{alizadeh12}.
As in our example, the theoretical goodput degradation should be around $1-\frac{100\%-70\%*1.072}{100\%-70\%*1.032}\approx 10.1\%$ when increasing the header overhead from 48B to 108B at around 70\% network utilization. This closely matches the experiment result, and is much larger \mbox{than the extra byte overhead (4\%).}

Although some data center networks employ jumbo frames to mitigate the problem\footnote{\scriptsize\url{ https://docs.aws.amazon.com/AWSEC2/latest/UserGuide/network_mtu.html}}, 
it is worth noting that (1) not every network can employ jumbo frames, especially the large 
number of enterprise and ISP networks; (2) some protocols might not entirely support jumbo frames; for example, RDMA over Converged Ethernet NICs provides an MTU of only 1KB~\cite{mittal18}.

\changed{\parab{2. Switch processing time.}}
In addition to consuming bandwidth, the INT overhead also affects packet processing time at switches. 
Every time a packet arrives at and departs from a switch, the bits carried over the wire need to be converted from serial to parallel and vice versa, using the 64b/66b (or 66b/64b) encoding as defined by the IEEE Standard 802.3~\cite{802.3}. 
For this reason, any additional bit added into a packet affects its processing time, delaying it at both input and output interfaces of every hop. For example, adding 48 bytes of INT data on a packet (INT header alongside two telemetry information) would cause a latency increase with respect to the original packet of almost $76ns$ and
$6ns$ for 10G and 100G interfaces, respectively\footnote{\scriptsize Consuming 48Bytes on a 10G interface requires 6 clock cycles each of them burning 6.4 ns~\cite{xilinx10g}. On a 100G interface, it needs just one clock cycle of 3ns~\cite{xilinx100g}.}. On a state-of-the-art switch with 10G interfaces, this can represent an approximately 3\% increase in processing
latency~\cite{oudin19}. On larger topologies and when more telemetry data is needed, 
the overhead on the packet can cause an increase of latency in the order of microseconds, which can \mbox{hurt the application performance~\cite{popescu17}. }

\changed{\parab{3. Collection overheads.}}
\changed{
Telemetry systems such as INT generate large amounts of traffic that may overload the network.
Additionally, INT produces reports of varying size (depending on the number of hops), while 
state-of-the-art end-host stack processing systems for telemetry data, such as Confluo~\cite{khandelwal19}, rely on \emph{fixed-byte size} headers on packets to optimize the computation overheads.
}

\section{The \sys Framework}
\label{sec:framework}

We now discuss the supported functionalities of our system, formalizing the model it works in.

\parab{Telemetry \emph{Values}.}
In our work, we refer to the telemetry information as {\em values}. Specifically, whenever a packet $p_j$ reaches a switch $s$, we assume that the switch observes a value $v(p_j,s)$. The value can be a function of the switch (e.g., port or switch ID), switch state (e.g., timestamp, latency, or queue occupancy), or any other quantity computable in the data plane. In particular, our definition supports the information types that INT~\cite{int} can collect.

\subsection{Aggregation Operations}\label{sec:agg}
We design \sys with the understanding that collecting all (per-packet per-switch) values pose an excessive and unnecessary overhead. Instead, \sys supports several aggregation operations that allow efficient encoding of the aggregated data onto packets. 
For example, congestion control algorithms that rely on the bottleneck link experienced by packets (e.g.,~\cite{li19}) can use a per-packet aggregation. Alternatively, applications that require discovering the flow's path (e.g., path conformance) can use per-flow aggregation.
\begin{itemize}[leftmargin=*]
    \item \textbf{Per-packet aggregation} summarizes the data across the different values in the packet's path, according to an \emph{aggregation function} (e.g., max/min/sum/product). For example, if the packet $p_j$ traverses the switches $s_1,s_2,\ldots,s_k$ and we perform a max-aggregation, the target quantity is $\max\set{ v(p_j,s_i)}_{i=1}^k$. 
    \item \textbf{\textit{Static} per-flow aggregation} targets summarizing values that may differ between flows or switches, but are fixed for a (flow, switch) pair. Denoting the packets of flow $x$ by $p_1,\ldots,p_z$, the static property means that for any switch $s$ on $x$'s path we have $v(p_1,s)=\ldots=v(p_z,s)$; for convenience, we denote $v(x,s)\triangleq v(p_1,s)$. If the path taken by $x$ is $s_1,\ldots, s_k$, the goal of this aggregation is then to compute all values on the path, i.e., $v(x,s_1), v(x,s_2),\ldots,v(x,s_k)$. As an example, if $v(x,s_i)$ is the ID of the switch $s_i$, then the aggregation corresponds to inferring the flow's path.
    \item \textbf{\textit{Dynamic} per-flow aggregation} summarizes, for each switch on a flow's path, the stream of values observed by its packets. Denote by $p_1,\ldots,p_z$ the packets of $x$ and by $s_1,\ldots, s_k$ its path, and let sequence of values measured by $s_{i}$ on $x$'s packets be denoted as $S_{x,i}=\langle v(p_1,s_i), v(p_2,s_i), \ldots v(p_z,s_i)\rangle$. The goal is to compute a function of $S_{i,x}$ 
    %
    according to an aggregation function (e.g., median or number of values that equal a particular value $\mathfrak v$). For example, if $v(p_j,s_i)$ is the latency of the packet $p_j$ on the switch $s_i$, using the median as an aggregation function equals computing the median latency of flow $x$ on $s_i$.
\end{itemize}

\subsection{Use Cases}\label{sec:usecases}
\sys can be used for a wide variety of use cases (see Table~\ref{tab:use-cases}). In this paper, we will mainly discuss three of them, chosen in such a way that we can demonstrate all the different \sys \mbox{aggregations in action.}

\parab{Per-packet aggregation: Congestion Control.}
State of the art congestion control solutions often use INT to collect utilization and queue occupancy statistics~\cite{li19}. 
\sys shows that we can get similar or better performance while minimizing the overheads associated with collecting the statistics.

\parab{Static per-flow aggregation: Path Tracing.}
Discovering the path taken by a flow is essential for various applications like path conformance~\cite{li16,tammana18,snoeren01}. 
In \sys, we leverage multiple packets from the same flow to infer its path.
For simplicity, we assume that each flow follows a single path.


\parab{Dynamic per-flow aggregation: Network Troubleshooting.}
For diagnosing network issues, it is useful to measure the latency quantiles from each hop~\cite{jeyakumar14,narayana17,chen19,IvkinYB019}. 
\changed{Tail quantiles are reported as the most effective way to summarize the delay in an ISP~\cite{choi2007quantile}.}
For example, we can detect network events in real-time by noticing a change in the hop latency~\cite{LatencyBarefoot}. To that end, 
we leverage \sys to collect the median and tail \mbox{latency statistics of (switch, flow) pairs.}

\subsection{Query Language}\label{sec:language}
Each query in \sys is defined as a tuple $\langle
\mbox{val\_t, agg\_t, bit-budget,}\break
\mbox{\textbf{\textit{optional}}: space-budget, flow definition, frequency}
\rangle$ that specifies\\ which values are used (e.g., switch IDs or latency), the aggregation type as in~\cref{sec:agg}, and the \emph{query bit-budget} (e.g., $8$ bits per packet). The user may also specify a space-budget that determines how much per-flow storage is allowed, the flow-definition (e.g., 5-tuple, source IP, etc.) in the case of per-flow queries,
and the query frequency (that determines which fraction of the packets \mbox{should be allocated for the query).}

\changed{\sys works with \emph{static} bit-budgets to maximize its effectiveness while remaining transparent to the sender and receiver of a packet. Intuitively, when working with INT/PINT one needs to ensure that a packet's size will not exceed the MTU even after the telemetry information is added. For example, for a 1500B network MTU, if the telemetry overhead may add to $X$ bytes, then the sender would be restricted to sending packets smaller than 1500$-X$. Thus, by fixing the budget, we allow the network flows to operate without being aware of the telemetry queries and path length.}

\subsection{Query Engine}\label{sec:engine}

\sys allows the operator to specify multiple queries that should run concurrently and a \emph{global bit-budget}. For example, if the global bit-budget is $16$ bits, we can run two $8$-bit-budget queries on the same packet.
In \sys, we add to packets a \emph{digest} -- a short bitstring whose length equals the global bit budget. This digest may compose of multiple \emph{query digests} as in the above example.

\begin{figure}
        \centering
        \includegraphics[width=\linewidth]{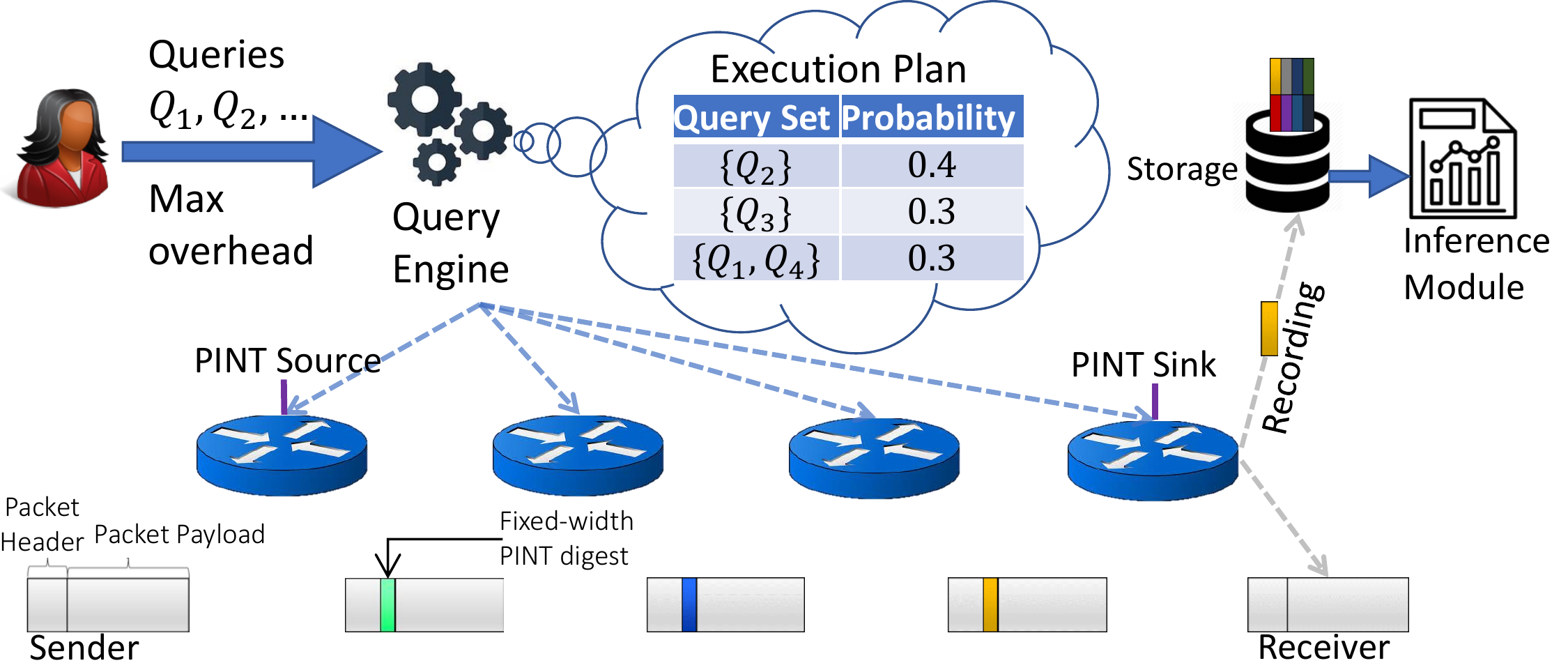}
        \caption{\small \sys's architecture: The Query Engine decides on an execution plan that determines the probability of running each query set on packets and notifies the switches. The first hop, \sys Source, adds a digest whose size is determined by the user. Every switch along the path may modify the digest but does not add bits. The last hop, \sys Sink, removes the collected telemetry information and sends it to the Recording Module. On demand, the \mbox{Inference Module is invoked to analyze the recorded data.}}
        \label{fig:arch}
        
\end{figure}

Each query instantiates an \emph{Encoding Module}, a \emph{Recording Module}, and an \emph{Inference Module}. 
The Encoding runs on the switches and modifies the packet's \emph{digest}.
When a packet reaches a \sys Sink (the last hop on its path),
the sink extracts (removes) the digest and sends the data packet to its destination. This way, \sys remains transparent to both the sender and receiver.
The extracted digest is intercepted by the Recording Module, which processes and stores the digests. 
\changed{We emphasize that the per-flow data stored by the Recording Module sits in an offline storage and no per-flow state is stored on the switches.
Another advantage of \sys is that, compared with INT, we send fewer bytes from the sink to be analyzed and thereby reduce the network overhead.} 
The Inference Module runs on a commodity server that uses the stored data to answer queries.
\mbox{\cref{fig:arch} illustrates \sys's architecture.}

Importantly, all switches must agree on which query set to run on a given packet, according to the distribution chosen by the Query Engine. We achieve coordination using a global hash function, as described in \cref{sec:globalHash}. 
Unlike INT, we do not add a telemetry header; in this way we minimize the bit overhead.\footnote{\changed{We note that removing the header is minor compared to the overhead saving \sys obtains by avoiding logging all per-hop values.}} Instead, the \sys Query Engine compiles the queries to decide on the execution plan (which is a probability distribution on a query set, see~\cref{fig:arch}) \mbox{and notifies the switches.}

\subsection{Challenges}
We now discuss several challenges we face when designing algorithms for \sys.

\parab{Bit constraints.}
In some applications, the size of values may be prohibitively large to a point where writing a single value on each packet poses an unacceptable overhead.

\parab{Switch Coordination.} 
The switches must agree on which query set to use for each packet.
While the switches can communicate by exchanging bits that are added to packets, this increases the bit-overhead of \sys and should be avoided.


\parab{Switch constraints.}
The hardware switches have constraints, including 
limited operations per packet, limited support for arithmetic operations (e.g., multiplication is not supported), inability to keep per-flow state,  etc. 
See~\cite{Precision} for a discussion of the constraints.
For \sys, these constraints mean that we must store minimal amount of state on switches and use simple encoding schemes that \mbox{adhere to the programmability restrictions.}

\section{Aggregation Techniques}
\label{sec:algorithms}
In this section, we present the techniques used by \sys to overcome the above challenges. We show how global hash functions allow efficient coordination between different switches and between switches and the Inference Module. We also show how distributed encoding schemes help reduce the number of packets needed to collect the telemetry information. Finally, we adopt compression techniques to reduce the number of bits required to represent \mbox{numeric values (e.g., latency).}

\changed{Our techniques reduce the bit-overhead on packets using probabilistic techniques. 
As a result, some of our algorithms (e.g., latency quantile estimation) are \emph{approximate}, while others (e.g., path tracing) require \emph{multiple packets} from the same flow to decode. 
Intuitively, oftentimes one mostly cares about tracing large (e.g., malicious) flows and does not require discovering the path of very short ones. Similarly, for network diagnostics it is OK to get approximated latency measurements as we usually care about large latencies or significant latency changes. We summarize which techniques apply for each of the use cases in \cref{tbl:techniques}.
}
\begin{table}[tbh]
\changed{
\begin{center}
\resizebox{1.015\linewidth}{!}{
\begin{tabular}{|c||c|c|c|}
\hline 
Use Case & Global Hashes & Distributed Coding & Value Approximation\tabularnewline
\hline 
\hline 
{Congestion Control} &\xmark&\xmark&\cmark  \tabularnewline \hline 
{Path Tracing}       &\cmark&\cmark&\xmark  \tabularnewline \hline 
{Latency Quantiles}  &\cmark&\xmark&\cmark  \tabularnewline \hline 
\end{tabular}
}
\end{center}
}
\caption{\normalfont \changed{A summary of which techniques are used for each use case.}}\label{tbl:techniques}

\end{table}

\subsection{Implicit Coordination via Global Hash Functions}\label{sec:globalHash}
In \sys, we extensively use global hash functions to determine probabilistic outcomes at the switches. As we show, this solves the switch coordination challenge, and also enables implicit coordination between switches and the Inference Module -- a feature that allows us to \mbox{develop efficient algorithms.}

\parab{Coordination among switches.} 
%
We use a global (i.e., that is known to all switches) hash function to determine which query set the current packet addresses. 
For example, suppose that we have three queries, each running with probability $1/3$, and denote the query-selection hash, mapping packet IDs to the real interval\footnote{For simplicity, we consider hashing into real numbers. In practice, we hash into $M$ bits (the range $\{0,\ldots,2^{M}-1\}$) for some integer $M$ (e.g., $M=64$). Checking if the real-valued hash is in $[a,b]$ corresponds to checking if the discrete hash is in the interval $\brackets{\floor{(2^M-1)\cdot {a}},\floor{(2^M-1)\cdot{b}}}$.} $[0,1]$, by $\mathfrak q$. 
Then if $\mathfrak q(p_j)<1/3$, all switches would run the first query, if $\mathfrak q(p_j)\in [1/3,2/3]$ the second query, and otherwise the third.
Since all switches compute the same $\mathfrak q(p_j)$, they agree on the executed query without communication.
This approach requires the ability to derive unique packet identifiers to which the hashes are applied (e.g., IPID, IP flags, IP offset, TCP sequence and ACK numbers, etc.). For a discussion on how to obtain identifiers, see~\cite{duffield01}. 

\parab{Coordination between switches and Inference Module.}
The Inference Module must know which switches modified an incoming packet's digest, but we don't want to spend bits on encoding switch IDs in the packet.  
Instead, we apply a global hash function $g$ on a (packet ID, hop number)\footnote{The hop number can be computed from the current TTL on the packet's header.} pair to choose whether to act on a packet. This enables the \sys Recording Module to compute $g$'s outcome for all hops on a packet's path and deduct where it was modified.
This coordination plays a critical role in our per-flow algorithms \mbox{as described below.}


\parab{Example \#1: Dynamic Per-flow aggregation.}
In this aggregation, we wish to collect statistics from values that vary across packets, e.g., the median latency of a (flow, switch) pair.
We formulate the general problem as follows: Fix some flow $x$.  Let $p_1,\ldots,p_z$ denote the packets of $x$ and $s_1,\ldots, s_k$ denote its path. For each switch $s_i$, we need to collect enough information about the sequence $S_{i,x}=\langle v(p_1,s_i), v(p_2,s_i), \ldots v(p_z,s_i)\rangle$ while meeting the query's bit-budget.
For simplicity of presentation, we assume that packets can store a single value.\footnote{If the global bit-budget does not allow encoding a value, we compress it at the cost of an additional error as discussed in~\cref{sec:approxVal}. If the budget allows storing multiple values, we can run the algorithm independently multiple times and thereby collect more information to improve the accuracy.}

\sys's Encoding Module runs a distributed sampling process. The goal is to have each packet carry the value of a uniformly chosen hop on the path.
That is, each packet $p_j$ should carry each value from $\set{v(p_j,s_1),\ldots,v(p_j,s_k)}$ with probability $1/k$. 
This way, with probability $1-e^{-\Omega(z/k)}$, each hop will get $z/k\cdot (1\pm o(1))$ samples, \mbox{i.e., almost an equal number.}

To get a uniform sample, we use a combination of global hashing and the Reservoir Sampling algorithm~\cite{vitter1985random}.
Specifically, when the $i$'th hop on the path (denoted $s_i$) sees a packet $p_j$, it overwrites its digest with $v(p_j,s_i)$ \emph{if $g(p_j,i) \le r_i$}. Therefore, the packet will end up carrying the value $v(p_j,s_i)$ only if (i) $g(p_j,i) \le r_i$, and (ii) $\forall \jmath\in\set{i+1,\ldots,k}:g(p_j,\jmath)> r_\jmath$. 
To get uniform sampling, we follow the Reservoir Sampling algorithm and set $r_i\triangleq1/i$. Indeed, for each hop
(i) and (ii) are simultaneously satisfied \mbox{with probability $1/k$.} 
\changed{Intuitively, while later hops have a lower chance of overriding the digest, they are also less likely to be replaced by the remaining switches along the path.}


Intuitively, we can then use existing algorithms for constructing statistics from subsampled streams. That is, for each switch $s_i$, the collected data is a uniformly subsampled stream of $S_{i,x}$.
One can then apply different aggregation functions.  For instance, we can estimate quantiles and find frequently occurring values. 
As an example, we can estimate the median and tail latency of the (flow, switch) pair by finding the relevant quantile of the subsampled stream.

On the negative side, aggregation functions like the number of distinct values or the value-frequency distribution entropy are \mbox{poorly approximable from subsampled streams~\cite{subsampled}.}

\sys aims to minimize the decoding time and amount of per-flow storage.
To that end, our Recording Module does not need to store all the incoming digests. Instead, we can use a sketching algorithm that suits the target aggregation (e.g., a quantile sketch~\cite{KLL16}). 
That is, for each switch $s_i$ through which flow $x$ is routed, we apply a sketching algorithm to the sampled substream of $S_{i,x}$.
If given a per-flow space budget (see \S\ref{sec:language}) we split it between the $k$ sketches evenly.
This allows us to record a smaller amount of per-flow information and process queries faster. 
Further, we can use a sliding-window sketch (e.g.,~\cite{arasu2004approximate,Memento,FAST}) to reflect only the most recent measurements.
Finally, the Inference Module uses the sketch to provide \mbox{estimates on the required flows. }

The accuracy of \sys for dynamic aggregation depends on the aggregation function, the number of packets $(z)$, the length of the path $(k)$, and the per-flow space stored by the Recording Module \changed{(which sits off-switch in  remote storage)}. 
We state results for two typical aggregation functions. The analysis \mbox{is deferred to~\cref{app:dynamic}.}

\begin{theorem}\label{thm:dynamicQuantiles}
Fix an error target $\epsilon\in (0,1)$ and a target quantile $\phi\in(0,1)$ (e.g., $\phi=0.5$ is the median). After seeing $O(k \epsilon^{-2})$ packets from a flow $x$, using $ O(k\epsilon^{-1})$ space, \sys produces a $(\phi\pm\epsilon)$-quantile \mbox{of $S_{x,i}$ for each hop $i$.}
\end{theorem}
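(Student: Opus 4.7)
The plan is to argue that the theorem follows from combining three ingredients: (i) the sampling scheme of \sys delivers, for each hop $i$, a \emph{uniform} random subsample of the value stream $S_{x,i}$; (ii) a uniform subsample of size $\Omega(\epsilon^{-2})$ suffices to approximate all quantiles to additive error $\epsilon/2$; and (iii) a quantile sketch (e.g., KLL~\cite{KLL16}) using $O(\epsilon^{-1})$ space further approximates quantiles of its input to additive error $\epsilon/2$.

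First I would verify the uniformity and load-balancing claim. By the preceding discussion of the reservoir-sampling-via-global-hashing scheme, each packet $p_j$ carries the value $v(p_j,s_i)$ with probability exactly $1/k$, and these events are disjoint across $i$. Thus for a fixed hop $i$, the number $m_i$ of packets ending up carrying $v(\cdot,s_i)$ is a sum of independent indicators with mean $z/k$. A Chernoff bound gives $m_i \ge z/(2k)$ with probability $1-e^{-\Omega(z/k)}$, so setting $z=C k \epsilon^{-2}$ for a sufficiently large constant $C$ makes $m_i = \Omega(\epsilon^{-2})$ simultaneously for all $k$ hops, via a union bound (absorbed into the constant by taking $C$ large enough so that $e^{-\Omega(\epsilon^{-2})} \ll 1/k$; if needed, increase $z$ by a logarithmic factor hidden in the $O(\cdot)$).

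Second I would apply the classical subsampling bound for quantiles: conditioned on the indices of the $m_i$ packets carrying $v(\cdot,s_i)$ being a uniform random subset of $\{1,\ldots,z\}$ of size $m_i$, the DKW inequality (or a Hoeffding bound applied pointwise together with a VC-dimension argument for half-lines) implies that every quantile of the subsample is within $O(1/\sqrt{m_i})$ of the corresponding quantile of $S_{x,i}$, with high probability. Choosing the constants so that $1/\sqrt{m_i}\le \epsilon/2$ gives subsampling error at most $\epsilon/2$.

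Third, the Recording Module feeds the subsample into a KLL (or equivalent) quantile sketch with the per-flow space budget of $O(\epsilon^{-1})$ per hop (total $O(k\epsilon^{-1})$), which returns a value whose rank in the subsample is within $\epsilon/2$ of $\phi$. Composing, the returned value is a $(\phi\pm\epsilon)$-quantile of $S_{x,i}$. Union bound over the $k$ hops (again absorbed by enlarging constants, or by noting that KLL's failure probability is polynomially small and can be set to $1/\mathrm{poly}(k)$ at the same asymptotic cost).

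The main obstacle is the careful chaining of the two error sources without either (a) introducing a $\log k$ blow-up that exceeds the claimed $O(k\epsilon^{-2})$ and $O(k\epsilon^{-1})$ bounds, or (b) implicitly conditioning on the subsample's size in a way that breaks independence of the KLL sketch's randomness. Both are standard but need to be stated carefully; everything else is an assembly of off-the-shelf estimates.
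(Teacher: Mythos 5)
Your proposal is correct and follows essentially the same route as the paper's proof: uniform $1/k$-sampling per hop via reservoir sampling, a Chernoff bound to guarantee $\Omega(\epsilon^{-2})$ samples at every hop after $O(k\epsilon^{-2})$ packets, the classical single-quantile subsampling bound, and a KLL sketch of size $O(\epsilon^{-1})$ per hop with the two additive errors composed. The only cosmetic difference is that the paper handles the variable-size Bernoulli sample by citing Felber--Ostrovsky, whereas you handle it by conditioning on the realized sample size; both are fine.
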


\begin{theorem}\label{thm:dynamicFrequent}
Fix an error target $\epsilon\in (0,1)$  and a target threshold $\theta\in(0,1)$. After seeing $O(k \epsilon^{-2})$ packets from a flow $x$, using $O(k \epsilon^{-1})$ space, \sys produces all values that appear in at least a $\theta$-fraction of $S_{x,i}$, and no value that appears less than a $(\theta-\epsilon)$-fraction, for each hop $i$.
\end{theorem}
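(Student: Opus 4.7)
The plan is to combine three ingredients: (i) the uniform subsampling guarantee established for the Reservoir Sampling construction in the preceding discussion, (ii) a concentration bound showing that frequencies in a uniform sample approximate those of the underlying stream, and (iii) a classical off-the-shelf frequent-items sketch (e.g., Misra--Gries or Space-Saving) instantiated independently in the Recording Module for each hop $s_i$ on the flow's path.

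First I would invoke the sampling analysis: by construction each packet $p_j$ contributes to hop $s_i$'s subsampled stream independently with probability $1/k$. Choosing the hidden constant in $z = O(k\epsilon^{-2})$ large enough, a Chernoff bound guarantees that with probability at least $1 - \delta'$ (for any small $\delta'$) each hop receives $m_i = \Theta(\epsilon^{-2})$ samples. Next, fix a hop $i$ and consider the empirical frequency $\widehat f_v$ of a value $v$ in the sample versus its true frequency $f_v$ in $S_{x,i}$. Since the samples are uniform, $\E[\widehat f_v] = f_v$, so a Hoeffding bound combined with a union bound over at most $m_i$ distinct sampled values gives $|\widehat f_v - f_v| \le \epsilon/4$ uniformly, again with high probability, provided the constant in $m_i = \Theta(\epsilon^{-2})$ is chosen appropriately.

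Then I would feed the subsampled stream for hop $s_i$ into a Misra--Gries sketch of size $O(\epsilon^{-1})$, which is well known to return every item with empirical frequency at least $\theta$ and to exclude every item with empirical frequency below $\theta - \epsilon/2$ (the additive error of the sketch is at most $\epsilon/(4)$ times the stream length for an appropriate choice of counters). Chaining the $\epsilon/4$ sampling error with the $\epsilon/4$ sketch error yields the claimed $\epsilon$ separation between reported and suppressed values with respect to the true frequencies $f_v$ in $S_{x,i}$. Summing the per-hop storage gives the overall $O(k\epsilon^{-1})$ bound, and a final union bound across the $k$ hops (absorbed into the hidden constants) lifts the guarantee to hold simultaneously at every switch on the flow's path.

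The only nontrivial point is ensuring that the two sources of error (subsampling variance and sketch error) compose cleanly and that a single setting of constants in $O(k\epsilon^{-2})$ and $O(k\epsilon^{-1})$ supports both the Chernoff bound on sample sizes and the Hoeffding bound on per-value deviations after a union bound over hops and candidate items; this is the main technical obstacle, but it is standard and handled by choosing the constants slightly larger than in the quantile case of \cref{thm:dynamicQuantiles}.
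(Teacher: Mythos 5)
Your proposal matches the paper's argument essentially exactly: the paper likewise argues via a Chernoff bound that an $O(\epsilon^{-2})$-sized uniform subsample of $S_{x,i}$ preserves value frequencies up to an additive $\epsilon_s$, then feeds the subsample into a counter-based heavy-hitters sketch (Space Saving, equivalent to your Misra--Gries) of size $O(\epsilon^{-1})$ contributing an additional additive $\epsilon_a$, and finally pays a factor $k$ in both packets and space to cover all hops. Your $\epsilon/4+\epsilon/4$ error split and per-hop union bound are just an explicit instantiation of the same decomposition.
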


\subsection{Distributed Coding Schemes}
When the values are static for a given flow (i.e., do not change between packets), we can improve upon the dynamic aggregation approach using \emph{distributed encoding}. Intuitively, in such a scenario, we can spread each value $v(x,s_i)$ over multiple packets. 
The challenge is that the information collected by \sys is not known to any single entity but is rather distributed between switches. This makes it challenging to use existing encoding schemes as we wish to avoid adding extra overhead for communication between switches. Further, we need a simple encoding scheme to adhere to the switch limitations, and we desire one that allows efficient decoding.


Traditional coding schemes assume that a single encoder owns all the data that needs encoding. However, in \sys, the data we wish to collect can be distributed among the network switches. That is, the message we need to transfer is partitioned between the \mbox{different switches along the flow's path.} 


\begin{figure}
        \includegraphics[width=\linewidth]{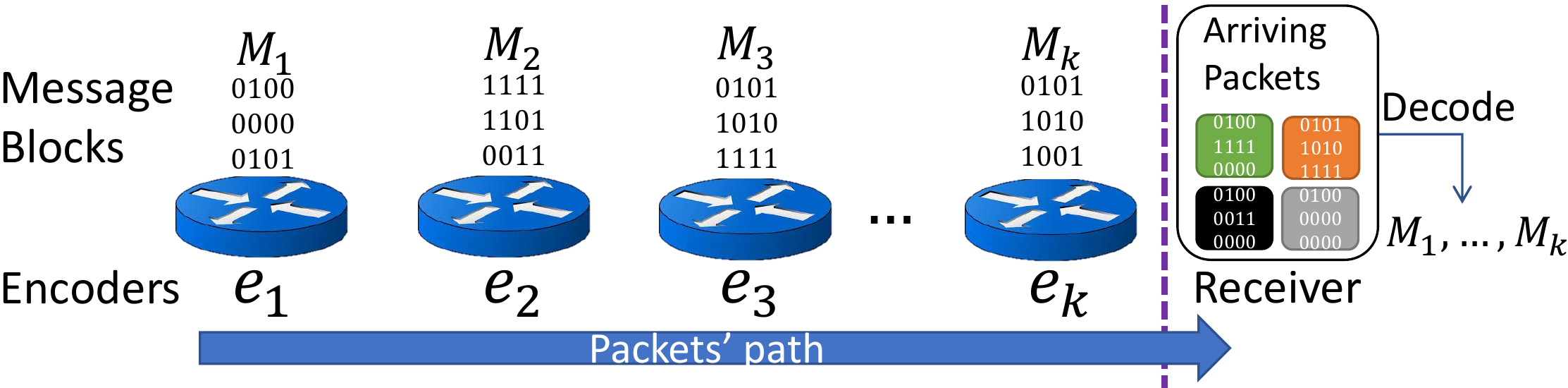}
        
        \caption{\small \mbox{Multiple encoders send a distributed message.}
        }
        
        \label{fig:encoding}
        
\end{figure}

We present an encoding scheme that is fully distributed without any communication between encoders.
Specifically, we define our scheme as follows:
a \emph{sequence} of $k$ encoders hold a $k$-block message $M_1,\ldots,M_k$ such that encoder $e_i$ has $M_i$ for all $i\in\set{1,\ldots,k}$.
The setting is illustrated in \cref{fig:encoding}.
Each packet carries a digest which has a number of bits that equals the block size and has a unique identifier which distinguishes it from other packets. Additionally, each encoder is aware of its hop number (e.g., by computing it from the TTL field in the packet header).
The packet starts with a digest of $\overline 0$ (a zero bitstring) and passes through $e_1,\ldots,e_k$. Each encoder can modify the packet's digest before passing it to the next encoder. After the packet visits $e_k$, it is passed to the Receiver, which tries to decode the message. We assume that the encoders are \emph{stateless} to model the switches' inability to keep a per-flow state in networks.

Our main result is a distributed encoding scheme that needs $k\cdot\log\log^* k\cdot(1+o(1))$ packets for decoding the message with near-linear decoding time. We note that Network Coding~\cite{ho2003benefits} can also be adapted to this setting. However,  we have found it rather inefficient, as we explain later on.

\begin{figure}

\centering
\subfigure[Algorithm Progress]{\label{fig:b}\includegraphics[width=.49\columnwidth]{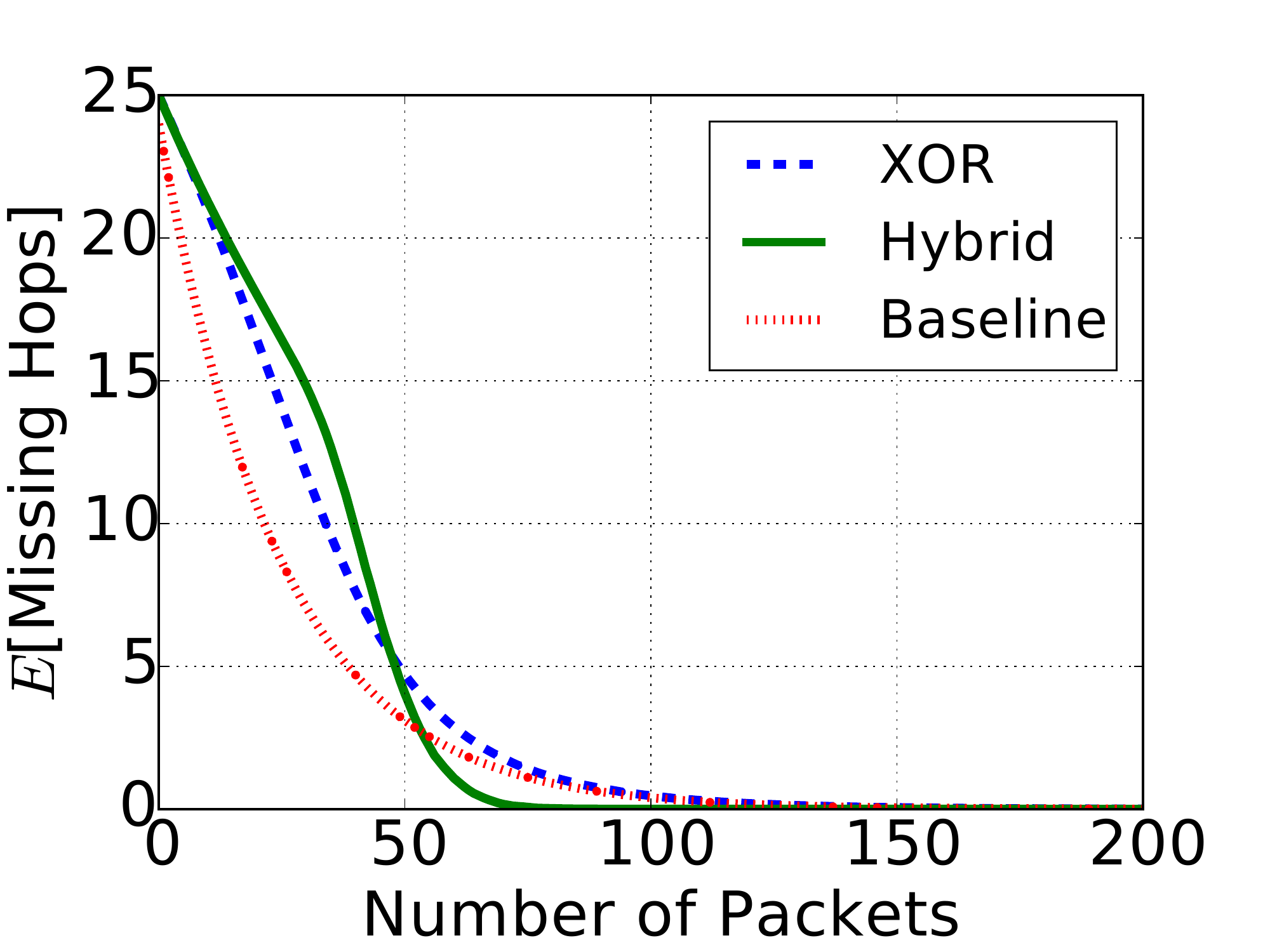}}
\subfigure[Probability of Decoding]{\label{fig:a}\includegraphics[width=.49\columnwidth]{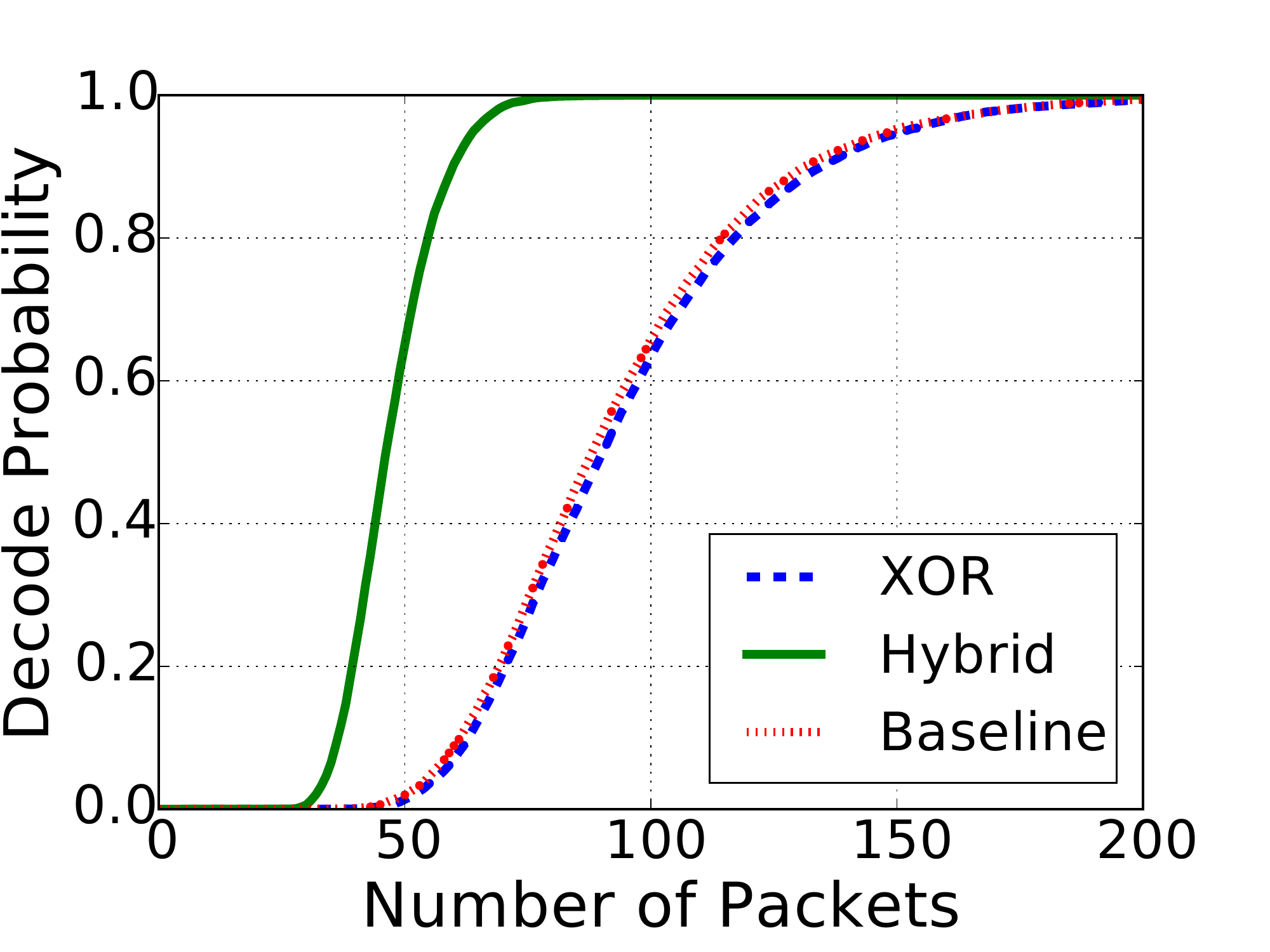}}

\caption{The XOR scheme (with prob. $1/d$) decodes fewer hops at first but is able to infer the entire path 
using a similar number of packets to Baseline.
By interleaving both schemes (Hybrid), we get a better result as the first hops are mainly decoded by Baseline packets and the last hops by XOR packets that have XOR probability $\log\log d/\log d$ and are more likely to hit the missing hops. Plotted for $d=k=25$ hops.}
\label{fig:hybrid}
\end{figure}

\parab{Baseline Encoding Scheme.}
A simple and intuitive idea for a distributed encoding scheme is to carry a uniformly sampled block on each packet. That is, the encoders can run the Reservoir Sampling algorithm using a global hash function to determine whether to write their block onto the packet. Similarly to our Dynamic Aggregation algorithm, the Receiver can determine the hop number of the sampling switch, by evaluating the hash function, and report the message.

The number of packets needed for decoding the message using this scheme follows the Coupon Collector Process (e.g., see~\cite{flajolet1992birthday}), where each block is a coupon and each packet carries a random sample. It is well-known that for $k$ coupons, we would need $k\ln k(1+o(1))$ samples on average to collect them all. 
For example, for $k=25$, Coupon Collector has a median (i.e., probability of 50\% to decode) of $89$ packets and a 99'th percentile of $189$ packets, \mbox{as shown in \cref{fig:hybrid}.}

The problem with the Baseline scheme is that while the first blocks are encoded swiftly, later ones require a higher number of packets. The reason is that after seeing most blocks, every consecutive packet is unlikely to carry a new block. 
\changed{This is because the encoders are unaware of which blocks were collected and the probability of carrying a new block is proportional to number of missing blocks.}
As a result, the Baseline scheme has a long ``tail'', meaning that completing the decoding requires many packets.


\parab{Distributed XOR Encoding.}
An alternative to the Baseline scheme is to use bitwise-xor while encoding. 
We avoid assuming that the encoders know $k$, but assume that they know a \emph{typical length} $d$, such that $d=\Theta(k)$. Such an assumption is justified in most cases; for example, in data center topologies we often know a tight bound on the number of hops~\cite{tammana16}. Alternatively, the median hop count in the Internet is estimated to be $12$~\cite{HopCount2}, while only a few paths have more than $30$ hops~\cite{theilmann2000dynamic,carter1997server}.
The XOR encoding scheme has a parameter $p$, and each encoder on the path bitwise-xors its message onto the packet's digest with probability $p=1/d$, according to the global hash function. That is, the $i$'th encoder changes the digest if $g(p_j,i)<p$. We note that this probability is uniform and that the decision of whether to xor is independent for each encoder, allowing a distributed implementation without communication \mbox{between the encoders.}

When a packet reaches the Receiver, the digest is a bitwise-xor of multiple blocks $M_{i_1}\oplus\ldots\oplus M_{i_K}$, where $K$ is a binomial random variable $K\sim \mathit{Bin}(k,p)$. 
The Receiver computes $g(p_j,1),\ldots,g(p_j,k)$ to determine the values $i_1,\ldots,i_K$. If this set contains exactly one unknown message block, we can discover it by bitwise-xoring the other blocks. For example, if we have learned the values of $M_1,M_3,M_4,M_6$ and the current digest is $p_j.\mbox{dig}=M_1\oplus M_5\oplus M_6$, we can derive $M_5$ since $M_5=p_j.\mbox{dig}\oplus M_1\oplus M_6$.

On its own, the XOR encoding does not asymptotically improve over the Baseline. Its performance is optimized when $p=1/d=\Theta(1/k)$, where it requires $O(k\log k)$ packets to decode, i.e., within a constant factor from the Baseline's performance. 
Interestingly, we show that the combination of the two approaches gives better results.

\parab{Interleaving the Encoding Schemes.}
Intuitively, the XOR and Baseline schemes behave differently. In the Baseline, the chance of learning the value of a message block with each additional packet \emph{decreases} as we receive more blocks. In contrast, to recover data from an XOR packet, we need to know all xor-ed blocks but one. When $p$ is much larger than $1/k$, many packet digests are modified by multiple encoders, which means that the probability to learn a message block value \emph{increases} \mbox{as we decode more blocks.} 

\changed{
As an example for how the interleaved scheme helps, consider the case of $k=2$ encoders.
The Baseline scheme requires three packets to decode the message in expectation; the first packet always carries an unknown block, but each additional packet carries the missing block with probability only $1/2$.
In contrast, suppose each packet chooses the Baseline scheme and the XOR scheme each with probability $1/2$, using $p=1$. For the interleaved scheme to complete, we need either two Baseline packets that carry different blocks or one XOR packet and one Baseline packet. A simple calculation shows that this requires just $8/3$ packets in expectation.
}

For combining the schemes, we first choose whether to run the Baseline with probability $\tau$, or the XOR otherwise. Once again, switches make the decision based on a global hash function applied to the packet identifier to achieve implicit agreement on the packet type. Intuitively, the Baseline scheme should reduce the number of undecoded blocks from $k$ to $k'$, and the XOR will decode the rest. To minimize the number of packets, we can set $\tau=3/4$ and the XOR probability\footnote{\mbox{If $d\le 15$ then $\log\log d<1$; in this case we set the probability to $1/\log d$.}} to $\log\log d/\log d$ to reduce the required number of packets to $O(k\log\log k / \log\log\log k$). In such setting, the Baseline decodes most hops, leaving $k'\approx k/\log k$ for the XOR layer.
For example, when $k=25$, we get a median of $41$ packets and a 99'th percentile of $68$ packets to decode the message. That is, not only does it improve the average case, the interleaving has sharper tail bounds. This improvement is illustrated in~\cref{fig:hybrid}.

\parab{Multi-layer Encoding.} So far, we used a single probability for xor-ing each packet, which was chosen inversely proportional to $k'$ (the number of hops that were not decoded by the Baseline scheme). This way, we maximized the probability that a packet is xor-ed by exactly one of these $k'$ blocks, and we xor any block from the $k-k'$ that are known already to remove them from the decoding.
However, when most of the $k'$ blocks left for XOR are decoded, it also ``slows down'' and requires more packets for decoding each additional block. Therefore, we propose to use multiple XOR \emph{layers} that vary in their sampling probabilities. We call the Baseline scheme layer $0$, and the XOR layers $1,\ldots,\mathcal L$. Each XOR layer $\ell\in\set{1,\ldots,\mathcal L}$ starts with $k_\ell$ undecoded blocks, xors with probability $p_\ell$, and \mbox{ends when $k_{\ell+1}$ blocks are undecoded.}


Our analysis, given in~\cref{app:static}, shows that by optimizing the algorithm parameters $\tau, \mathcal L, \set{k_\ell}_{\ell=1}^\mathcal L$ and $\set{p_\ell}_{\ell=1}^\mathcal L$, we obtain the following result. 
The value of $\mathcal L$ is a function of $d$, and we have that $\mathcal L=1$ if $d\le \floor{e^e}=15$ and $\mathcal L=2$ if $16\le d\le e^{e^e}$; i.e., \mbox{in practice we need only one or two XOR layers.}

\begin{theorem}
After seeing $k\log\log^* k (1+o(1))$ packets, the Multi-layer scheme
can decode the message. 
\end{theorem}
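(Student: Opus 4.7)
The plan is to decompose the decoding progress across layers, bound the packets consumed by each layer, and then optimize the layer parameters to minimize the total number of packets $N$. Recall that each packet is routed to the Baseline (layer $0$) with probability $\tau_0$ or to XOR layer $\ell\in\{1,\ldots,\mathcal L\}$ with probability $\tau_\ell$ via the global hash, so conditional on layer assignments the packets contribute independently.

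First I would analyze each layer in isolation. For the Baseline, each packet carries a uniformly sampled block, so a Coupon Collector argument gives that the expected number of Baseline packets to reduce the undecoded set from $k$ to $k_1$ is $k(\ln k - \ln k_1)\cdot(1+o(1))$. For XOR layer $\ell$ with parameter $p_\ell$, a packet decodes a new block iff exactly one of the currently undecoded blocks is xor-ed in; conditional on $k'$ undecoded blocks this has probability $k' p_\ell(1-p_\ell)^{k'-1}$. Setting $p_\ell = 1/k_\ell$ at the start of the layer keeps this probability close to $1/e$, and summing the reciprocal as $k'$ drops from $k_\ell$ to $k_{\ell+1}$ shows that $e\cdot k_\ell \ln(k_\ell/k_{\ell+1})\cdot(1+o(1))$ packets suffice in expectation to complete the layer.

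Next I would choose the sequence $k_0=k > k_1 > \cdots > k_{\mathcal L+1}\le 1$ and the allocations $\{\tau_\ell\}$ (with $\sum_\ell \tau_\ell = 1$) to minimize $N$ subject to the per-layer constraint $\tau_\ell N \ge $ (required packets for layer $\ell$). Writing $B_\ell = \tau_\ell N$ and $u_\ell = \ln(k/k_\ell)$, the recurrence $k_{\ell+1}\approx k_\ell \exp(-B_\ell/(e k_\ell))$ forces $k_\ell$ to shrink as a tower of exponentials, so only $\mathcal L = O(\log^* k)$ layers are needed to bring $k_\ell$ to a constant. I would then apply a Chernoff-type concentration bound on each layer's packet count (a sum of independent geometric waiting times) and union bound over the $O(\log^* k)$ layers; the resulting polylogarithmic failure-probability penalty is absorbed into the final $1+o(1)$ term.

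The main obstacle is the parameter optimization: a uniform per-layer allocation only yields the weaker $O(k\log^* k)$ bound, so reaching $k\log\log^* k\cdot(1+o(1))$ requires a delicate non-uniform allocation that concentrates most of the budget on a few heavy layers and contributes only lower-order amounts to the rest, while still satisfying every per-layer decoding constraint and respecting the tower-of-exponentials decay of $k_\ell$. Verifying that such an allocation exists, tracking the additive overshoot of the Riemann-sum approximation $k u_1 + ek\sum_{\ell\ge 1}e^{-u_\ell}(u_{\ell+1}-u_\ell)$ over the target integral $\int_0^{\ln k} e^{-v}\,dv$, and cleanly propagating the $(1+o(1))$ factors through the union bound is the most technical part of the argument.
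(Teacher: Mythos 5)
Your proposal is correct and follows essentially the same route as the paper's appendix proof: partial coupon collection for the Baseline layer down to $k_1=k/\log^* k$ undecoded blocks (contributing the dominant $k\ln\log^* k$ term), XOR layers whose sampling probabilities and undecoded counts decay as a tower of exponentials so that $O(\log^* k)$ layers each cost only $O(k/\log^* k)$ packets via the ``exactly one undecoded block'' success probability $\approx e^{-1}$, and concentration for sums of geometric waiting times with a union bound over layers. The ``delicate allocation'' you flag as the remaining obstacle is precisely the paper's explicit choice --- a $1-1/(1+\log\log^* d)$ fraction of packets to the Baseline and a uniform split of the remaining $o(1)$ fraction among the XOR layers, so that the XOR stage collectively consumes only $O(k)$ packets absorbed into the $o(1)$ term --- hence no further optimization is actually required.
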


We note that the $o(1)$ term hides an $O(k)$ packets additive term, where the constant depends on how well $d$ approximates $k$. Namely, when $d=k$, our analysis indicates that $k(\log\log^* k + 2 + o(1))$ packets are enough.
Finally, we note that if $d$ is not representative of $k$ at all, we still get that $k\ln k(1+o(1))$ packets are enough, the same as in the Baseline scheme (up to lower order terms). The reason is that our choice of $\tau$ is close to $1$, i.e., only a small fraction of the packets are used in the XOR layers.

\parab{Comparison with Linear Network Coding.}
Several algorithms can be adapted to work in the distributed encoding setting. For example, Linear Network Coding (LNC)~\cite{ho2003benefits} allows one to decode a message in a near-optimal number of packets by taking random linear combinations over the message blocks. That is, on every packet, each block is xor-ed into its digest with probability $1/2$. Using global hash functions to select which blocks to xor, one can determine the blocks that were xor-ed onto each digest. LNC requires just $\approx k + \log_2 k$ packets to decode the message. However, in some cases, LNC may be suboptimal and \sys can use alternative solutions. First, the LNC decoding algorithm requires matrix inversion which generally takes $O(k^3)$ time in practice (although theoretically faster algorithms are possible). 
If the number of blocks is large, we may opt for approaches with faster decoding.
Second, LNC does not seem to work when using hashing to reduce the overhead.
As a result, in such a setting, LNC could use fragmentation, but may require a larger number of packets than the \mbox{XOR-based scheme using hashing.}


\parab{Example \#2: Static Per-flow Aggregation.}
We now discuss how to adapt our distributed encoding scheme for \sys's static aggregation. 
Specifically, we present solutions that allow us to reduce the overhead on packets to meet the bit-budget in case a single value cannot be written on a packet. For example, for determining a flow's path, the values may be $32$-bit switch IDs, while the bit-budget can be smaller (even a single bit per packet). 
We also present an implementation variant that allows to decode the collection of packets in near-linear time. This improves the quadratic time required for computing $\set{g(p_j,i)}$ for all packets $p_j$ and hops $i$.

\parab{Reducing the Bit-overhead using Fragmentation.}
Consider a scenario where each value has $\mathfrak q$ bits while we are allowed to have smaller $\mathfrak b$-bits digests on packets.
In such a case, we can break each value into $F\triangleq\ceil{\mathfrak q/\mathfrak b}$ \emph{fragments} where each has $\le \mathfrak b$ bits. Using an additional global hash function, each packet $p_j$ is associated with a \emph{fragment number} in $\set{1,\ldots,F}$. 
We can then apply our distributed encoding scheme separately on each fragment number. While fragmentation reduces the bit overhead, it also increases the number of packets required for the aggregation, and the decode complexity, as if there were $k\cdot F$ hops.

\parab{Reducing the Bit-overhead using Hashing.}
The increase in the required number of packets and decoding time when using fragmentation may be prohibitive in some applications. 
We now propose an alternative that allows decoding with fewer packets, if the value-set is restricted. Suppose that we know in advance a small set of possible block values $\mathcal V$, such that any $M_i$ is in $\mathcal V$. For example, when determining a flow's path, $\mathcal V$ can be the set of switch IDs in the network. Intuitively, the gain comes from the fact that the keys may be longer than $\log_2 |\mathcal V |$ bits (e.g., switch IDs are often 32-bit long, while networks have much fewer than $2^{32}$ switches).
Instead of fragmenting the values to meet the $\mathfrak q$-bits query bit budget, we leverage hashing.
Specifically, we use another global hash function $h$ that maps (value, packet ID) pairs into $\mathfrak q$-bit bitstrings. When encoder $e_i$ sees a packet $p_j$, if it needs to act it uses $h(M_i,p_j)$ to modify the digest. In the Baseline scheme $e_i$ will write $h(M_i,p_j)$ on $p_j$, and in the XOR scheme it will xor $h(M_i,p_j)$ onto its current digest. 
As before, the Recording Module checks the hop numbers that modified the packet. 
The difference is in how the Inference Module works -- for each hop number $i$, we wish to find a single value $v\in\mathcal V$ that agrees with all the Baseline packets from hop $i$. For example, if $p_1$ and $p_2$ were Baseline packets from hop $i$, $M_i$ must be a value such that $h(M_i,p_1)=p_1.\mbox{dig}$ and $h(M_i,p_2)=p_2.\mbox{dig}$.
If there is more than one such value, the inference for the hop is not complete and we require additional packets to determine it. Once a value of a block $M_i$ is determined, from any digest $p_j$ that was xor-ed by the $i$'th encoder, we xor $h(M_i,p_j)$ from $p_j.\mbox{dig}$. This way, the number of unknown blocks whose hashes xor-ed $p_j$ decreases by one. If only one block remains, we can treat it similarly to a Baseline packet and use it to reduce the number {of potential values for that block.}
\changed{Another advantage of the hashing technique is that it does not assume anything about the width of the values (e.g., switch IDs), as long as each is distinct.}

\parab{Reducing the Decoding Complexity.}
Our description of the encoding and decoding process thus far requires processing is super-quadratic ($\omega(k^2)$) in $k$. That is because we need $\approx k\log\log^* k$ packets to decode the message, and we spend $O(k)$ time per packet in computing the $g$ function to determine which encoders modified its digest.
We now present a variant that reduces the processing time to nearly linear in $k$. 
Intuitively, since the probability of changing a packet is $\Omega(1/k)$, the number of random bits needed to determine which encoders modify it is $O(k\log k)$. Previously, each encoder used the global function $g$ to get $O(\log k)$ pseudo-random bits and decide whether to change the packet. Instead, we can use $g$ to create $O(\log 1/p)=O(\log k)$ pseudo-random $k$-bit vectors. Intuitively, each bit in the bitwise-and of these vectors will be set with probability $p$ (as defined by the relevant XOR layer).
The $i$'th encoder will modify the packet if the $i$'th bit is set in the bitwise-and of the vectors\footnote{This assumes that the probability is a power of two, or provides a $\sqrt 2$ approximation of it. By repeating the process we can get a better approximation.}.
At the Recording Module, we can compute the set of encoders that modify a packet in time $O(\log k)$ by drawing the random bits and using their bitwise-and. Once we obtain the bitwise-and vector we can extract a list of set bits in time $O(\#\mbox{set bits})$ using bitwise operations. Since the average number of set bits is $O(1)$, the overall per-packet complexity remains $O(\log k)$ and the total decoding time becomes $O(k\log k\log\log^* k)$. We note that this improvement assumes that $k$ fits in $O(1)$ machine words (e.g., $k\le 256$) and that encoders can do $O(\log k)$ \mbox{operations per packet.}

\parab{Improving Performance via Multiple Instantiations.}
The number of packets \sys needs to decode the message depends on the query's bit-budget. 
However, increasing the number of bits in the hash may not be the best way to reduce the required number of packets. Instead, we can use multiple independent repetitions of the algorithm. For example, given an $8$-bit query budget, we can use two independent $4$-bit hashes.



\subsection{\changed{Approximating Numeric Values}}\label{sec:approxVal}
Encoding an exact numeric value on packet may require too many bits, imposing an undesirable overhead. For example, the 32-bit latency measurements that INT collects may exceed the bit-budget. We now discuss to compress the value, at the cost of \mbox{introducing an error.}

\parab{Multiplicative approximation.} One approach to reducing the number of bits 
required to encode 
a value is to write on the packet's digest $\mathfrak a(p_j,s)\triangleq \brackets{\log_{(1+\epsilon)^2}{v(p_j,s)}}$ instead of $v(p_j,s)$. Here, the $\brackets{\cdot}$ operator rounds the quantity to the closest integer. At the Inference Module, we can derive a $(1+\epsilon)$-approximation of the original value by computing $(1+\epsilon)^{2\cdot\mathfrak a(p_j,s)}$.
For example, if we want to compress a $32$-bit value into $16$ bits, \mbox{we can set $\epsilon=0.0025$.}

\parab{Additive approximation.} If distinguishing small values is not as crucial as bounding the maximal error, we obtain better results by encoding the value with additive error instead of multiplicative error. For a given error target $\Delta$ (thereby reducing the overhead by $\floor{\log_2\Delta}$ bits), the Encoding Module writes $\mathfrak a(p_j,s)\triangleq \brackets{\frac{v(p_j,s)}{2\Delta}}$, and the Inference Module computes $(2\Delta)\cdot{\mathfrak a(p_j,s)}$. 

\parab{Randomized counting.} For some aggregation functions, the aggregation result may require more bits than encoding a single value. For example, in a per-packet aggregation over a $k$-hop path with $q$-bit values, the sum may require $q+\log k$ bits to write explicitly while the product may take $q\cdot k$ bits. This problem is especially evident if $q$ is small (e.g., a single bit specifying whether the latency is high).
Instead, we can take a randomized approach to increase the value written on a packet probabilistically. For example, we can estimate the number of high-latency hops or the end-to-end latency to within a $(1+\epsilon)$-multiplicative factor using \mbox{$O(\log\epsilon^{-1}+\log\log (2^q \cdot k\cdot \epsilon^2)))$ bits~\cite{ApproximateCounting}.}

\stepcounter{figure}
\begin{figure*}
\centering
\subfigure[\changed{Web search workload (large flows)}]{\label{fig:goodput_vs_load}\includegraphics[width=0.33\linewidth]{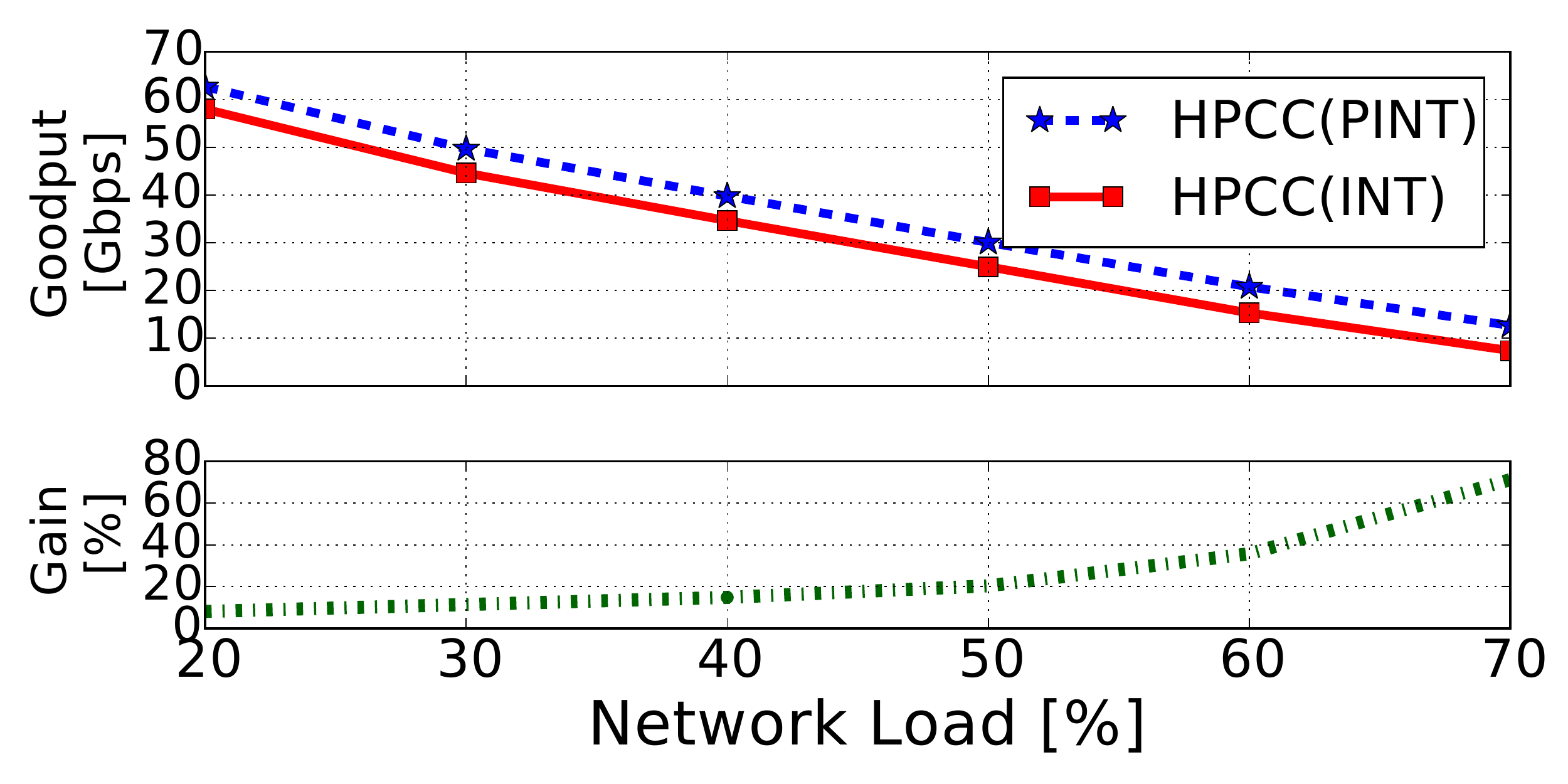}}
\hspace*{-3mm}
\subfigure[\changed{Web search workload}]{\label{fig:HPCCvsPINT-wb}
\includegraphics[width=0.33\linewidth]{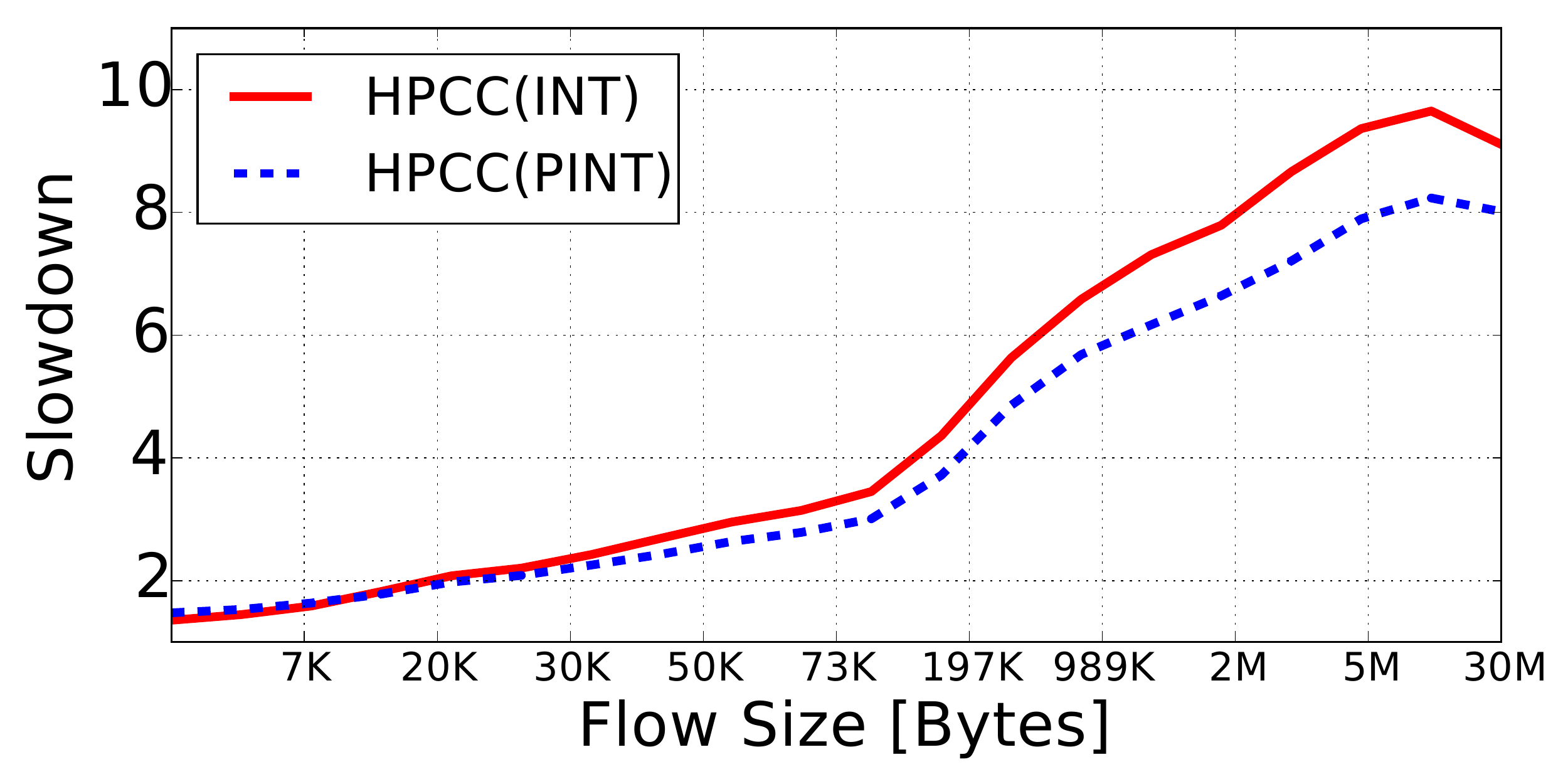}
}
\hspace*{-3mm}
\subfigure[\changed{Hadoop workload}]{\label{fig:HPCCvsPINT-fb}\includegraphics[width=0.33\linewidth]{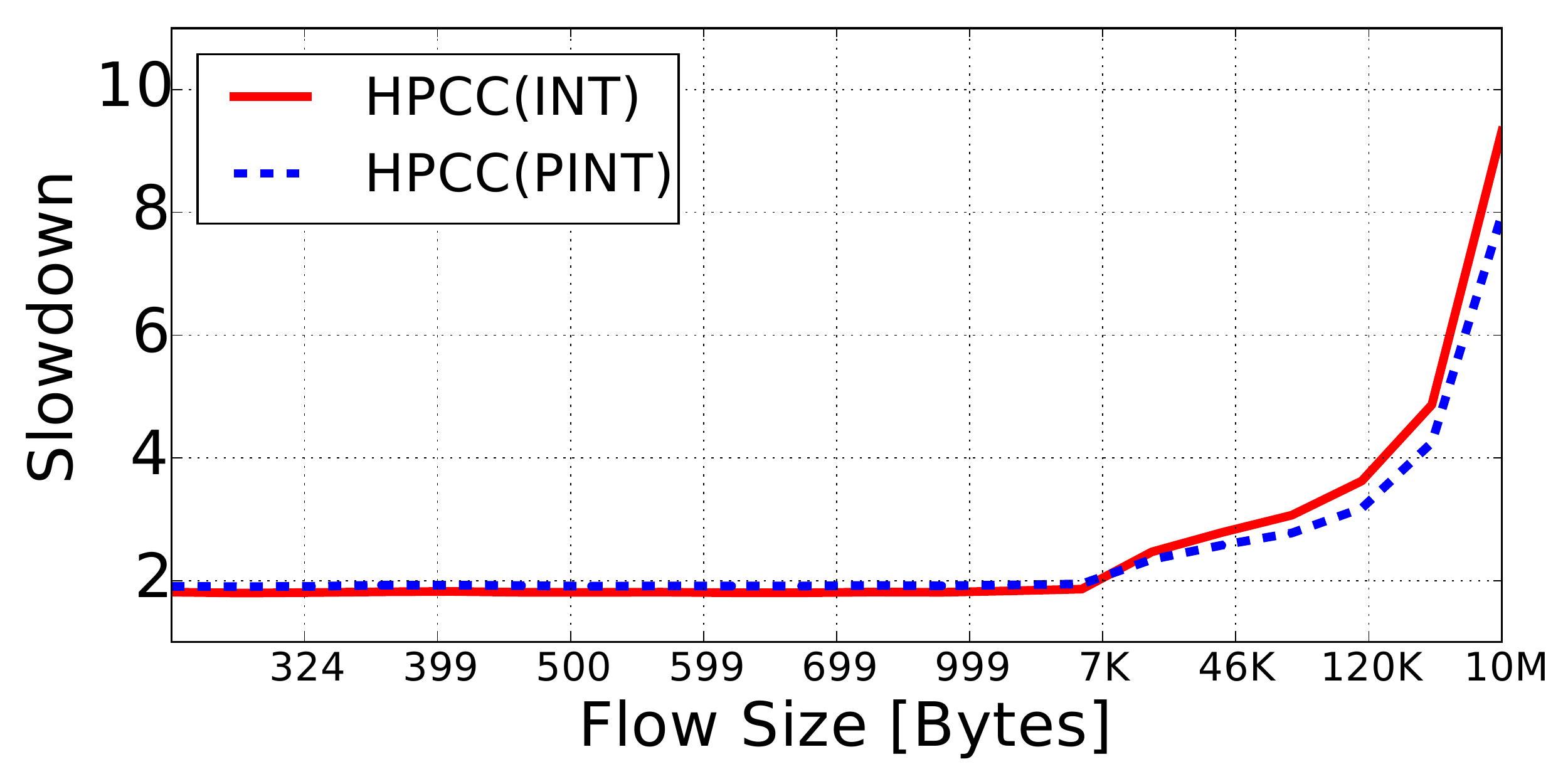}}

\caption{\small \changed{Comparison of the 95th-percentile slowdown of the standard INT-based HPCC and the \sys-based HPCC. \sys improves the performance for the long flows due to its reduced overheads. In (b) and (c), the network load is 50\% and the x-axis scale is chosen such that there are 10\% of the flows between consecutive tick marks.}}

\end{figure*} 
\addtocounter{figure}{-2}

\begin{figure}
        \includegraphics[width=.95\linewidth]{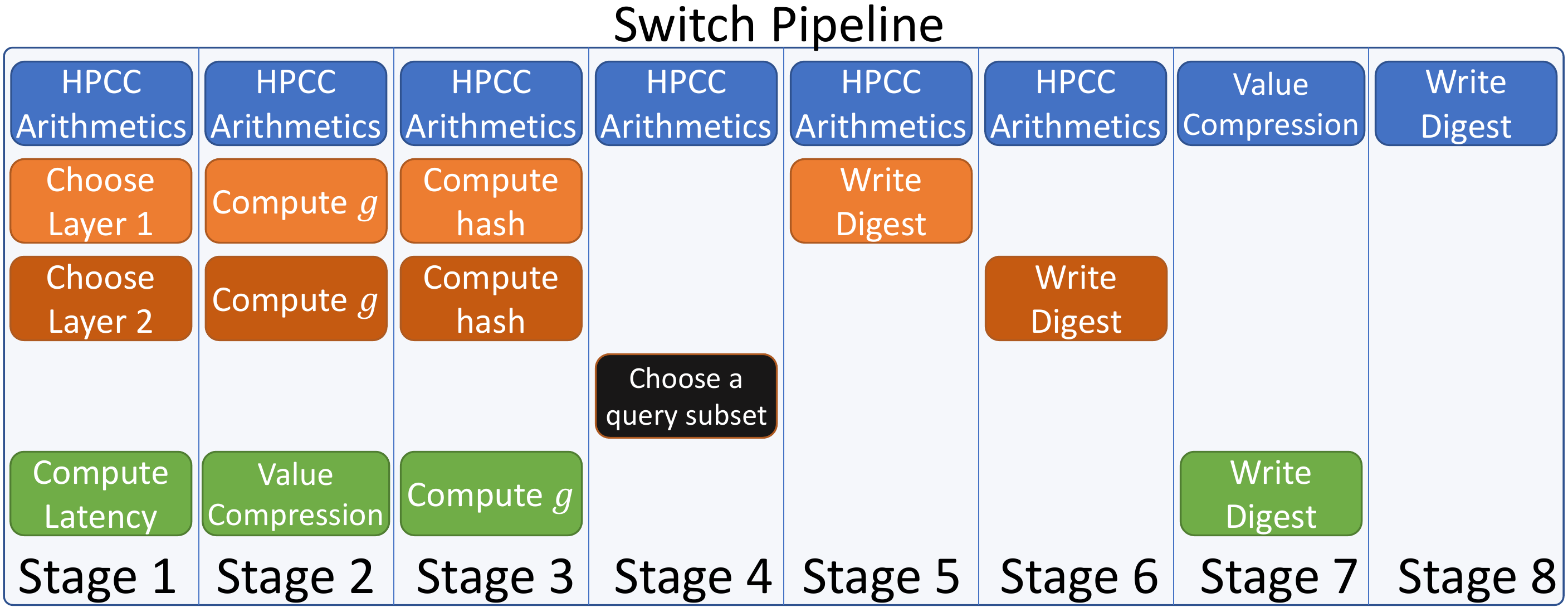}
        
        \caption{\small {Layout illustration for two path tracing hashes, alongside a latency query, \mbox{and a congestion control query.}}
        }
        \label{fig:layout}
        
\end{figure}
\stepcounter{figure}
\parab{Example \#3: Per-packet aggregation.}
Here, we wish to summarize the data across the different values in the packet's path. For example, HPCC~\cite{li19} collects per-switch information carried by INT data, and adjusts the rate at the end host according to the highest link utilization along the path. 
To support HPCC with \sys, we have two key insights: (1) we just need to keep the highest utilization (i.e., the bottleneck) in the packet header, instead of every hop; (2) we can use the multiplicative approximation to further reduce the number of bits for storing the utilization.
\changed{Intuitively, \sys improves HPCC as it reduces the overheads added to packets, as explained in \cref{sec:motiv}.}

In each switch, we calculate the utilization as in HPCC, with slight tuning to be supported by switches (discussed later). The multiplication is calculated using log and exp based on lookup tables \cite{sharma17}. The result is encoded using multiplicative approximation. To further eliminate systematic error, we write $\mathfrak a(p_j,s)\triangleq \brackets{\log_{(1+\epsilon)^2}{v(p_j,s)}}_R$, the $\brackets{\cdot}_R$ \emph{randomly} performs floor or ceiling, with a probability distribution that gives an expected value equals to $\log_{(1+\epsilon)^2}{v(p_j,s)}$. This way, some packets will overestimate the utilization while others underestimate it, thus resulting in the correct value on average.
In practice, we just need 8 bits to support $\epsilon=0.025$.

\parab{Tuning HPCC calculation for switch computation.}
We maintain the exponential weighted moving average (EWMA) of link utilization $U$ of each link in the switch. $U$ is updated on every packet with:
$U=\frac{T-\tau}{T}\cdot U+\frac{\tau}{T}\cdot u$, where
$u=\frac{\mathit{qlen}}{B\cdot T}+\frac{\mathit{byte}}{B\cdot\tau}$ is the new sample for updating $U$.
Here, $T$ is the base RTT and $B$ is the link bandwidth (both are constants).
Intuitively, the weight of the EWMA, $\frac{\tau}{T}$,  corresponds to each new packet's time occupation $\tau$. The calculation of $u$ also corresponds to each new packet: $\mathit{byte}$ is the packet's size, and $\mathit{qlen}$ is the queue length when the packet is dequeued\footnote{This is slightly different from HPCC, where the calculation is done in the host, which can only see packets of its own flow. Therefore, the update is scaled for packets of the same flow ($\tau$ is time gap between packets of the same flow, and $\mathit{byte}$ includes the bytes from other flows in between). Here, the update is performed on all packets on the same link. Since different flows may interleave on the link, \mbox{our calculation is more fine-grained.}}.

To calculate the multiplications, we first do the following transformation: $U=\frac{T-\tau}{T}\cdot U+\frac{\mathit{qlen}\cdot\tau}{B\cdot T^2}+\frac{\mathit{byte}}{B\cdot T}$.
Then we calculate the multiplications using logarithm and exponentiation as detailed in~\cref{app:hpcc}.

\section{Implementation}
\label{sec:impl}
\sys is implemented using the P4 language and can be deployed on commodity programmable switches.
We explain how each of our use cases is executed.

For running the path tracing application (static per-flow aggregation), we require four pipeline stages.
The first chooses a layer, another computes $g$, the third hashes the switch ID to meet the query's bit budget, and the last writes the digest.
If we use more than one hash for the query, both can be executed in parallel as \mbox{they are independent.}

Computing the median/tail latency (dynamic per-flow aggregation) also requires four pipeline stages: one for computing the latency, one for compressing it to meet the bit budget; one to compute $g$; and one \mbox{to overwrite the value if needed.  }


Our adaptation of the HPCC congestion control algorithm requires six pipeline stages to compute the link utilization, followed by a stage for approximating the value and another to write the digest.
\changed{For completeness, we elaborate on how to implement in the data plane the different arithmetic operations needed by HPCC in \cref{app:arithmeticImplementation}. We further note that running it may require that the switch would need to perform the update of $U$ in a single stage. In other cases, we propose to store the last $n$ values of $U$ on separate stages and update them in a round-robin manner, for some integer $n$. This would mean that our algorithm would need to recirculate every $n$'th packet as the switch's pipeline is one-directional.}

Since the switches have a limited number of pipeline stages, we parallelize the processing of queries as they are independent of each other.
We illustrate this parallelism for a combination of the three use cases of \sys.
We start by executing all queries simultaneously, writing their results on the packet vector. Since HPCC requires more stages than the other use cases, we concurrently compute which query subset to run according to the distribution selected by the Query Engine (see \S\ref{sec:engine}). We can then write the digests of all the selected queries without increasing the number of stages compared with running HPCC alone. The switch layout for such a combination is \mbox{illustrated in \cref{fig:layout}.}

%
%
%
%
%
%
%
%

\section{Evaluation}\label{sec:eval}
We evaluate on the three use cases discussed on \S\ref{sec:usecases}.

\subsection{Congestion Control}
We evaluate \changed{how \sys affects the performance of HPCC~\cite{li19}} using the same simulation setting as in~\cite{li19}. 
\changed{Our goal is not to propose a new congestion control scheme, but rather to present a low-overhead approach for collecting the information that HPCC utilizes.}
We use NS3~\cite{ns3} and a FatTree topology with 16 Core switches, 20 Agg switches, 20 ToRs, and 320 servers (16 in
each rack).  Each server has a single 100Gbps NIC connected to a single ToR. The capacity of each link between Core and Agg switches, as well as Agg switches and ToRs, are all 400Gbps. All links have a 1$\mu$s propagation delay, which gives a 12$\mu$s maximum base RTT. The switch buffer size is 32MB. 
The traffic is generated following the flow size distribution in \changed{web search from Microsoft \cite{alizadeh10} and} Hadoop from Facebook \cite{roy15}. Each server generates new flows according to a Poisson process, destined to random servers. The average flow arrival time is set so that the total network load is 50\% \changed{(not including the header bytes)}. 
We use the recommended setting for HPCC: $W_{AI}=80$ bytes, \mbox{$maxStage=0$, $\eta=95\%$, and $T=13\mu$s.}

\newcommand{\smallFig}{}
\ifdefined\smallFig
\begin{figure}
\centering
\hspace*{-2mm}
\subfigure[\changed{Web search workload }]{\label{fig:probHPCC}\includegraphics[width=0.512\linewidth]{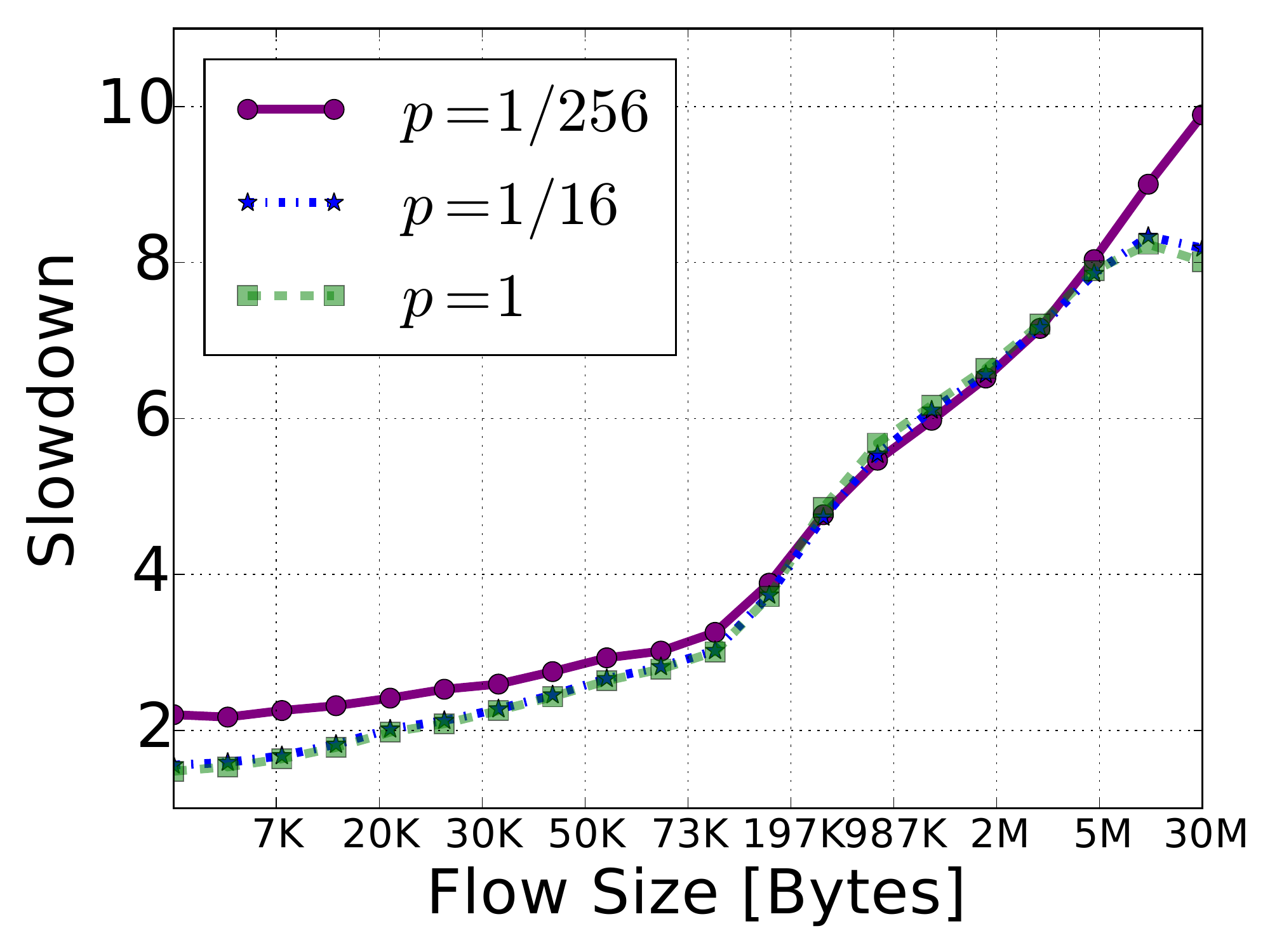}}
\hspace*{-2mm}
\subfigure[\changed{Hadoop workload }]{\label{fig:probHPCCHadoop}\includegraphics[width=0.512\linewidth]{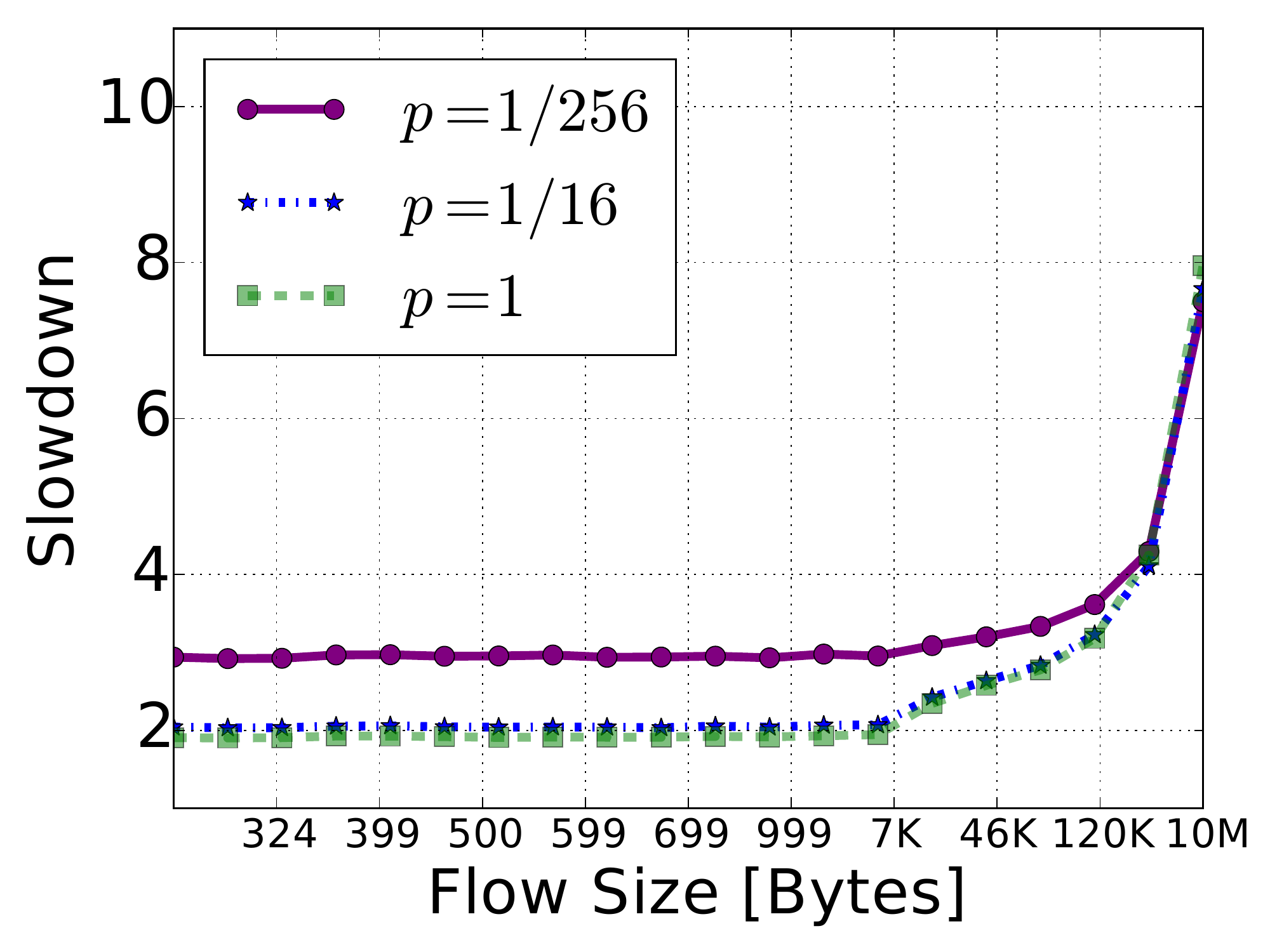}}
\hspace*{-4mm}

\caption{\small \changed{The 95th-percentile slowdown of running \sys-based HPCC (at 50\% network load) on $p$-fraction of the packets. On both workloads, the performance of running it on $1/16$ of the packets produces similar results to running it on all.}}
\label{fig:probPint}

\end{figure}
\else
\begin{figure*}
\centering
\subfigure[\changed{Web search workload}]{\label{fig:probHPCC}
\includegraphics[width=0.49\linewidth]{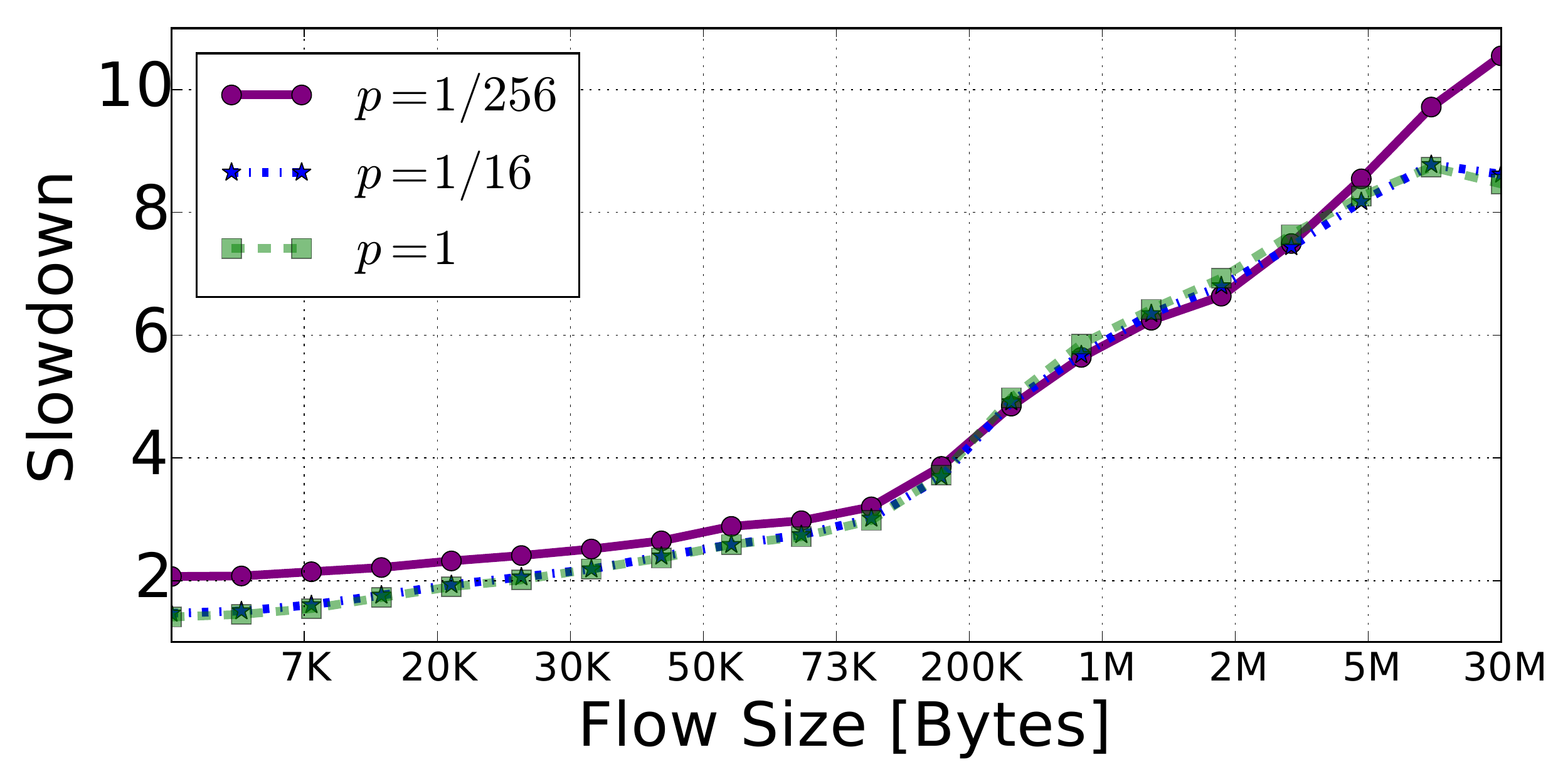}
}
\subfigure[\changed{Hadoop workload}]{\label{fig:probHPCCHadoop}\includegraphics[width=0.49\linewidth]{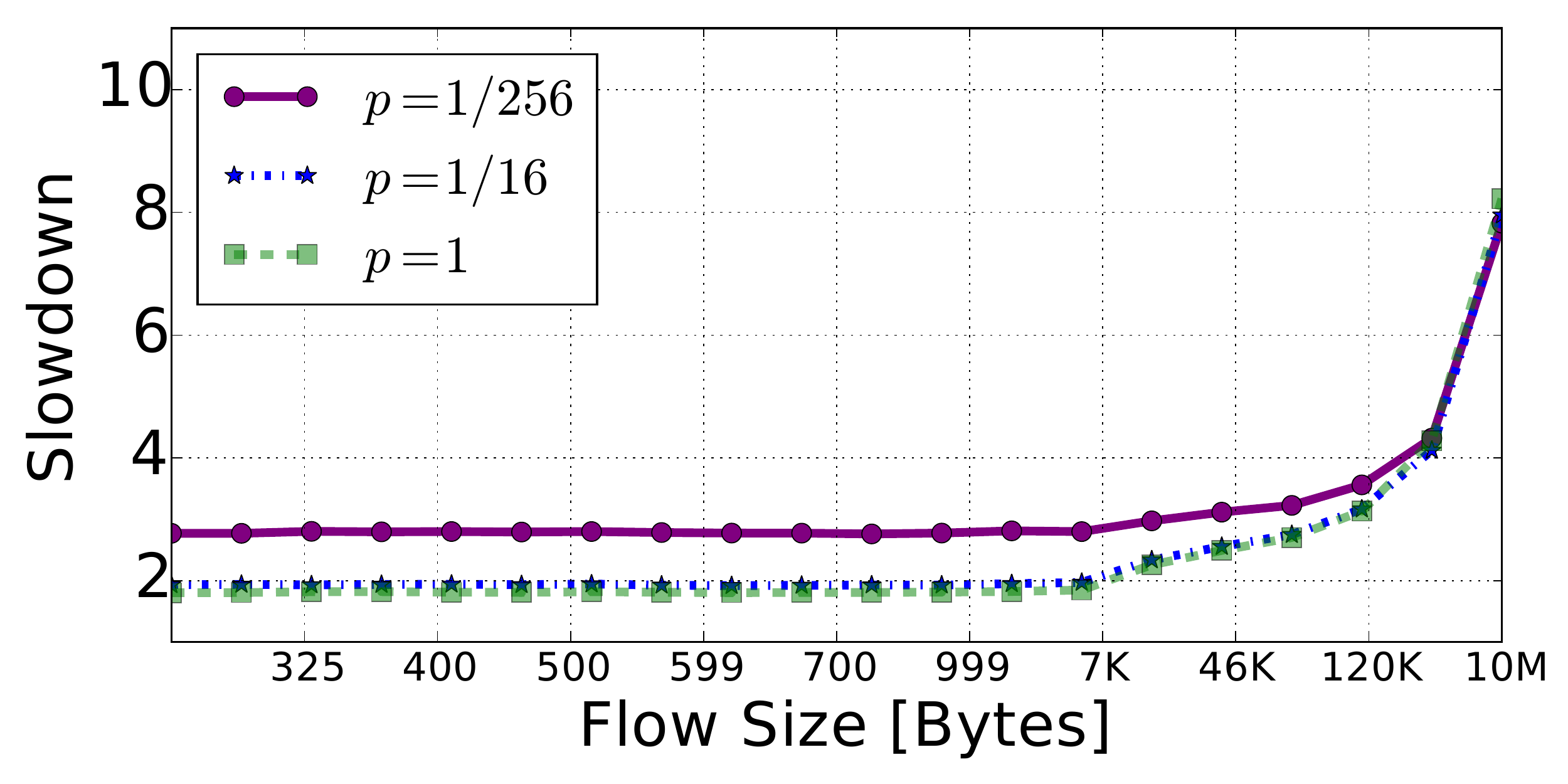}}
\caption{\small \changed{The 95th-percentile slowdown of running congestion control on $p$-fraction of the packets. On both workloads, the performance of running it on $1/16$ of the packets produces similar results to running PINT on all.}}
\label{fig:probPint}
\end{figure*}
\fi

The results, depicted in~\cref{fig:HPCCvsPINT-wb} and \cref{fig:HPCCvsPINT-fb}, show that \sys has similar performance \changed{(in terms of slowdown)} to HPCC, despite using just $8$ bits per packet. 
\changed{Here, \emph{slowdown} refers to the ratio between the completion time of the flow in the presence of other flows and alone.} 
Specifically, \sys has better performance on long flows while slightly worse performance on short ones. The better performance on long flows is due to \sys's bandwidth saving. 
\changed{\cref{fig:goodput_vs_load} shows the relative goodput improvement,  averaged over all flows over 10MB, of using \sys at different network load. At higher load, the byte saving of \sys brings more significant improvement. For example, at 70\% load, using \sys improves the goodput by 71\%. This trend aligns with our observation in \S \ref{sec:motiv}.}

To evaluate how the congestion control algorithm would perform alongside other queries, we experiment in a setting where only a 
$p=1,1/16,1/256$ 
fraction of the packets carry the query's digest. \changed{As shown in~\cref{fig:probHPCC} and~\cref{fig:probHPCCHadoop}}, the performance 
only slightly degrades for $p=1/16$. \changed{This is expected, because the bandwidth-delay product (BDP) is 150 packets, so there are still 9.4 ($\approx$150/16) packets per RTT carrying feedback. Thus the rate is adjusted on average once per 1/9.4 RTT (as compared to 1/150 RTT with per-packet feedback), which is still very frequent.}
\changed{\sout{This suggests that congestion control algorithms may not need per-packet feedback and that multiple feedback results per sending window are enough. }}
With $p=1/256$, the performance of short flows degrades significantly, \changed{because it takes longer than an RTT to get feedback.} \changed{\sout{ This is because there are at most 150 packets in an RTT, so $p=1/256$ may give no feedback for an RTT.}} The implication is that congestion caused by long flows is resolved slowly, so the queue lasts longer, resulting in higher latency for short flows. 
\changed{The very long flows (e.g., > 5MB) also have worse performance. The reason is that they are long enough to collide with many shorter flows, so when the competing shorter flows finish, the long flows have to converge back to the full line rate. With $p=1/256$, it takes much longer time to converge than with smaller $p$.}

\changed{In principle, the lower feedback frequency $p$ only affects the convergence speed as discussed above, but not the stability and fairness. Stability is guaranteed by no overreaction, and HPCC's design of reference window (constant over an RTT) provides this regardless of $p$.
Fairness is guaranteed by additive-increase-multiplicative-decrease (AIMD), which is preserved regardless of $p$.
}

\begin{figure}
\centering

\includegraphics[width=1.01\columnwidth]{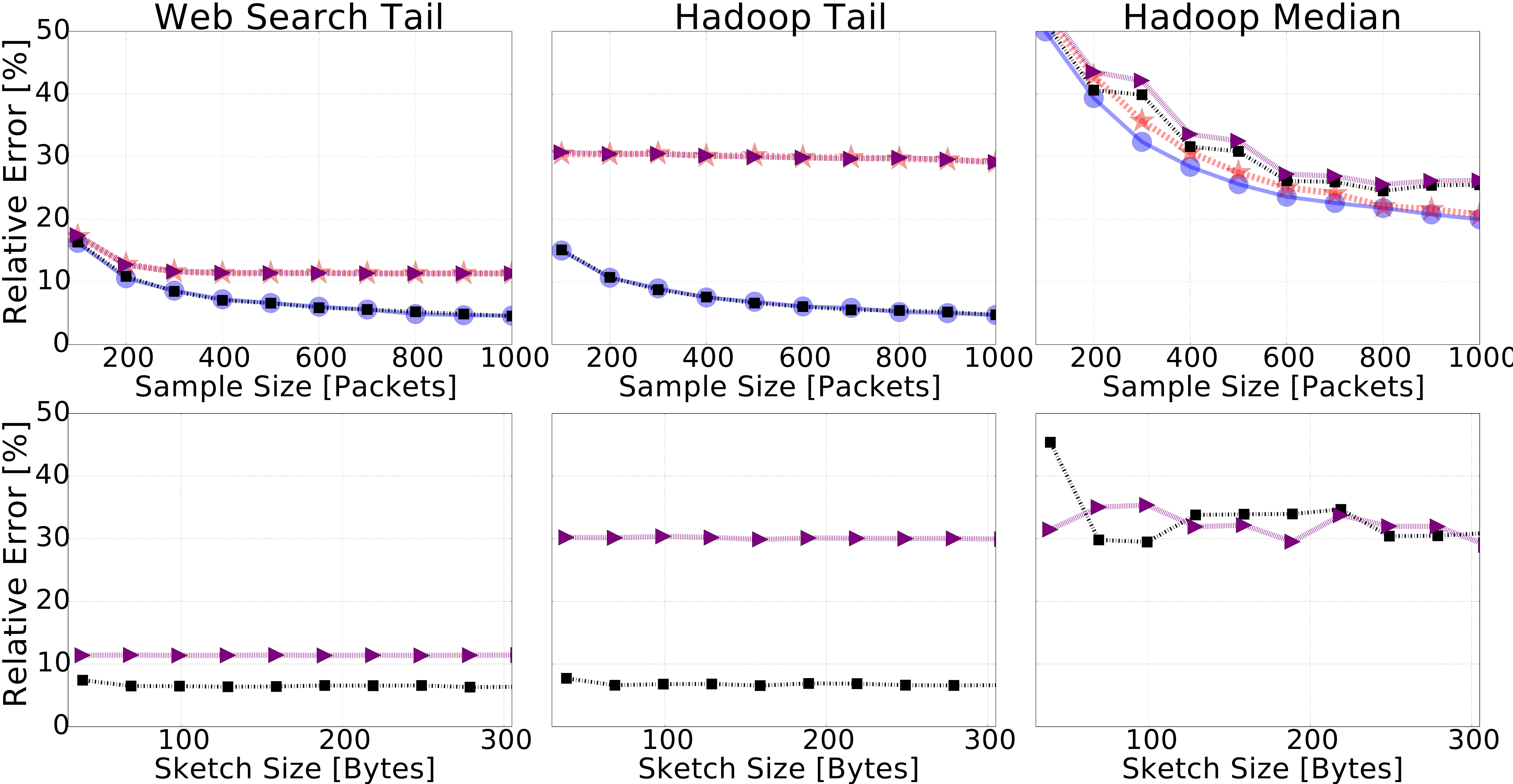}
\includegraphics[width=1.01\columnwidth]{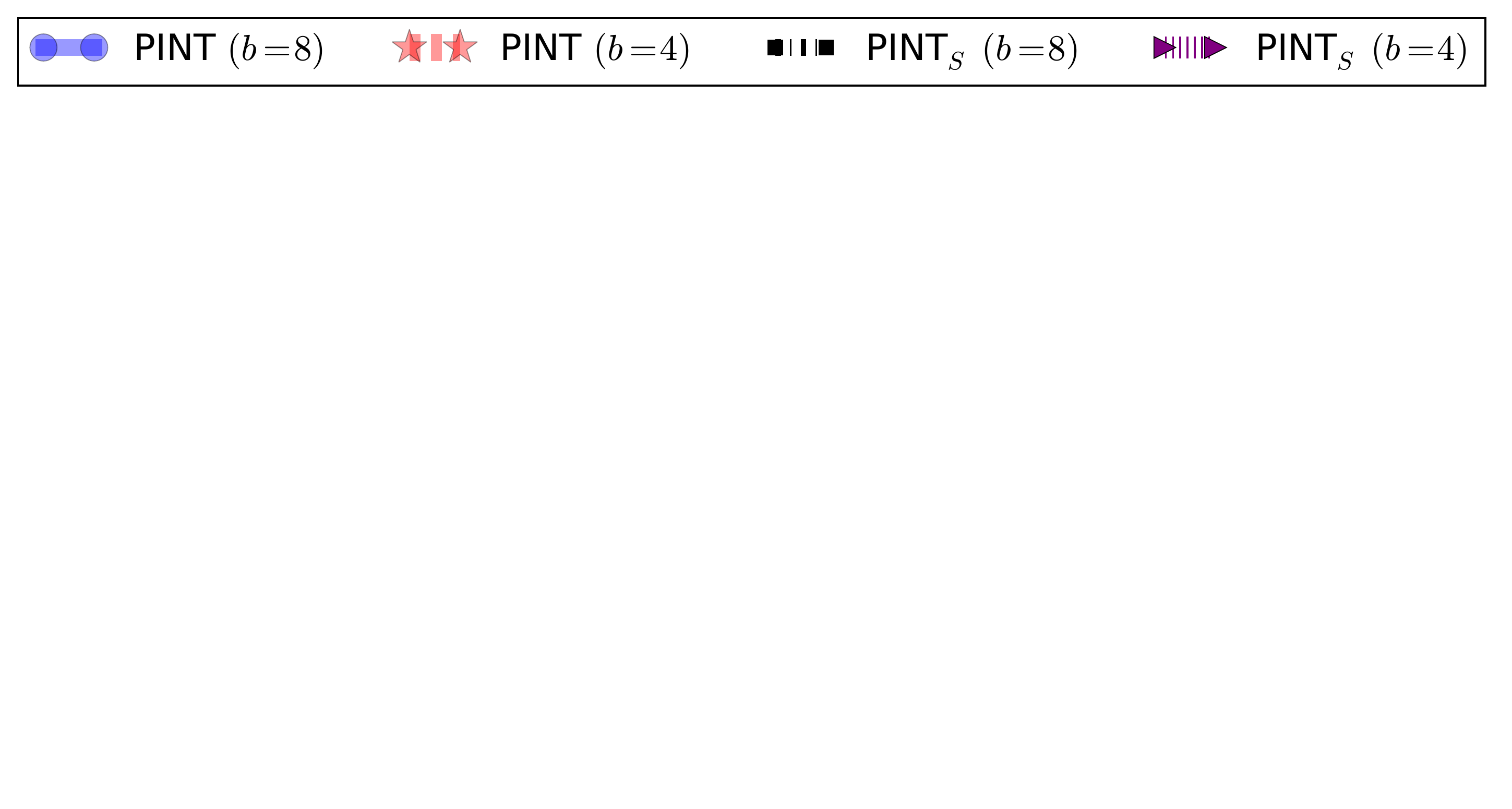}
\vspace*{-48mm}
\caption{\small \changed{\sys error on estimating latency quantiles with a sketch ({\normalfont PINT$_S$}) and without. In the first row, the sketch has $100$ digests; in the second, the sample has $500$ packets.}}
\label{fig:latency}
\end{figure}

\begin{figure*}
\centering
\subfigure[Kentucky Datalink ($D=59$)]{\label{fig:d59_avg}\includegraphics[width=.33\linewidth]{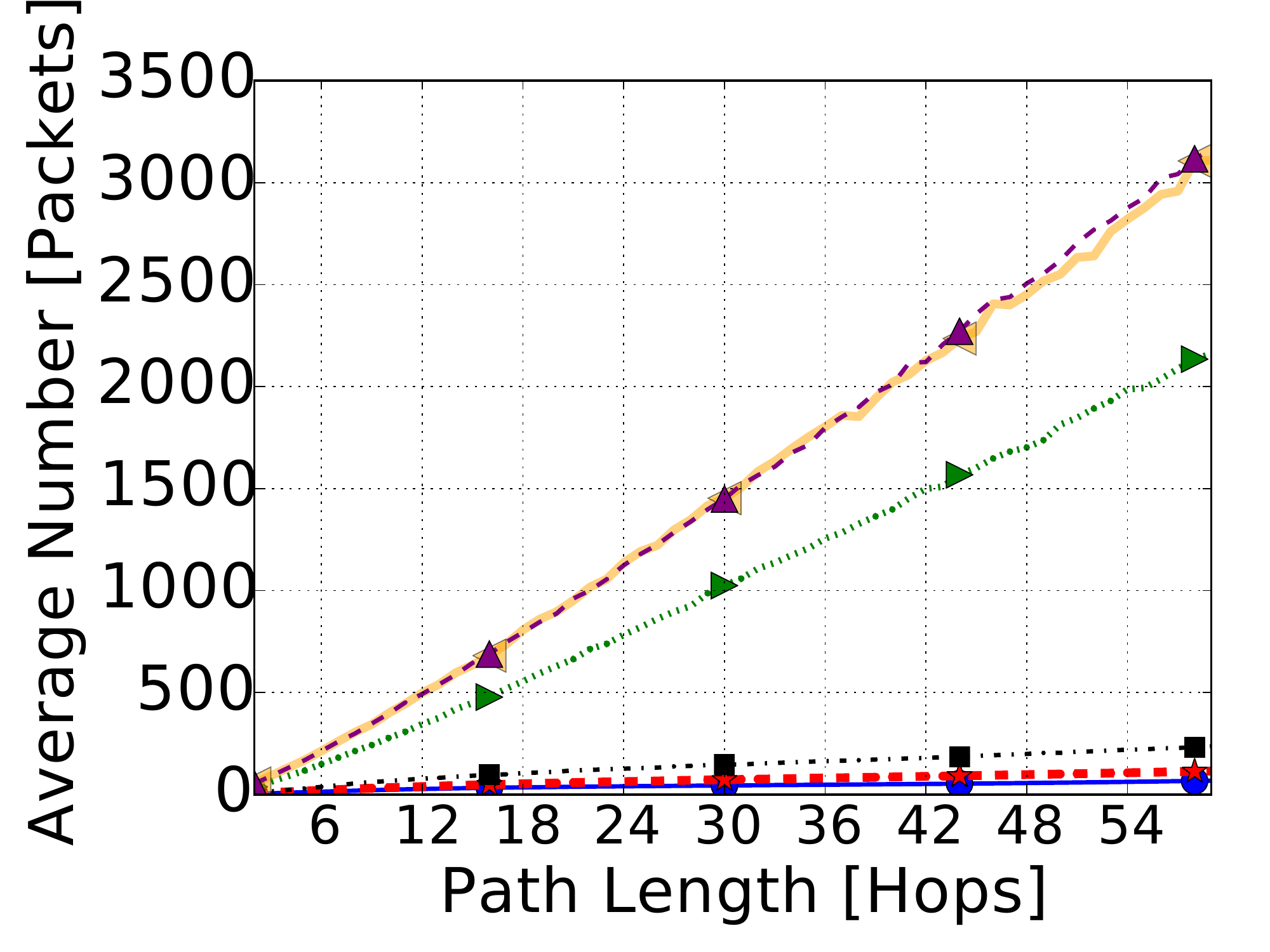}}
\subfigure[US Carrier ($D=36$)]{\label{fig:d36_avg}\includegraphics[width=.33\linewidth]{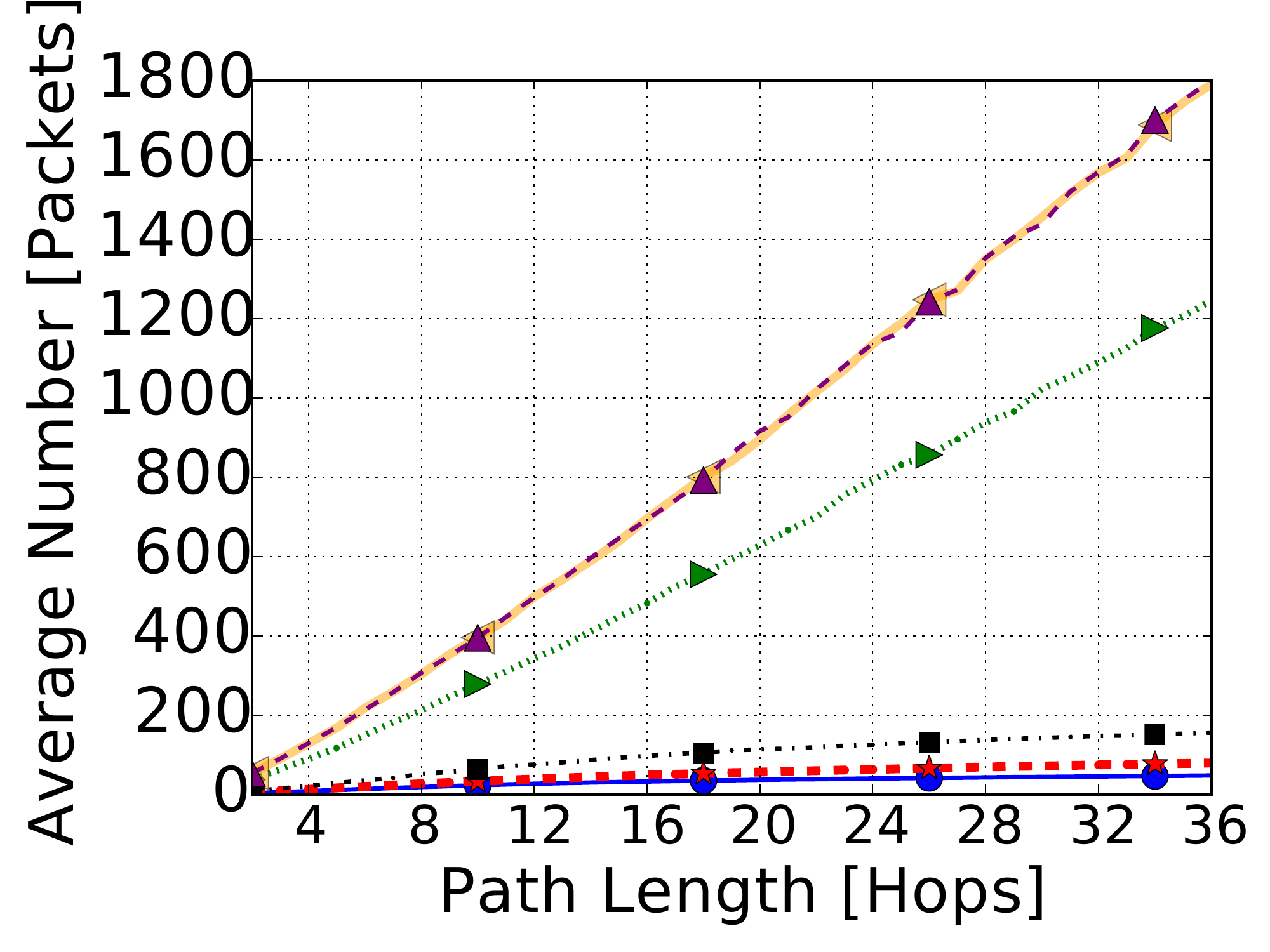}}
\subfigure[Fat Tree ($D=5$)]{\label{fig:ft_avg}\includegraphics[width=.33\linewidth]{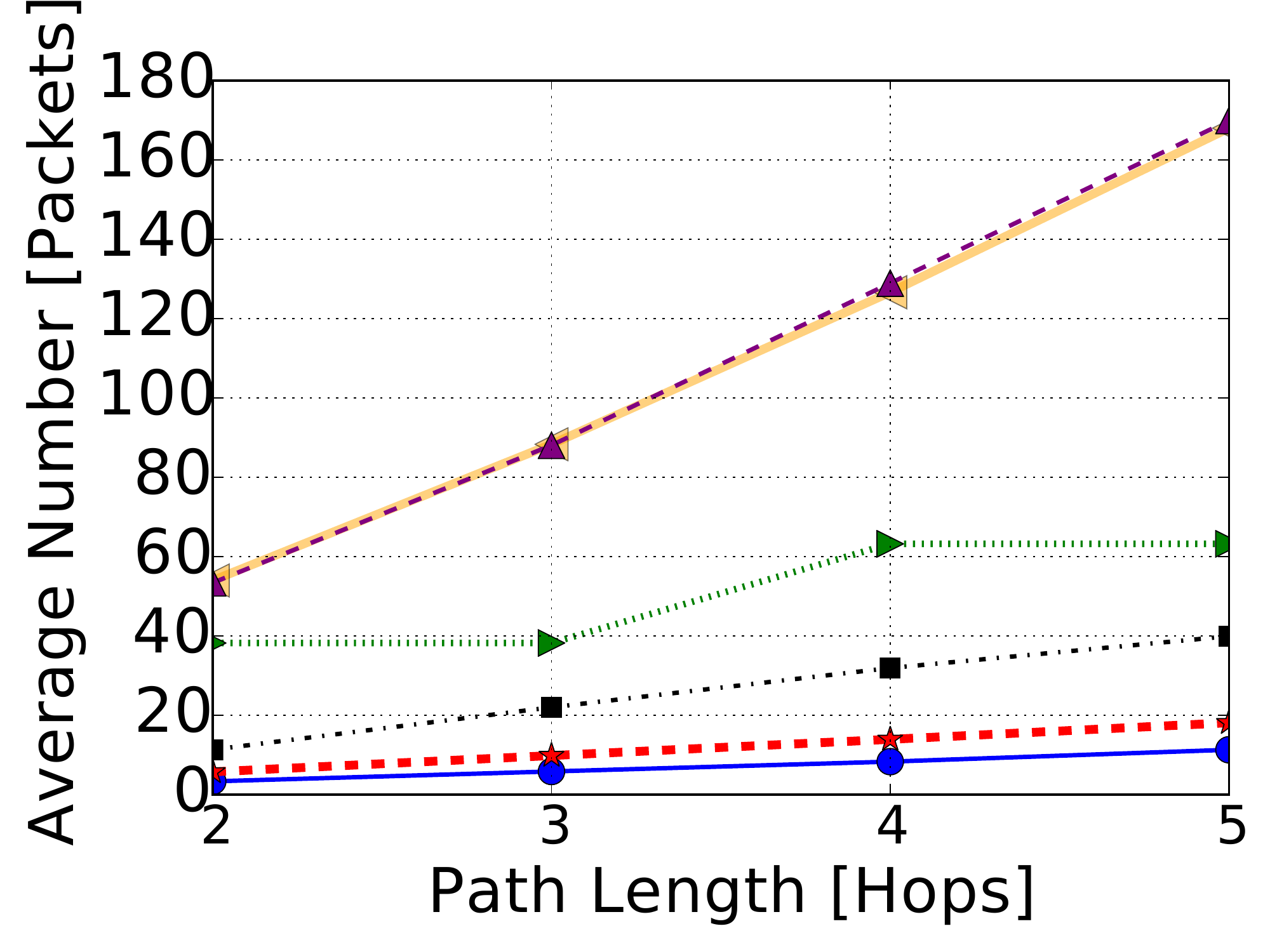}}\\\hspace*{19mm}
\label{fig:path_legend}\includegraphics[width=0.58\linewidth]{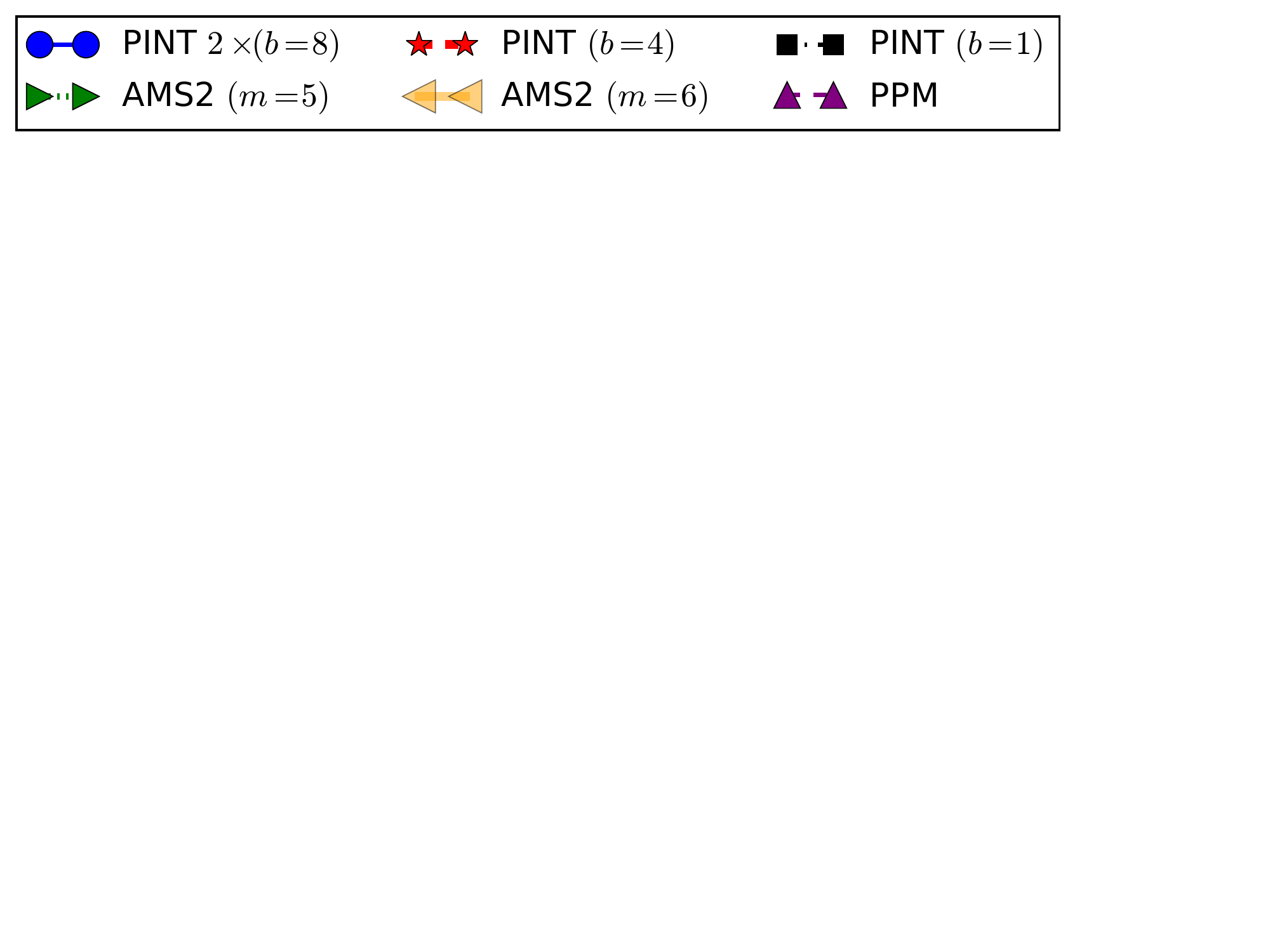}\\
\vspace*{-68mm} 
\subfigure[Kentucky Datalink ($D=59$)]{\label{fig:d59_tail}\includegraphics[width=.33\linewidth]{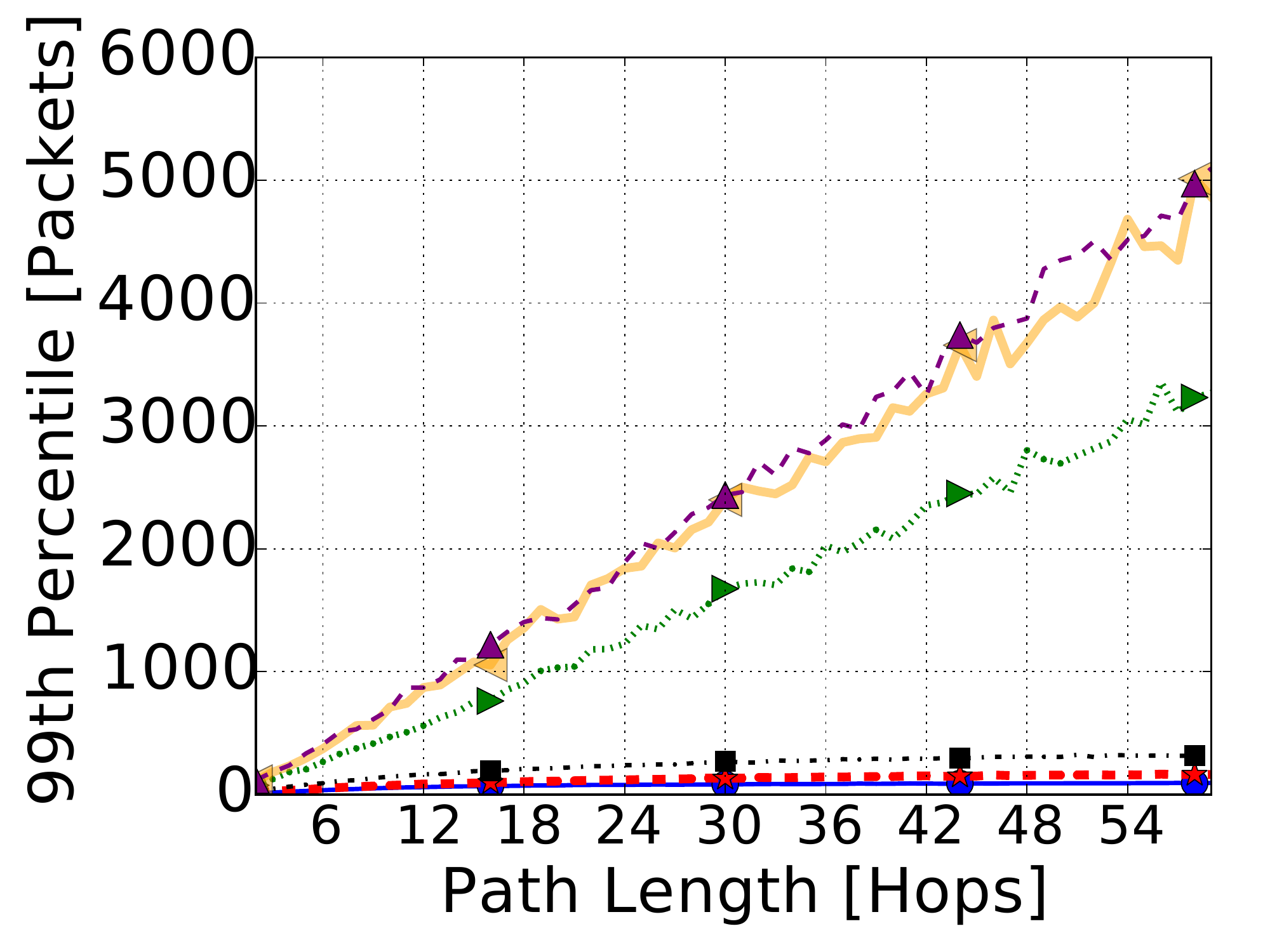}}
\subfigure[US Carrier ($D=36$)]{\label{fig:d36_tail}\includegraphics[width=.33\linewidth]{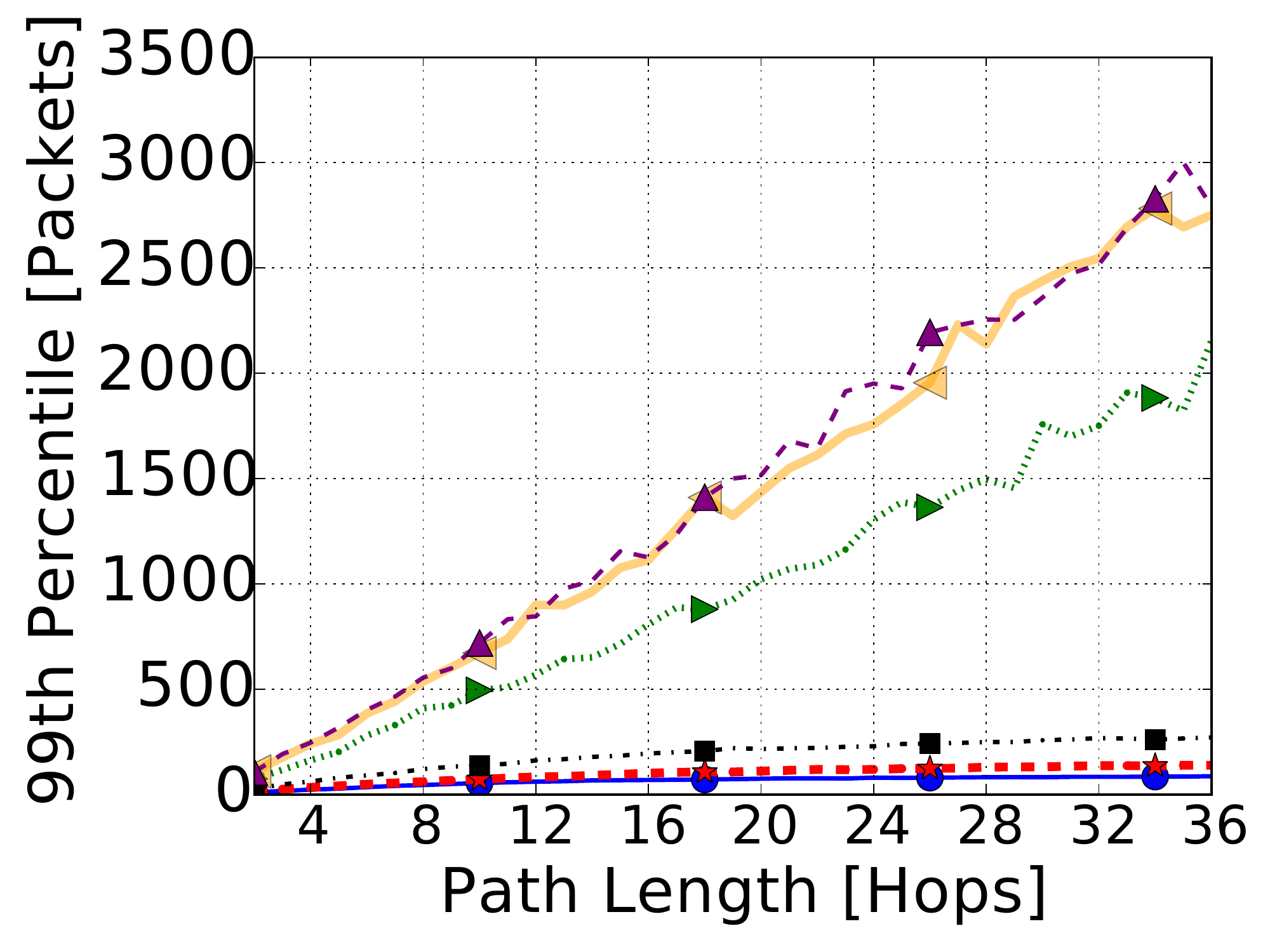}}
\subfigure[Fat Tree ($D=5$)]{\label{fig:ft_tail}\includegraphics[width=.33\linewidth]{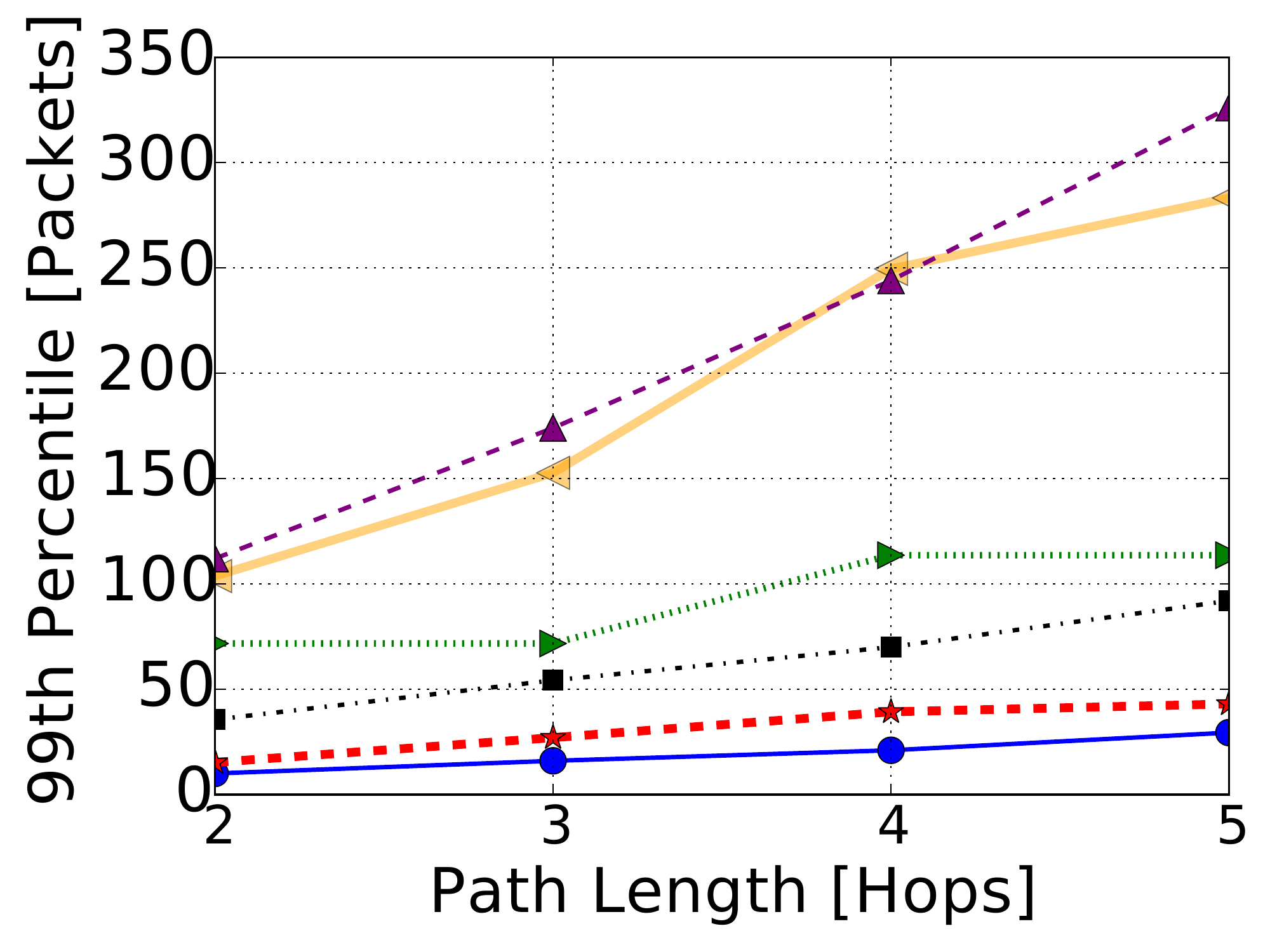}}

\caption{\small Comparison of the number of packets required (lower is better) for path decoding of different algorithms, including \sys \mbox{with varying bit-budget.}}
\label{fig:pathEval}

\end{figure*}

\subsection{Latency Measurements}
\changed{\sout{We set up a Clos topology in Mininet with 4 spine switches, 16 leaf switches, and 20 hosts in each leaf switch. We run the Facebook traces~\cite{roy15} on this topology and measure the accuracy of estimating the median and tail latencies for each (switch, flow) pair.}}

\changed{Using the same topology and workloads as in our congestion control experiments, we evaluate \sys's performance on estimating latency quantiles.}
We consider \sys in \changed{four} scenarios,  using $b=4$ and $b=8$ bit-budgets\changed{, with sketches (denoted \sys{}$_S$), and without}. 
\changed{
In our experiment, we have used the, state of the art, KLL sketch~\cite{KLL16}.}
The results, appearing in~\cref{fig:latency}, show that when getting enough packets, the error of the aggregation becomes stable and converges to the error arising from compressing the values.
\changed{\sout{For obtaining tail and median latency within 10\% error, \sys with a $b=8$ bit-budget requires roughly 25 and 100 packets, respectively.}}
\changed{As shown, by compressing the incoming samples using a sketch (e.g., that keeps $100$ identifiers regardless of the number of samples), \sys accuracy degrades only a little even for small $100$B sketches. We conclude that such sketches offer an attractive space \mbox{to accuracy tradeoff.} 
}

\subsection{Path Tracing}
We conduct these experiments on Mininet~\cite{MininetA59:online} using two large-diameter (denoted $D$) ISP topologies (Kentucky Datalink and US Carrier) from Topology Zoo~\cite{knight2011internet} and a ($K=8$) Fat Tree topology. \changed{The Kentucky Datalink topology consisted of 753 switches with a diameter of 59 and the US carrier topology consisted of 157 switches with a diameter of 36.} 
For each topology and every path, we estimate the average and 99'th percentile number of packets needed for decoding over 10K runs. We consider three variants of \sys -- using 1-bit, 4-bit, and two independent 8-bit hash functions (denoted by $2\times(b=8)$). We compare \sys to two state-of-the-art IP Traceback solutions PPM~\cite{savage00} and AMS2~\cite{song2001advanced} with $m=5$ and $m=6$. When configured with $m=6$, AMS2 requires more packets to infer the path but also has a lower chance of false positives (multiple possible paths) compared with $m=5$.
We implement an improved version of both algorithms using Reservoir Sampling, \mbox{as proposed in~\cite{sattari2007revisiting}.}
\sys is configured with $d=10$ on the ISP topologies and $d=5$ (as this is the diameter) on the fat tree topology. In both cases, this means a single XOR layer
in addition to a Baseline layer.

The results (\cref{fig:pathEval}) show that \sys significantly outperforms previous works, even with a bit-budget of a single bit (PPM and AMS both have an overhead of $16$ bits per packet). As shown, the required number of packets for \sys grows near-linearly with the path length, validating our theoretical analysis. For the Kentucky Datalink topology ($D=59$), \sys with $2 \times (b=8)$ on average uses 25--36 times fewer packets when compared to competing approaches. Even when using \sys with $b=1$, \sys needs 7--10 times fewer packets than competing approaches.
For the largest number of hops we evaluated ($59$, in the Kentucky Datalink topology), \sys requires only 42 packets on average and 94 for the 99'th percentile, while alternative approaches need at least 1--1.5K on average and 3.3--5K for 99'th percentile, respectively.

\begin{figure}[b]
        \centering
        \includegraphics[width=\linewidth]{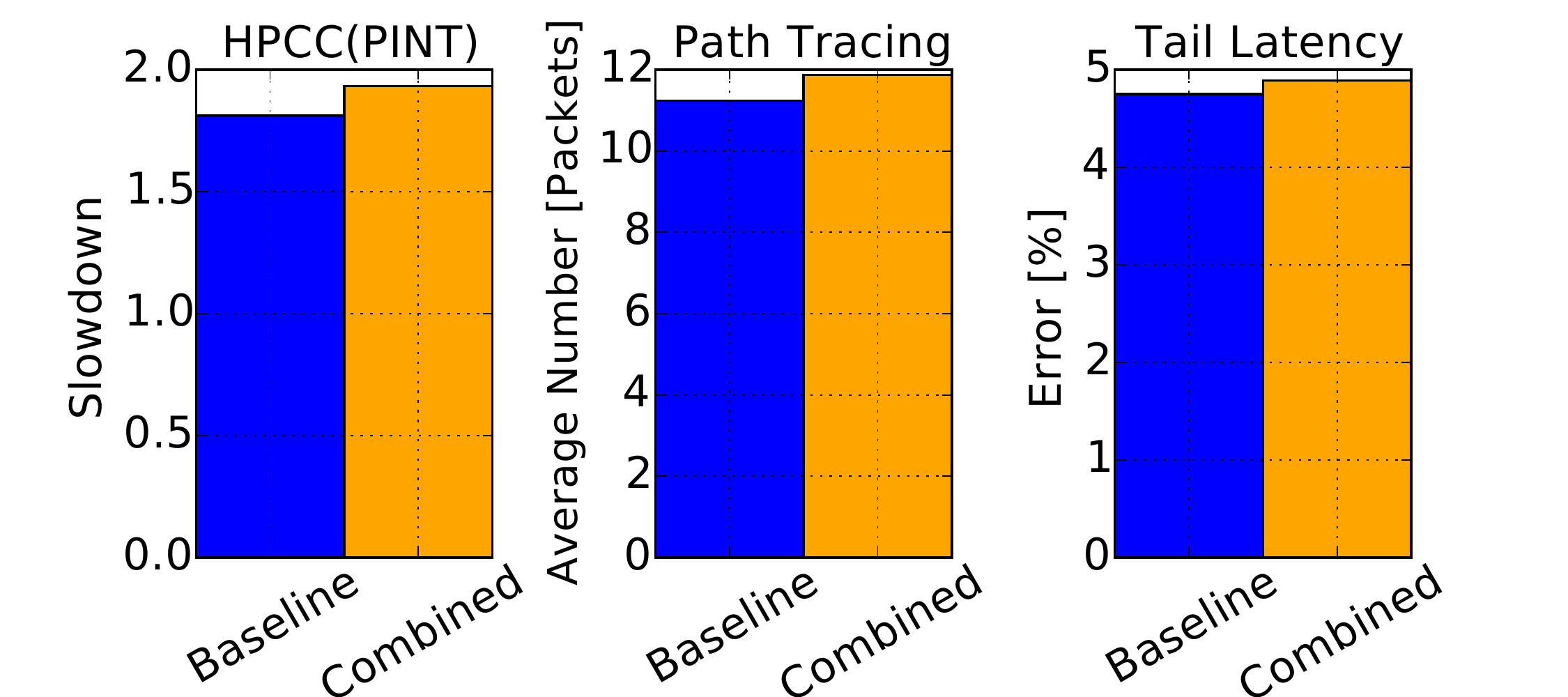}

        \caption{\small
        {The performance of each query in a concurrent execution (FatTree topology + Hadoop workload) compared to running it alone.}
        }

        \label{fig:Combined}
        
\end{figure}

\subsection{Combined Experiment}
We test the performance of \sys when running all three use cases concurrently. Based on the previous experiments, we tune \sys to run each query using a bit budget of $8$ bits and a global budget of $16$ bits.
\changed{Our goal is to compare how \sys performs in such setting, compared with running each application alone using $16$ bits per packet (i.e., with an effective budget of $3\times 16$ bits).}
That is, each packet can carry digests of two of the three concurrent queries.  As we observe that the congestion control application has good performance when running in $p=1/16$ of the packets, and the path tracing requires more packets than the latency estimation, we choose the following configuration.
We run the path algorithm on all packets, alongside the latency algorithm in $15/16$ of the packets, and alongside HPCC in $1/16$ of the packets.  As~\cref{fig:Combined} shows, the performance of \sys is close to a Baseline of running each query separately. For estimating median latency, the relative error increases by only 0.7\% from the Baseline to the combined case. In case of HPCC, we that observe short flows become 6.6\% slower while the performance of long flows does not degrade. As for path tracing, the number of packets increases by 0.5\% compared with using two $8$ bit hashes as in \cref{fig:pathEval}. We conclude that, with a tailored execution plan, our system can support these multiple concurrent telemetry queries using an \mbox{overhead of just two bytes per packet.}

\changed{\section{Limitations}
In this section, we discuss the limitations associated with our
probabilistic approach. The main aspect to take into consideration
is the required per-packet bit-budget and the network diameter.
The bigger overhead allowed and the smaller the network,
the more resilient \sys will be in providing results in different
scenarios.}

\noindent\changed{\textbf{Tracing short flows.}
\sys leverages multiple packets from the same flow to infer its path.
In our evaluation (\S\ref{sec:eval}), we show that our solution needs significantly fewer packets when compared to competing approaches.
However, in data center networks, small flows can consist of
just a single packet~\cite{alizadeh10}. In this case, \sys is not
effective and a different solution, such as INT, \mbox{would provide the
required information.}}

\noindent\changed{\textbf{Data plane complexity.}
Today's programmable switches have a limited number of pipeline stages.
Although we show that it is possible to parallelize the processing
of independent queries (\S\ref{sec:impl}), thus saving resources,
the \sys requirements might restrict the amount of additional use cases
to be implemented in the data plane, e.g., fast reroute~\cite{chiesa19} or 
in-network caching~\cite{jin17} and load balancing~\cite{alizadeh14,katta16}.}

\noindent\changed{\textbf{Tracing flows with multipath routing.}
The routing of a flow may change over time (e.g., when using flowlet
load balancing~\cite{alizadeh14,katta16}) or multiple paths can be
taken simultaneously when appropriate transport protocols such
as Multipath TCP are used~\cite{rfc6824}. In those cases, the values 
(i.e, switch IDs) for some hops will be different. Here, \sys can 
detect routing changes when observing a digest that is not consistent 
with the part of the path inferred so far. For example, if we know that 
the sixth switch on a path is $M_6$, and a Baseline packet $p_j$ comes 
with a digest from this hop that is different than $h(M_6,p_j)$, then 
we can conclude that the path has changed. The number of packets needed 
to identify a path change depends on the fraction of the path that has 
been discovered. If path changes are infrequent, and \sys knows the 
entire path before the change, a Baseline packet will not be consistent 
with the known path (and thus signify a path change) with probability 
$1-2^{-\mathfrak q}$. Overall, in the presence of flowlet routing, \sys can still 
trace the path of each flowlet, provided enough packets for 
each flowlet-path are received at the sink. \sys can also 
profile all paths simultaneously at the cost of additional overhead 
(e.g., by adding a path checksum to packets we can associate each with 
the path it followed).}

\noindent\changed{\textbf{Current implementation.} At the time of writing, 
the \sys execution plan is manually selected. We envision that an end to 
end system that implements \sys would include a Query Engine that automatically 
decides how to split the bit budget. }

\section{Related Work}
Many previous works aim at improving data plane visibility.
Some focus on specific flows selected by operators~\cite{narayana16,tilmans18,zhu15} 
or only on randomly selected sampled flows~\cite{duffield01,AROMA}. Such approaches are insufficient for applications that need global visibility on all flows, such as path tracing. Furthermore, the flows of interest may not be known in advance, if we wish to debug high-latency \mbox{or malicious flows.}

Other works can be classified into three main approaches: (1) keep information out-of-band; (2) keep flow state at switches; or (3) keep information on packets. The first approach applies when the data plane status is recovered by using packet mirroring at switches or by employing specially-crafted probe packets.
Mirroring every packet creates scalability concerns for both trace collection and analysis. The traffic in a large-scale data center network with hundreds of thousands of servers can quickly introduce terabits of mirrored traffic~\cite{roy15,guo15}.
Assuming a CPU core can process tracing traffic at 10 Gbps, thousands of cores would be required for trace analysis~\cite{zhu15}, which is prohibitively expensive. Moreover, with mirroring it is not possible to retrieve information related to switch status, such as port utilization or queue occupancy, that are of paramount importance for applications such as congestion control or 
network troubleshooting. While such information can be retrieved with specially-crafted probes~\cite{tan2019netbouncer}, the feedback loop may be too slow for applications like high precision congestion control~\cite{li19}. 
We can also store flow information at switches and periodically export it to a collector~\cite{snoeren01,li16}. However, keeping state for a large number of active 
flows (e.g., up to 100K~\cite{roy15}), in the case of path tracing, is challenging for limited switch space (e.g., 100 MB~\cite{miao17}). This is because operators need the memory for essential control functions such as ACL rules, customized forwarding~\cite{sivaraman15}, and other network functions and applications~\cite{miao17, jin17}.
Another challenge is that we may need to export data plane status frequently (e.g., every 10 ms) to the collector, if we want to enable applications such as congestion control. This creates significant \mbox{bandwidth and processing overheads~\cite{li16}.}

Proposals that keep information on packets closely relate to this work~\cite{int,jeyakumar14,tammana16}, with INT being considered the state-of-the-art solution. 
\changed{Some of the approaches, e.g., Path Dump~\cite{tammana16}, show how to leverage properties of the topology to encode only part of each path (e.g., every other link). Nonetheless, this still imposes an overhead that is linear in the path length, while \sys keeps it constant.}
\changed{Alternative approaches add small digests to packets for tracing paths~\cite{sattari2010network,savage00,song2001advanced}. However, they attempt to trace back to potential attackers (e.g., they do not assume unique packet IDs or reliable TTL values as these can be forged) and require significantly more packets for identification, as we show in~\cref{sec:eval}}.
In a recent effort to reduce overheads
on packets, similarly to this work, Taffet et al.~\cite{taffet19} propose having switches use Reservoir Sampling to collect information about a packet's path and congestion that the packet encounters as it passes through the network. \sys takes the process several steps further, including approximations and coding (XOR-based or network coding) to reduce the cost of adding information to packets as much as 
possible. Additionally, our work rigorously proves performance bounds on the number of packets required to recover the data plane status as well as proposes trade-offs between data size and time to recover.

\section{Conclusion}
We have presented \sys, a probabilistic framework to in-band telemetry that provides similar visibility to INT while bounding the per-packet overhead to a user-specified value. 
This is important because overheads imposed on packets translate to inferior flow completion time and application-level goodput. 
We have proven performance bounds (deferred to \cref{app:analysis} due to lack of space) for \sys and have implemented it in P4 to ensure it can be readily deployed on commodity switches.
\changed{\sys goes beyond optimizing INT by removing the header and using succinct switch IDs by restricting the bit-overhead to a constant that is independent of the path length.}
We have discussed the generality of \sys and demonstrated its performance on three specific use cases: path tracing, data plane telemetry for congestion control and estimation of experienced median/tail latency. Using real topologies and traffic characteristics, we have shown that \sys enables the use cases, while drastically decreasing the required overheads on packets with respect to INT.

\vspace{0.04in} 
\changed{\parab{Acknowledgements.}
We thank the anonymous reviewers, Jiaqi Gao, Muhammad Tirmazi, and our shepherd, Rachit Agarwal, for helpful comments and feedback. 
This work is partially sponsored by EPSRC project EP/P025374/1, by NSF grants \#1829349, \#1563710, and \#1535795, and \mbox{by the Zuckerman Foundation.} \\
\ \vspace*{0.08in}\hspace*{-1mm}
}

    \bibliographystyle{ACM-Reference-Format}
    \clearpage
    \bibliography{references}
    
    \ifdefined\fullVersion
    \clearpage
    \appendix
    \section{Analysis}\label{app:analysis}
\subsection{Dynamic per-flow Aggregation}\label{app:dynamic}
We now survey known results that \cref{thm:dynamicQuantiles} and \cref{thm:dynamicFrequent} are based on.

\parab{Quantiles.}
Classical streaming results show that by analyzing a uniformly selected subset of $O(\epsilon_s^{-2}\log\epsilon_s^{-1})$ elements, one can estimate all possible quantiles~\cite{manku1998approximate,vapnik2015uniform} to within an additive error of $\epsilon_s$. It is also known that if one is interested in a specific quantile (e.g., median), a subset size \mbox{of $O(\epsilon_s^{-2})$ is enough.}

In our case, for each switch $s_i$ through which flow $x$ is routed, we get a sampled substream of $S_{i,x}$ where each packet carries a value from it with probability $1/k$. This is not a fixed-size subset, but a \emph{Bernouli sample}. Nevertheless, Felber and Ostrovsky show that a Bernouli sample with the same \emph{expected size} is enough~\cite{felber2017randomized}.
Therefore, for a specific quantile (e.g., median) we need to get $O(\epsilon_s^{-2})$ samples for each of the $k$ switches on the path. 
Using a standard Chernoff bound argument, we have that if $z=O(k\epsilon_s^{-2})$ packets reach the \sys sink, \emph{all} hops on the path will get at least $O(\epsilon_s^{-2})$ samples with probability $1-e^{-\Omega(z/k)} = 1-e^{-\Omega(\epsilon_s^{-2})}$. 

To compress the amount of per-flow storage needed for computing the quantiles, we can use a $\widetilde O(\epsilon_a^{-1})$ space sketch such as KLL~\cite{KLL16}.
We run separate sketch for each of the $k$ hops, thus needing $\widetilde O(k\epsilon_a^{-1})$ per-flow storage in total. The resulting error would be $\epsilon=\epsilon_s+\epsilon_a$, as the sampling adds an additive error of $\epsilon_s$ and the sketching an additive error of $\epsilon_a$.

\parab{Frequent Values.}
Using a standard Chernoff bound argument, one can use a $O(\epsilon_s^{-2})$-sized substream of $S_{i,x}$, one can estimate the fraction in which each specific value appears, up to an additive error of $\epsilon_s$. We can then use a heavy hitters algorithm like Space Saving~\cite{SpaceSavings} to estimate the frequency of values in the sampled substream to within an additive error of $\epsilon_a$, using $O(\epsilon_a^{-1})$ space. 
As before, to get the correct estimations for all hops, we need a factor $k$ multiplicative overhead to both the number of packets and space.

\subsection{Static per-flow Aggregation}\label{app:static}
Before we can analyze the algorithm (\S\ref{app:actualStaticApp}), we start with some auxiliary results.
\subsubsection{Auxiliary Results}
The first lemma gives a bound on how many independent coins with probability $p$ we need to flip until we get $k$ successes.
\begin{lemma}\label{lem:negativeBinomialBound}
Let $k\in\mathbb N$ and $p,\delta\in (0,1)$. \\Denote $N=\frac{k+2\ln\delta^{-1}+\sqrt{2k\ln\delta^{-1}}}{p}$ and let $X\sim\mathit{Bin}(N,p)$. Then
$$
\Pr[X \le k]\le \delta.
$$
\end{lemma}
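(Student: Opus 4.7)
The plan is to apply the standard additive Chernoff lower‐tail bound for a Binomial and then verify that the choice of $N$ makes the resulting exponent at most $\ln\delta^{-1}$.

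First I would set $L \triangleq \ln\delta^{-1}$ and write $\mu \triangleq \mathbb{E}[X] = Np = k + 2L + \sqrt{2kL}$. Observe that $\mu - k = 2L + \sqrt{2kL} \triangleq t > 0$, so $\Pr[X\le k] = \Pr[X \le \mu - t]$. At this point I invoke the standard Chernoff lower tail,
\[
\Pr[X \le \mu - t] \;\le\; \exp\!\left(-\frac{t^{2}}{2\mu}\right),
\]
which holds for any sum of independent $\{0,1\}$ random variables. It then suffices to show $t^{2} \ge 2\mu L$, since that gives $\exp(-t^{2}/(2\mu)) \le e^{-L} = \delta$.

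Next I would just expand both sides. We have
\[
t^{2} = (2L + \sqrt{2kL})^{2} = 4L^{2} + 4L\sqrt{2kL} + 2kL,
\]
and
\[
2\mu L = 2L\bigl(k + 2L + \sqrt{2kL}\bigr) = 2kL + 4L^{2} + 2L\sqrt{2kL}.
\]
Subtracting gives $t^{2} - 2\mu L = 2L\sqrt{2kL} \ge 0$, which is exactly the inequality I need. Plugging this back yields $\Pr[X\le k] \le \delta$, completing the proof.

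The only place where care is required is the choice of tail bound: the quadratic form of $N$ with a $\sqrt{k\ln\delta^{-1}}$ term strongly suggests the additive Chernoff bound $\exp(-t^{2}/(2\mu))$ (rather than the multiplicative $(e^{\epsilon}/(1+\epsilon)^{1+\epsilon})^{\mu}$ form), because the algebra then collapses cleanly. I do not anticipate any real obstacle — everything reduces to an elementary comparison of the two quadratics in $L$ and $\sqrt{kL}$.
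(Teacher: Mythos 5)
Your proof is correct and follows essentially the same route as the paper's: both apply the Chernoff lower-tail bound $\Pr[X\le \mu-t]\le e^{-t^2/(2\mu)}$ (the paper writes it in the multiplicative form $e^{-\gamma^2\mu/2}$ with $t=\gamma\mu$) and then check that the stated $N$ makes the exponent at most $-\ln\delta^{-1}$. Your verification that $t^2-2\mu L=2L\sqrt{2kL}\ge 0$ is a cleaner piece of algebra than the paper's, which instead solves the quadratic $x^2-x\sqrt{2L}-k\ge 0$ in $x=\sqrt{Np}$, but the substance is identical.
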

\begin{proof}
Using the Chernoff bound we have that for any $\gamma>0$:
$$
\Pr[X<\mathbb E[X](1-\gamma)]\le e^{-\gamma^2\mathbb E[X]/2}\ .
$$
We set $\gamma=\sqrt{\frac{2\ln\delta^{-1}}{Np}}$, which means that $\gamma^2\mathbb E[X]/2=\ln\delta$ and therefore $\Pr[X<\mathbb E[X](1-\gamma)]\le\delta$.

Finally,
\begin{multline*}
\mathbb E[X](1-\gamma) = Np(1-\gamma) = 
Np - \sqrt{2{Np}\ln\delta^{-1}}.
\end{multline*}
Denote $x=\sqrt{Np}$, then we want to show that
$$
x^2 - x\sqrt{2\ln\delta^{-1}}-k\ge 0,
$$
which holds for 
$$
x>\frac{\sqrt{2\ln\delta^{-1}}+\sqrt{2\ln\delta^{-1}+4k}}{2}
= \frac{\sqrt{\ln\delta^{-1}}+\sqrt{\ln\delta^{-1}+2k}}{\sqrt 2}.
$$
This gives
\begin{multline*}
N = x^2/p \ge \frac{\parentheses{\sqrt{\ln\delta^{-1}}+\sqrt{\ln\delta^{-1}+2k}}^2}{2p}\\
= \frac{k+2\ln\delta^{-1}+\sqrt{2k\ln\delta^{-1}}}{p}.
\end{multline*}
\end{proof}

The next theorem provide a high-probability bound on the Double Dixie Cup problem~\cite{DoubleDixie}. Specifically, consider trying to collect at least $Z$ copies from each of $k$ coupons, where at each stage you get a random coupon. The following bounds the number of samples you need.
\begin{theorem}\label{thm:NVal}
After seeing {\small $$N=\NVal$$}
samples, our algorithm has at least $Z$ copies of each of the $k$ coupons.
\end{theorem}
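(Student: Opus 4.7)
\textbf{Proof plan for Theorem~\ref{thm:NVal}.}
The plan is a classic union-bound-over-coupons combined with a concentration inequality for the Binomial, where the precise algebra matches the stated radical. Fix a coupon $c \in \set{1,\ldots,k}$ and let $X_c$ denote the number of times coupon $c$ has been drawn after $N$ samples. Since each sample independently equals $c$ with probability $1/k$, we have $X_c \sim \mathit{Bin}(N,1/k)$ with mean $\mu \triangleq N/k$. The event ``every coupon appears at least $Z$ times'' is exactly $\bigcap_{c=1}^k \set{X_c \ge Z}$, so a union bound gives
\[
\Pr\brackets{\min_{c} X_c < Z} \le \sum_{c=1}^k \Pr\brackets{X_c \le Z-1} \le k\cdot \Pr[X_1 \le Z-1].
\]
Thus it suffices to bound each marginal tail by $\delta/k$.

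Next, I would invoke Lemma~\ref{lem:negativeBinomialBound} (equivalently, the multiplicative Chernoff bound for the lower tail, $\Pr[X_1 \le (1-\gamma)\mu]\le e^{-\gamma^2\mu/2}$). Choosing $\gamma$ so that $(1-\gamma)\mu = Z-1$ transforms the requirement $e^{-\gamma^2\mu/2} \le \delta/k$ into $(\mu - (Z-1))^2 \ge 2\mu \ln(k/\delta)$, i.e.\ the quadratic inequality
\[
\mu^2 - 2\mu\bigl[(Z-1) + \ln(k/\delta)\bigr] + (Z-1)^2 \ge 0.
\]
The larger root of this quadratic provides the \emph{tight} sufficient condition $\mu \ge (Z-1) + \ln(k/\delta) + \sqrt{((Z-1)+\ln(k/\delta))^2 - (Z-1)^2}$. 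A direct comparison shows
\[
((Z-1)+\ln(k/\delta))^2 - (Z-1)^2 \;\le\; ((Z-1)+\ln(k/\delta))^2 - (Z-1)^2/4,
\]
so any $\mu$ satisfying the (looser) cutoff in the theorem statement also satisfies the tight Chernoff condition, and hence $\Pr[X_1 \le Z-1] \le \delta/k$.

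Finally, multiplying through by $k$ converts the condition on $\mu$ into the condition $N \ge k \cdot \bigl( Z-1 + \ln(k/\delta) + \sqrt{(Z-1+\ln(k/\delta))^2-(Z-1)^2/4}\bigr)$, which is exactly the value of $N$ stated in the theorem. Combining with the earlier union bound yields $\Pr[\min_c X_c \ge Z] \ge 1-\delta$, as required. The only technical obstacle is the algebra of the preceding paragraph, namely verifying that the closed-form expression the authors chose (with the convenient $-(Z-1)^2/4$ inside the radical) dominates the exact quadratic solution; this reduces to the elementary inequality $(Z-1)^2/4 \ge 0$ and so is immediate.
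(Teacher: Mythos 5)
Your proposal is correct and follows essentially the same route as the paper's proof: a union bound over the $k$ coupons combined with the multiplicative Chernoff lower tail for $\mathit{Bin}(N,1/k)$, with the same choice of $\gamma$ satisfying $(1-\gamma)\mu=Z-1$. Your final step is in fact slightly more careful than the paper's, which asserts an ``$\iff$'' between the Chernoff condition and $x\ge \xVal$ even though the exact larger root has $(Z-1)^2$ rather than $(Z-1)^2/4$ under the radical; you correctly observe that the stated $N$ dominates the exact threshold, which is all that is needed.
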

\begin{proof}
Therefore, the number of copies of the $i$'th coupon isa binomial random variable we denote by $Y_i\sim \mathit{Bin}(N,1/k)$.
Our goal is to show that getting $Y_i< Z$ is unlikely; to that end, we use the Chernoff inequality that states that $\Pr[Y_i\le \mathbb E[Y_i](1-\gamma)]\le e^{-\mathbb E[Y_i]\gamma^2/2}$ for any $\gamma\in(0,1]$. We set $\gamma=1-k(Z-1)/N$ to get
\begin{multline*}
    \Pr[Y_i< Z] = \Pr[Y_i\le Z-1] = \Pr[Y_i\le \mathbb E[Y_i](1-\gamma)]
    \\ \le e^{-\mathbb E[Y_i]\gamma^2/2} 
    = e^{-N/2k\cdot (1-k(Z-1)/N)^2}
    \\= e^{-(N/2k -(Z-1)+k(Z-1)^2/2N)}
    = e^{-(x -(Z-1)+(Z-1)^2/4x)},
\end{multline*}
where $x=N/(2k)$. We want $\Pr[X_i< Z]\le \delta/k$, which according to the above follows from
\begin{multline*}
    x -(Z-1)+(Z-1)^2/4x\ge \ln k/\delta \iff\\
     \mbox{\small $x\ge \xVal$.}
\end{multline*}
The last inequality follows directly from our choice of $N$. Finally, we use the union bound to get that after $N$ samples all coupons get at least $Z$ copies except with probability $k\cdot \Pr[Y_i< Z]\le \delta$.
\end{proof}
We proceed with a tail bound on the Partial Coupon Collector problem, in which we wish to get $N$ out of $r$ possible coupons, where at each timestamp we get a random coupon.
Our proofs relies on the following result for a sharp bound on the sum of geometric random variables:
\begin{theorem}\label{thm:geoSum}(\cite{janson2018tail})
Let $\{A_1,\ldots A_N\}$ be independent geometric random variables such that $A_i\sim Geo(p_i)$ and $p_1\ge\ldots\ge p_N$. Then the sum $A=\sum_{i=1}^N A_i$ satisfies:
$$
\Pr\brackets{A > \mathbb E[A]\cdot \lambda}\le e^{-p_N\mathbb E[A](\lambda-1-\ln\lambda)}.
$$
\end{theorem}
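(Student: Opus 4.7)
The plan is to prove the bound by the standard Chernoff/moment generating function method, with the twist that the relevant MGF parameter is capped by the smallest success probability $p_N$. First, by Markov's inequality applied to $e^{tA}$ and the independence of the $A_i$, for any admissible $t>0$:
$$\Pr[A > \lambda\,\mathbb E[A]] \le e^{-t\lambda\,\mathbb E[A]} \prod_{i=1}^{N}\mathbb E[e^{tA_i}].$$
Each factor is the geometric MGF $\mathbb E[e^{tA_i}] = \frac{p_i e^t}{1-(1-p_i)e^t}$, finite precisely when $e^t<(1-p_i)^{-1}$; this is most restrictive for $i=N$, so $t$ will be forced to be of order $p_N$.

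Next I would work with $\psi_i(t)\triangleq \log\mathbb E[e^{tA_i}]$ and show a tight upper bound of the form
$$\psi_i(t) \;\le\; \frac{1}{p_i}\,\varphi(p_N\text{-scaled variable in }t),$$
so that summing over $i$ produces a factor of $\sum_i 1/p_i = \mathbb E[A]$ times a function that depends only on $t$ and $p_N$. Concretely, I would substitute $u\triangleq (e^t-1)/p_N$, rewrite $1-(1-p_i)e^t = p_i - (1-p_i)(e^t-1)$, and verify the elementary inequality $\log\frac{p_i e^t}{1-(1-p_i)e^t} \le \frac{1}{p_i}\bigl(e^t-1-p_N\log\tfrac{1}{1-u}\bigr)$ or a variant of it; the point is that when one peels off the ``mean part'' $t/p_i$, what remains is controlled uniformly by $p_N$, because shrinking $p_i$ only makes the bound tighter while the admissibility constraint is already tightest at $p_N$.

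Combining the bounds, the log of the right-hand side becomes $-t\lambda\,\mathbb E[A] + \mathbb E[A]\cdot h(t,p_N)$ for an explicit $h$. I would then optimize $t$ over the admissible range. The natural choice is the unique stationary point $t^\star$ with $e^{t^\star}=1+p_N(\lambda-1)/\lambda$ (equivalently $t^\star\approx p_N(1-1/\lambda)$ for small $p_N$), which is precisely the point where an exponential$(1)$ random variable achieves its large-deviation rate. Substituting $t^\star$ back gives an exponent of $-p_N\,\mathbb E[A]\,(\lambda-1-\log\lambda)$, matching the Poissonian rate function that appears in the theorem.

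The main obstacle is the middle step: establishing the right per-summand MGF bound whose sum collapses cleanly to $\mathbb E[A]$ and whose residual depends only on $p_N$. A naive term-by-term Chernoff produces something that still involves each $p_i$, and an a priori stochastic domination of $A_i$ by $\mathrm{Geo}(p_N)$ replaces $\mathbb E[A]$ by $N/p_N$, which is too lossy. The delicate point is to extract exactly the mean $1/p_i$ of the $i$th summand and to show that all of the ``excess'' curvature is governed by $p_N$; once that inequality is in hand, the remainder of the argument is the standard convex-duality optimization that produces the $\lambda-1-\log\lambda$ Cramér rate.
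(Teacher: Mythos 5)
The paper does not actually prove this statement---it imports it from Janson's note on tail bounds for sums of geometric variables (the cited \cite{janson2018tail})---so the only meaningful comparison is with Janson's own argument, and your outline is essentially that argument: Chernoff with the geometric MGF, a per-summand bound that peels off the mean $1/p_i$ and controls the residual by $p_N$ alone, and optimization of $t$ at (to first order in $p_N$) exactly the point you identify, yielding the rate $\lambda-1-\ln\lambda$. So the strategy is sound and is the standard one.

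However, the one concrete inequality you propose for the step you yourself flag as the main obstacle is false as written: with $u=(e^t-1)/p_N$, the right-hand side $\frac{1}{p_i}\bigl(e^t-1-p_N\log\frac{1}{1-u}\bigr)$ is negative (it is $\approx -\,(e^t-1)^2/(2p_ip_N)$ for small $t$), whereas the left-hand side is the log-MGF of a positive variable and is $\approx t/p_i>0$; the inequality already fails for $N=1$ with $p_1=p_N$. The fix is to substitute $u=1-e^{-t}$ instead, so that
$$
\log\mathbb E\bigl[e^{tA_i}\bigr]=\log\frac{p_i}{p_i-(1-e^{-t})}=-\log\bigl(1-u/p_i\bigr)=\sum_{k\ge1}\frac{(u/p_i)^k}{k},
$$
and then bound $p_i^{-k}\le p_i^{-1}p_N^{-(k-1)}$ term by term to get
$$
-\log\bigl(1-u/p_i\bigr)\;\le\;\frac{p_N}{p_i}\cdot\bigl(-\log(1-u/p_N)\bigr),
$$
which sums over $i$ to $-p_N\,\mathbb E[A]\log(1-u/p_N)$, exactly the collapse to $\mathbb E[A]$ times a function of $(t,p_N)$ that you were after. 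Taking $u=p_N(1-1/\lambda)$ (equivalent to your $t^\star$) makes the second factor $e^{p_N\mathbb E[A]\ln\lambda}$, and bounding $-t\lambda\mathbb E[A]=\lambda\mathbb E[A]\log(1-u)\le-\lambda\mathbb E[A]u=-p_N\mathbb E[A](\lambda-1)$ finishes the proof. In short: right plan, right optimizer, but the pivotal per-summand inequality must be the power-series comparison above rather than the formula you wrote.
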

Additionally, we will use the following fact.
\begin{fact}\label{fact}
For any positive real number
$\epsilon\in\mathbb R^+$, 
$$1+\epsilon+\sqrt{2\epsilon}-\ln(1+\epsilon+\sqrt{2\epsilon})\ge 1+\epsilon.$$
\end{fact}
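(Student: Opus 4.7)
The plan is to reduce the inequality to a familiar Taylor-series bound on $e^u$. Rearranging the claimed inequality, cancelling the common $1+\epsilon$ from both sides, and exponentiating, the statement is equivalent to
$$
e^{\sqrt{2\epsilon}} \ge 1 + \epsilon + \sqrt{2\epsilon}.
$$
So I would first perform that algebraic rearrangement, being careful that both sides of the original inequality stay in the domain where $\ln$ is monotone (which is automatic since $\epsilon>0$ ensures $1+\epsilon+\sqrt{2\epsilon}>1$).

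Next, I would introduce the substitution $u \triangleq \sqrt{2\epsilon}$, so that $\epsilon = u^2/2$ and $u$ ranges over $(0,\infty)$. Under this substitution, the inequality becomes
$$
e^u \ge 1 + u + \tfrac{u^2}{2}.
$$
This is a standard consequence of Taylor's theorem with remainder: for $u \ge 0$, all terms in the Taylor expansion $e^u = \sum_{n\ge 0} u^n/n!$ are nonnegative, so truncating after the quadratic term yields a lower bound. Equivalently, one can define $f(u)=e^u - 1 - u - u^2/2$, observe that $f(0)=0$, and note $f'(u)=e^u-1-u\ge 0$ for $u\ge 0$ (since $e^u\ge 1+u$ by the same one-step Taylor argument, with $f'(0)=0$ and $f''(u)=e^u-1\ge 0$). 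Either route gives the desired bound.

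Undoing the substitution then recovers the original inequality, completing the proof. The argument is essentially a one-line manipulation together with a standard Taylor bound, so I do not expect any genuine obstacle; the only thing to be careful about is to verify that the substitution $u=\sqrt{2\epsilon}$ is a bijection between $\epsilon\in\mathbb{R}^+$ and $u\in\mathbb{R}^+$, so the quantifier in the final statement is preserved.
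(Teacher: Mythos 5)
Your proof is correct. The paper states this as a \emph{Fact} and supplies no proof of its own, so there is nothing to compare against; your reduction --- cancel $1+\epsilon$, exponentiate to get $e^{\sqrt{2\epsilon}} \ge 1+\epsilon+\sqrt{2\epsilon}$, substitute $u=\sqrt{2\epsilon}$, and invoke the standard bound $e^u \ge 1+u+u^2/2$ for $u\ge 0$ --- is exactly the argument the authors are implicitly relying on, and all the steps (monotonicity of $\exp$, the bijection $\epsilon \mapsto \sqrt{2\epsilon}$ on $\mathbb{R}^+$, the truncated Taylor series with nonnegative remainder) check out.
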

We now prove our result.
\begin{theorem}\label{thm:partialCouponCollector}
Let $\mathbb E[A]= r(H_r-H_{r-N})$ denote the \emph{expected} number of samples required for seeing $N$ distinct coupons.
With probability $1-\delta$, the number of samples required for seeing at least $N$ distinct coupons is at most
$$
{\mathbb E[A]+\frac{r\ln\delta^{-1}}{(r-N)}+\sqrt{\frac{2r\mathbb E[A]\ln\delta^{-1}}{(r-N)}}}.
$$
\end{theorem}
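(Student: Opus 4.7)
The plan is to exploit the standard coupon-collector decomposition: the number of samples needed to advance from $i-1$ to $i$ distinct coupons is a geometric random variable $A_i \sim \mathit{Geo}(p_i)$ with $p_i = (r-i+1)/r$, and these are mutually independent. Hence $A = \sum_{i=1}^N A_i$ with $p_1 \ge p_2 \ge \cdots \ge p_N = (r-N+1)/r$, and linearity of expectation gives $\mathbb E[A] = \sum_{i=1}^N r/(r-i+1) = r(H_r - H_{r-N})$, matching the expression for $\mathbb E[A]$ in the statement and setting up the hypotheses of Theorem~\ref{thm:geoSum}.

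I would then apply Theorem~\ref{thm:geoSum} with a carefully chosen $\lambda > 1$. Writing $\lambda = 1 + \epsilon + \sqrt{2\epsilon}$ for a parameter $\epsilon > 0$ and invoking Fact~\ref{fact} gives $\lambda - 1 - \ln\lambda \ge \epsilon$. Setting $\epsilon = \ln\delta^{-1} / (p_N\, \mathbb E[A])$ then makes the exponent in Theorem~\ref{thm:geoSum} at least $\ln\delta^{-1}$, so that $\Pr[A > \lambda\, \mathbb E[A]] \le \delta$.

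It remains to convert $\lambda\,\mathbb E[A]$ into the closed-form bound asserted by the theorem. Expanding the definition of $\lambda$ and substituting the chosen $\epsilon$ yields
\[
\lambda\,\mathbb E[A] \;=\; \mathbb E[A] \;+\; \frac{\ln\delta^{-1}}{p_N} \;+\; \sqrt{\frac{2\,\mathbb E[A]\,\ln\delta^{-1}}{p_N}},
\]
after which I would use the inequality $1/p_N = r/(r-N+1) \le r/(r-N)$ (valid for $N < r$) to replace both $1/p_N$ factors and obtain exactly the stated tail bound.

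The main obstacle is choosing $\lambda$ and $\epsilon$ so that Janson's bound gives a clean, additive-in-$\ln\delta^{-1}$ correction to $\mathbb E[A]$ while still producing the $\sqrt{\cdot}$ term; Fact~\ref{fact} is tailored precisely to enable this, so once it is in hand the parameter tuning is forced. Beyond that, the only mild subtleties are verifying the monotonicity hypothesis $p_1 \ge \cdots \ge p_N$ of Theorem~\ref{thm:geoSum} (immediate from $p_i = (r-i+1)/r$) and applying the $1/p_N \le r/(r-N)$ substitution in the correct (upper-bound) direction.
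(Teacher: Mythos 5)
Your proposal is correct and follows essentially the same route as the paper's proof: both decompose $A$ into independent geometrics, apply Theorem~\ref{thm:geoSum} with $\lambda = 1+\epsilon+\sqrt{2\epsilon}$ for $\epsilon = \ln\delta^{-1}/(p_N\,\mathbb E[A])$ via Fact~\ref{fact}, and finish with the bound $1/p_N \le r/(r-N)$. Your writeup is if anything slightly more careful than the paper's in spelling out the geometric decomposition and the monotonicity hypothesis of Theorem~\ref{thm:geoSum}.
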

\begin{proof}
We wish to use Theorem~\ref{thm:geoSum}; notice that 
we need $\lambda-\ln\lambda\ge 1+\frac{\ln\delta^{-1}}{p_N\mathbb E[A]}$ which implies
$$
e^{-p_N\mathbb E[A](\lambda-1-\ln\lambda)}\le \delta.
$$
According to Fact~\ref{fact}, for $\epsilon=\frac{\ln\delta^{-1}}{p_N\mathbb E[A]}$, it is enough to set 
$$\lambda = 1+\frac{\ln\delta^{-1}}{p_N\mathbb E[A]}+\sqrt{\frac{2\ln\delta^{-1}}{p_N\mathbb E[A]}}.$$
Plugging in $p_N=(1-(N-1)/r) > (r-N)/r$
we have that the required number of required packets is at most
\begin{multline*}
\lambda\cdot \mathbb E[A] = \parentheses{\mathbb E[A]+\frac{\ln\delta^{-1}}{p_N}+\sqrt{\frac{2\mathbb E[A]\ln\delta^{-1}}{p_N}}}\\
\le \parentheses{\mathbb E[A]+\frac{r\ln\delta^{-1}}{(r-N)}+\sqrt{\frac{2r\mathbb E[A]\ln\delta^{-1}}{(r-N)}}}
.
\end{multline*}
For example, if $r=2N$, we have $\mathbb E[A]\approx 1.39 N$ and the number of packets is 
\begin{multline*}
\parentheses{\mathbb E[A]+{2\ln\delta^{-1}}+\sqrt{{4\mathbb E[A]\ln\delta^{-1}}}}\\\approx
\parentheses{1.39 N+{2\ln\delta^{-1}}+2.35\sqrt{N\ln\delta^{-1}}}
.\qedhere
\end{multline*}
\end{proof}
Next, we show a bound on the number of samples needed to collect $\mathcal K(1-\psi)$ in a Coupon Collector process~\cite{flajolet1992birthday} on $\mathcal K$~coupons.
\begin{lemma}\label{lem:partialCoverage}
Let $\mathcal K\in\mathbb N^+$ and $\psi\in(0,1/2]$.
The number of samples required for collecting all but $\psi\mathcal K$ coupons is at most
\begin{multline*}
\mathcal K\ln\psi^{-1}+\psi^{-1}\ln\delta^{-1} + \sqrt{2\mathcal K\psi^{-1}\ln\psi^{-1}\ln\delta^{-1}} \\
= O(\mathcal K\ln\psi^{-1}+\psi^{-1}\ln\delta^{-1}).
\end{multline*}
\end{lemma}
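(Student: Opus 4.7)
The plan is to reduce the claim directly to Theorem~\ref{thm:partialCouponCollector} by setting $r=\mathcal K$ and $N=\mathcal K(1-\psi)$, so that ``collecting all but $\psi\mathcal K$ coupons'' becomes ``seeing $N$ distinct coupons out of $r$'' and $r-N=\psi\mathcal K$. All three summands in the target bound then appear naturally once we simplify the formula from that theorem and bound $\mathbb{E}[A]$ via harmonic numbers.

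First, I would estimate $\mathbb{E}[A]=r(H_r-H_{r-N})=\mathcal K(H_{\mathcal K}-H_{\psi\mathcal K})$. Using the standard integral bound $H_n-H_m=\sum_{i=m+1}^n 1/i\le\int_m^n dx/x=\ln(n/m)$, we obtain
$$\mathbb{E}[A]\le \mathcal K\ln\psi^{-1}.$$
This is the step that turns the harmonic gap into the leading $\mathcal K\ln\psi^{-1}$ term; I expect no subtleties here since $\psi\in(0,1/2]$ keeps $\psi\mathcal K\ge 1$ essentially, and when $\psi\mathcal K$ is not an integer one uses $\lfloor\psi\mathcal K\rfloor$ or $\lceil\psi\mathcal K\rceil$ with the same asymptotic bound.

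Second, I would plug $r=\mathcal K$, $r-N=\psi\mathcal K$, and the bound on $\mathbb{E}[A]$ into the conclusion of Theorem~\ref{thm:partialCouponCollector}:
$$\mathbb{E}[A]+\frac{r\ln\delta^{-1}}{r-N}+\sqrt{\frac{2r\,\mathbb{E}[A]\ln\delta^{-1}}{r-N}}
\le \mathcal K\ln\psi^{-1}+\psi^{-1}\ln\delta^{-1}+\sqrt{2\mathcal K\psi^{-1}\ln\psi^{-1}\ln\delta^{-1}},$$
which is exactly the first bound in the statement.

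Finally, to get the $O(\cdot)$ form, I would apply AM--GM to the square-root term. Writing $a=\mathcal K\ln\psi^{-1}$ and $b=\psi^{-1}\ln\delta^{-1}$, the identity $(a+b)^2\ge 2ab$ gives $\sqrt{2ab}\le a+b$, so the square-root contributes at most the sum of the other two terms and the entire expression is at most $2(\mathcal K\ln\psi^{-1}+\psi^{-1}\ln\delta^{-1})$. The only mild obstacle is verifying that monotonicity of the bound in $N$ (and hence in $\psi\mathcal K$) lets us handle non-integer $\psi\mathcal K$ cleanly, but this is routine; all the real work has already been done in Theorem~\ref{thm:partialCouponCollector}.
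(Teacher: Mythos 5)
Your proposal is correct and follows essentially the same route as the paper: instantiate Theorem~\ref{thm:partialCouponCollector} with $r=\mathcal K$ and $N=\mathcal K(1-\psi)$, bound $\mathbb E[A]=\mathcal K(H_{\mathcal K}-H_{\psi\mathcal K})\le\mathcal K\ln\psi^{-1}$, and absorb the cross term via AM--GM (the paper gets the slightly sharper constant $1+1/\sqrt 2$ rather than your $2$, but both give the stated $O(\cdot)$ bound).
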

\begin{proof}
For $i=1,\ldots,\mathcal K(1-\psi)$, let $A_i\sim\mathit{Geo}(1-(i-1)/\mathcal K)$ denote the number of samples we need for getting the $i$'th distinct coupon, and let $A=\sum_{i=1}^{K(1-\psi)}A_i$.
We have that 
$$
\mathbb E[A] = \sum_{i=1}^{\mathcal K(1-\psi)}\frac{\mathcal K}{\mathcal K-(i-1)} = \mathcal K\parentheses{H_{\mathcal K}-H_{\mathcal K\psi}} = \mathcal K\ln\psi^{-1}.
$$

According to Theorem~\ref{thm:partialCouponCollector}, it is enough to obtain the following number of samples
\begin{multline*}
{\mathbb E[A]+\frac{\mathcal K\ln\delta^{-1}}{\mathcal K(1-(1-\psi))}+\sqrt{\frac{2\mathcal K\mathbb E[A]\ln\delta^{-1}}{\mathcal K(1-(1-\psi))}}}\\
= \mathcal K\ln\psi^{-1}+\psi^{-1}\ln\delta^{-1} + \sqrt{2\mathcal K\psi^{-1}\ln\psi^{-1}\ln\delta^{-1}}.
\end{multline*}
Finally, we note that $\sqrt{\mathcal K\psi^{-1}\ln\psi^{-1}\ln\delta^{-1}}$ is the geometric mean of $\mathcal K\ln\psi^{-1}$ and $\psi^{-1}\ln\delta^{-1}$ and thus:
\begin{multline*}
\mathcal K\ln\psi^{-1}+\psi^{-1}\ln\delta^{-1} + \sqrt{2\mathcal K\psi^{-1}\ln\psi^{-1}\ln\delta^{-1}}\\\le
\parentheses{\mathcal K\ln\psi^{-1}+\psi^{-1}\ln\delta^{-1}}(1+1/\sqrt 2)\\=
O\parentheses{\mathcal K\ln\psi^{-1}+\psi^{-1}\ln\delta^{-1}}.
\end{multline*}
\end{proof}
\subsubsection{Analysis of the algorithm}\label{app:actualStaticApp}
We denote by $\mathfrak d\triangleq \frac{d}{\log^* d}$ the number of hops we aim to decode using the XOR layers.
Our algorithm has $\ceil{\log^* \mathfrak d} + 1$ \emph{layers}, where layer $0$ runs the Baseline scheme and the remaining $\mathcal L\triangleq \ceil{\log^* \mathfrak d}$ layers use XOR. 
We denote by $\duparrow$ Knuth's iterated exponentiation arrow notation, \mbox{i.e., $x\duparrow 0 = 1$ and}
$$ \left.\kern-\nulldelimiterspace
  \begin{array}{@{}c@{}}
    x\duparrow y = x^{x^{\scriptstyle x^{\cdot^{\cdot^{\cdot^{\scriptstyle x}}}}}}
  \end{array}
  \right\rbrace
  \text{\scriptsize $y$-times}.
  $$
The sampling probability in layer $\ell$ is then set to 
$$p_\ell =  \frac{{e\duparrow (\ell-1)}}{\mathfrak{d}}.$$ 
Each packet it hashed to choose a layer, such that layer $0$ is chosen with probability $\tau=\parentheses{1-\frac{1}{1+\log \log^* \mathfrak d}}=1-o(1)$ and otherwise one of layers $1,\ldots,\mathcal L$ is chosen uniformly.
The pseudo code for the final solution is given in Algorithm~\ref{alg:final}.
\begin{algorithm}[H]
\algSize
\caption{\sys Processing Procedure at Switch $s$}
\label{alg:final}
\begin{algorithmic}[1]
    \Statex\textbf{Input:} A packet $p_j$ with $\mathfrak b$-bits digest $p_j.\mbox{dig}$.
    \Statex\textbf{Output:} Updated digest $p_j.\mbox{dig}$.
    \Statex\textbf{Initialization:}
    \Statex{$\tau=\frac{\log \log^* \mathfrak d}{1+\log \log^*\mathfrak d}$, $\forall \ell\in\set{1,\ldots,\mathcal L}: p_\ell = \frac{e\duparrow (\ell-1)}{\mathfrak{d}}$.}
    \Statex\hrulefill
    \Statex Let $i$ such that the current switch is the $i'th$ so far
    \State $\mathfrak H\gets \mathcal H(p_j)$ \Comment{Distributed uniformly on $[0,1]$}
    \If{$\mathfrak H < \tau$}\Comment{Update layer $0$}
        \If{$g(p_j,i)$ < $1/i$}
                \State $p_j.\mbox{dig} \gets h(s,p_j)$\label{line:sampleOutgoing25}\Comment{Sample with probability $1/i$}
        \EndIf
    \Else
        \State $\ell \gets \ceil{\mathcal L\cdot\frac{\mathfrak H-\tau}{1-\tau}}$\Comment{Choose the layer}
        \If{$g(p_j,i)$ < $p_\ell$}
                \State $p_j.\mbox{dig} \gets p_j.\mbox{dig}\oplus h(s,p_j)$\label{line:sampleOutgoing2}\Comment{Xor w.p. $p_\ell$}
        \EndIf
    \EndIf
\end{algorithmic}
\end{algorithm}
For simplicity, we hereafter assume in the analysis that a packet can encode an entire identifier. This assumption is not required in practice and only serves for the purpose of the analysis. We note that even under this assumption the existing approaches require $O(k\log k)$ packets.
In contrast, we show that except with probability $\delta= e^{-O(k^{0.99})}$ the number of packets required for decoding a $k$-hops path in our algorithm is just 
$$\mathcal X = k\log\log^* k\cdot (1+o(1)).\footnotemark$$
\footnotetext{The $o(1)$ part hides an additive $O(k)$ term, which we upper bound as $\frac{k}{c\cdot e^{-c}}$ up to lower order terms. Specifically, if $d = k$ (thus, $c=1$), the required number of packets is at most $k\log\log^* k + e\cdot k + o(k)$.}
Note that $\log\log^* k$ is a function that grows extremely slowly, e.g., $\log\log^* P<2$ where $P$ is the number of atoms in the universe. 
Our assumption on the error probability $\delta$ allows us to simplify the expressions and analysis but we can also show an $$O\parentheses{k\log\log^* k + \log^* k\log\delta^{-1}}$$ bound on the required number of packets thus the dependency on $\delta$ is minor.

For our proof, we define the quantities
\begin{multline*}
\mathcal Q\triangleq k_1 + \ln\parentheses{\frac{4\log^*k_1}{\delta}} + \sqrt{2k_1\ln\parentheses{\frac{4\log^*k_1}{\delta}}} \\
= O\parentheses{\frac{k}{\log^* k} + \log\delta^{-1}}
= O\parentheses{\frac{k}{\log^* k}}
\end{multline*}
and
\begin{multline*}
\mathcal S\triangleq \frac{\mathcal Q+2\ln\parentheses{\frac{4\mathcal L}{\delta}}+ \sqrt{2\mathcal Q\ln\parentheses{\frac{4\mathcal L}{\delta}}}}{c\cdot e^{-c}}  \\
= O\parentheses{\frac{k}{\log^* k} + \log\delta^{-1}}
= O\parentheses{\frac{k}{\log^* k}}. 
\end{multline*}
Note that $\mathcal Q$ and $\mathcal S$ are not known to our algorithm (which is only aware of $d$) and they are used strictly for the analysis.
Our proof follows the next roadmap:
\begin{enumerate}
    \item When a flow has at least 
    $\mathcal X\triangleq k\log\log^* k\cdot (1+o(1))$ 
    packets, Baseline (layer $0$) gets at least $\mathcal X\cdot (1-o(1)) = k\log\log^* k\cdot (1+o(1))$ digests and XOR (layers $1$ and above) gets at least $\Omega(\mathcal X / \log\log^* k) = \Omega\parentheses{k}$ digests with probability $1-\delta/6$.\label{part1}
    \item When Baseline (layer $0$) gets at least $\mathcal X\cdot (1-o(1))$ digests, it decodes all hops but $k_1\triangleq \frac{k}{\log^* k}$ with probability $1-\delta/6$.\label{part2}
    \item When at least $\Omega\parentheses{k}$ packets reach XOR (layers $1$ and above), with probability $1-\delta/6$ each layer gets at least $\mathcal S$ digests.\label{part3}
    \item When a layer $\ell\in\set{1,\ldots,\mathcal L}$ gets $\mathcal S$ digests, with probability $1-\delta/6\mathcal L$, at least $\mathcal Q$ of the digests contain exactly one of the $k_\ell$ undecoded switches.\label{part4}
    \item When a layer $\ell\in\set{1,\ldots,\mathcal L-1}$ gets $\mathcal Q$ of digests that contain exactly one of the $k_\ell\triangleq k_1 / (e\duparrow\parentheses{\ell-1})$ undecoded switches, it decodes all hops but at most $k_{\ell+1}$ with probability $1-\delta/6\mathcal L$.\label{part5}
    \item When the last layer $\mathcal L$ gets $\mathcal Q$ of digests that contain exactly one of the $k_\ell$ undecoded switches, it decoded all the remaining hops with probability $1-\delta/6\mathcal L$.\label{part6}
\end{enumerate}

We then use the union bound over all bad events to conclude that the algorithm succeeds with probability at least $1-\delta$.

\subsubsection{Proof of Part~(\ref{part1})}
The first step is to observe that by a straightforward application of the Chernoff bound, since layer $0$ is chosen with probability $1/2$, the number of packets that it receives is with probability $1-\delta/6$:
$$
\mathcal X_0 = \tau\cdot \mathcal X \pm O\parentheses{\sqrt{\tau\cdot \mathcal X\cdot \log\delta^{-1}}}.
$$
Since $\mathcal X = \omega(\log \delta^{-1})$, we have that 
$$\mathcal X_0 \ge \mathcal X(\tau-o(1))= k\log\log^* k \cdot (1+o(1)).$$
\subsubsection{Proof of Part~(\ref{part2})}
Applying Lemma~\ref{lem:partialCoverage} for $\psi=\frac{1}{\log^* k}$, we get that after
\begin{multline*}
k\ln\log^* k+\log^* k\ln\delta^{-1} + \sqrt{2k\log^* k\ln\log^* k\ln\delta^{-1}}\\
= k\log\log^* k \cdot (1+o(1))
\end{multline*}
packets from layer $0$ the number of hops that are not decoded is at most  $k_1\triangleq k\cdot \psi = \frac{k}{\log^* k}$ with probability $1-\delta/6$. That is, we use $k_1$ to denote the number of undecoded hops that are left for layers $1$ and above.
\subsubsection{Proof of Part~(\ref{part3})}
When at least $\Omega(k)$ reach XOR, the number of digests that the levels get is a balls and bins processes with the levels being the bins.
According to Theorem~\ref{thm:NVal}:

After seeing 
\begin{multline*}
\mathcal L\cdot \parentheses{\mathcal S-1 + \ln(6\mathcal L/\delta) + \sqrt{(\mathcal S-1 + \ln(6\mathcal L/\delta))^2-(\mathcal S-1)^2/4}} \\=
O\parentheses{\mathcal L\cdot \parentheses{\mathcal S + \log(\mathcal L/\delta)}}\\=
O\parentheses{\log^* k\cdot \parentheses{\frac{k}{\log^* k} + \log\delta^{-1} + \log(\delta^{-1}\log^* k)}} 
=O\parentheses{k} 
\end{multline*}
packets, with probability $1-\delta/6$ our algorithm has at least $\mathcal Q$ samples in each layer.

\subsubsection{Proof of Part~(\ref{part4})}
Follows from Lemma~\ref{lem:negativeBinomialBound} for $p=c\cdot e^{-c}$, $k=\mathcal Q$ and $\delta'=\frac{\delta}{6\mathcal L}$.
\subsubsection{Proof of Part~(\ref{part5})}
Follows from Lemma~\ref{lem:partialCoverage} with $\mathcal K = k_\ell$ and $\psi=\frac{k_{\ell+1}}{k_\ell}$.
\subsubsection{Proof of Part~(\ref{part6})}
The last layer is samples needs to decode 
$$k_{\mathcal L}\le\frac{k_1}{e\duparrow\parentheses{\mathcal L-1}}=\frac{k_1}{\log \mathfrak d}=O\parentheses{\frac{k_1}{\log k_1}}$$
and samples with probability $$p_{\mathcal L} = \frac{e\duparrow (\mathcal L-1)}{\mathfrak{d}} = \frac{\log \mathfrak d}{\mathfrak d} = \Theta\parentheses{\frac{\log k_1}{k_1}}.$$
Therefore, with a constant probability, a digest would be xor-ed by exactly one of the $k_{\mathcal L}$ undecoded hops, and the number of such packets needed to decode the remainder of the path is $O\parentheses{k_{\mathcal L}\log k_{\mathcal L}} = O(k_1).$

\subsection{Revised Algorithm to Improve the Lower Order Term's Constant}
Consider changing the algorithm to sample layer $0$ with probability
$$
\tau' \triangleq \frac{1+\log\log^* d}{2+\log\log^* d} = 1 - \frac{1}{2+\log\log^* d}.
$$

Then when getting $\mathcal X' = k \cdot \parentheses{\log\log^* k + 1 + \frac{1}{ce^{1-c}} + o(1)}$, we will have 
$$
k \cdot (\log\log^* k + 1 + o(1))
$$
packets that reach layer $0$, which would leave only 
$$
k_1' \triangleq
\frac{k}{e\cdot \log^* k}
$$
undecoded hops to layers $1$ and above.
As above, the number of packets required for the upper layers to decode the missing hops is 
$$
\frac{k_1'\log^* k_1'}{ce^{-c}} \le \frac{k}{ce^{1-c}}.
$$
Since $ce^{-c}\le 1/e$ for any $c>0$, we get that this is a strict improvement in the number of packets that are required for the path decoding. 
For example, if $d = k$ (i.e., $c=1$), we reduce the required number of packets from $k(\log\log^* k + e + o(1))$ to $k(\log\log^* k + 2 + o(1))$.

\subsection{An Extension -- Detecting Routing Loops}
Real-time detection of routing loops is challenging, as switches need to recognize looping packets without storing them.
Interestingly, we can leverage \sys to detect loops on the fly. To do so, we check whether the current switch's hash matches the one on the packet. Specifically, before choosing whether to sample or not, the switch checks whether $p_j.\mbox{dig}=h(s,p_j)$. 
If there is a loop and $s$ was the last switch to write the digest, it will be detected. Unfortunately, such an approach may result in a significant number of false positives. For example, if we use $\mathfrak b=16$-bit hashes, the chance of reporting a false loop over a path of length $32$ would be roughly $0.05\%$, which means several false positives per second on a reasonable network.

To mitigate false positives, we propose requiring multiple matches, corresponding to multiple passes through the loop. We use an additional counter $c$ to track the number of times a switch hash matched the digest. When $c=0$, the switches follow the same sampling protocol as before. However, if $c>0$ then the digest is no longer changed, and if $c$ exceeds a value of $T$ then we report a loop. This changes the loop detection time, but the switch that first incremented $c$ may report the loop after at most $T$ cycles over the loop. 
This approach adds $\ceil{\log_2 {(T+1)}}$ bits of overhead, but drastically reduces the probability of false positives. For example, if $T=1$ and $\mathfrak b=15$, we still have an overhead of sixteen bits per packet, but the chance of reporting false loops decreases to less than $5\cdot 10^{-7}$. If we use $T=3$ and $\mathfrak b=14$, the false reporting rate further decreases to $5\cdot 10^{-13}$, which allows the system to operate \mbox{without false alarms in practice.}

\begin{algorithm}[!htb]
\algSize
\caption{\sys Processing at $s$ with Loop Detection}
\label{alg:loop}
\begin{algorithmic}[1]
    \Statex\textbf{Input:} A packet $p_j$ with $\mathfrak b$-bits digest $p_j.\mbox{dig}$ and a counter $p_j.c$.
    \Statex\textbf{Output:} Updated digest $p_j.\mbox{dig}$ or LOOP message.
    \If{$p_j.\mbox{dig} = h(s,p_j)$ }
        \If{$p_j.c = T$ }{ \Return LOOP}
        \EndIf    
        \State $p_j.c\gets p_j.c+1$
    \EndIf
    \Statex Let $i$ such that the current switch is the $i'th$ so far
    \If{$p_j.c=0$ and $g(p_j,i)$ < $1/i$}
            \State $p_j.\mbox{dig} \gets h(s,p_j)$\label{line:sampleOutgoing3}\Comment{Sample with probability $1/i$}
    \EndIf
\end{algorithmic}
\end{algorithm}

\section{Compting HPCC's Utilization}\label{app:hpcc}
We first calculate the logarithm:

$$U\_term=\log(\frac{T-\tau}{T}\cdot U)=\log(T-\tau)-\log T+\log U$$

$$\mathit{qlen}\_term=\log(\frac{\mathit{qlen}\cdot\tau}{B\cdot T^2})=\log \mathit{qlen}+\log\tau-\log B-2\log T$$

$$\mathit{byte}\_term=\log(\frac{\mathit{byte}}{B\cdot T})=\log \mathit{byte}-\log B-\log T$$

Then calculate $U$ using exponentiation:

$$U=2^{U\_term} + 2^{\mathit{qlen}\_term} + 2^{\mathit{byte}\_term}$$

\changed{\section{Arithmetic Operations in the Data Plane}\label{app:arithmeticImplementation}}
\changed{Some of our algorithms require operations like multiplication and division that may not be natively supported on the data plane of current programmable switches.
Nonetheless, we now discuss how to approximate these operations through fixed-point representations, logarithms, and exponentiation. We note that similar techniques have appeared, for example, in~\cite{sharma17},~\cite{Cheetah} and~\cite{ding2020estimating}.}

\changed{\parab{Fixed-point representation:}}
\changed{Modern switches may not directly support representation of fractional values. Instead, when requiring a \emph{real-valued} variable in the range $[0,R]$, we can use $m$ bits to represent it so that the integer \mbox{representation} $r\in\set{0,1,\ldots,2^{m}-1}$ stands for $R\cdot r \cdot 2^{-m}$.  $R$ is called \emph{scaling factor} and is often a power of two for simplicity.
For example, if our range is $[0,2]$, and we use $m=16$ bits, then the encoding value $39131$ represents $2\cdot 39131\cdot 2^{-16}\approx1.19$.}


\changed{Conveniently, this representation immediately allows using integer operations (e.g., addition or multiplication) to manipulate the variables. For example, if $x$ and $y$ are variables with scale factor $R$ that are represented using $r(x),r(y)$, then their sum is represented using $r(x)+r(y)$ (assuming no overflow, this keeps the scaling factor intact) and their product is $r(x)\cdot r(y)$ with a scaling factor of $R^2$. 
As a result, we hereafter consider operating on integer values.}

\changed{\parab{Computing logarithms and exponentiating:}}
\changed{
Consider needing to approximate $\log_2 (x)$ for some integer $x$ (and storing the result using a fixed-point representation).
If the domain of $x$ is small (e.g., it is an $8$-bit value), we can immediately get the value using a lookup table.
Conversely, say that $x$ is an $m$-bit value for a large $m$ (e.g., $m=64$).
In this case, we can use the switch's TCAM to find the most significant set bit in $x$, denoted $\ell$. That is, we have that $x=2^\ell\cdot \alpha$ for some $\alpha\in[1,2)$. Next, consider the next $q$ bits of $x$, denoted by $x_q$, where $q$ is such that it is feasible to store a $2^q$-sized lookup table on the switch (e.g., $q=8$).
\footnote{If $q<\ell$ we can simply look up the exact value as before. } 
Then we have that $x= x_q\cdot 2^{\ell-q} (1+\epsilon)$ for a small relative error $\epsilon<  2^{-q}$.
Therefore, we write 
$$
\log_2(x) = \log_2(x_q\cdot 2^{\ell-q} (1+\epsilon))
= (\ell-q) + \log_2 (x_q) + \log_2(1+\epsilon).
$$
Applying the lookup table to $x_q$, we can compute $\widetilde y\triangleq (\ell-q) + \log_2 (x_q)$ on the data plane and get that $\widetilde y\in [\log_2 x -  \log_2(1+\epsilon), \log_2 x]$.\footnote{In addition to the potential error that arises from the lookup table.} 
We can further simplify the error expression as $\log_2(1+\epsilon)\le \epsilon/\ln 2\approx 1.44\cdot 2^{-q}$.
We also note that computing logarithms with other bases can be done similarly as $\log_y x = \log_2 x / \log_2 y$.
}

\changed{For exponentiation, we can use a similar trick. Assume that we wish to compute $2^x$ for some real-valued $x$ that has a fixed-point representation $r$. 
Consider using a lookup table of $2^q$ entries for a suitable value of $q$, and using the TCAM to find the most significant set bit in $r$.
Then we can compute $2^x$ up to a multiplicative factor of $2^{x\epsilon}$ for some $\epsilon\le 2^{-q}$. Assuming that $x$ is bounded by $R\le 2^{q}$, this further simplifies to $2^{x\epsilon}\le 2^{x2^{-q}} \le 1+R\cdot 2^{-q}$.
For example, if $x$ is in the range $[0,2]$ and we are using $q=8$ then logarithms are computed to within a $(1+2^{-7})$-multiplicative factor \mbox{(less than 1\% error).}
}

\changed{\parab{Multiplying and dividing:}}
\changed{We overcome the lack of support for arithmetic operations such as multiplication and division using approximations, via logarithms and exponentiation.
Intuitively, we have that $x\cdot y = 2^{\log_2 x + \log_2 y}$ and $x / y = 2^{\log_2 x - \log_2 y}$. We have already discussed how to approximate logarithms and exponentiation, while addition and subtraction are currently supported. 
We note that the errors of the different approximations compound and thus it is crucial to maintain sufficient accuracy at each step to produce a meaningful approximation for the multiplication \mbox{and division operations.}}

\changed{
An alternative approach is to directly use a lookup table that takes the $q/2$ most significant bits, starting with the first set bit, of $x$ and $y$ and return their product/quotient (as before, this would require a $2^q$-sized table). However, going through logarithms may give a more accurate result as the same lookup table can be used for both $x$ and $y$, and its keys are a single value, which allows considering $q$ bits for the same memory usage.}

    \fi
\end{document}